\newtheorem{proposition}{Proposition}[section]
\newtheorem{assumption}{Assumption}[section]
\newtheorem{remark}{Remark}[section]
\title{Smart Contract-Enabled Procurement under Bounded Demand Variability: A Truncated Normal Approach}
\def\@maketitle{%
  \newpage
  \null
  \vskip 1em%
  \begin{center}%
    {\LARGE\@title \par}%
    \vskip 1.5em%
    {\normalsize  
     Jinho Cha$^{1}$\thanks{\footnotesize Corresponding author: Jinho Cha (\texttt{jcha@gwinnetttech.edu})},  
     Youngchul Kim$^{2}$,  
     Junyeol Ryu$^{3}$,  
     Sangjun Park$^{4}$,  
     Jeongho Kang$^{5}$,  
     Hyeyoung Hwang$^{6}$ \par
    }%
    \vskip 1em%
    {\footnotesize
     $^{1}$Computer Science Division, Gwinnett Technical College, Lawrenceville, GA 30043, USA \\[2pt]
     $^{2}$Ph.D. Candidate, Department of Industrial Engineering, Kumoh National Institute of Technology, Gumi 39177, Republic of Korea \\[2pt]
     $^{3}$Ph.D. Candidate, Department of Industrial Engineering, Seoul National University, Seoul 08826, Republic of Korea \\[2pt]
     $^{4}$Ph.D., Jangwee Defense Institute, Ajou University, Suwon 16499, Republic of Korea \\[2pt]
     $^{5}$Ph.D., Ministry of National Defense, Republic of Korea \\[2pt]
     $^{6}$Republic of Korea Army, Republic of Korea \\[6pt]
    }%
    \vskip 1.2em%
    {\footnotesize \textbf{Manuscript prepared for submission to:} \textit{Expert Systems with Applications (ESWA)} \par}%
    \vskip 1.5em%
  \end{center}%
}
\date{}
\begin{document}

\maketitle

\begin{abstract}
This study develops a strategic procurement framework integrating blockchain-based smart contracts with bounded demand variability modeled through a truncated normal distribution. While existing research emphasizes the technical feasibility of smart contracts, the operational and economic implications of adoption under moderate uncertainty remain underexplored. We propose a multi-supplier model in which a centralized retailer jointly determines the optimal smart contract adoption intensity and supplier allocation decisions. The formulation endogenizes adoption costs, supplier digital readiness, and inventory penalties to capture realistic trade-offs among efficiency, sustainability, and profitability. Analytical results establish concavity and provide closed-form comparative statics for adoption thresholds and procurement quantities. Extensive numerical experiments demonstrate that moderate demand variability supports partial adoption strategies, whereas excessive investment in digital infrastructure can reduce overall profitability. Dynamic simulations further reveal how adaptive learning and declining implementation costs progressively enhance adoption intensity and supply chain performance. The findings provide theoretical and managerial insights for balancing digital transformation, resilience, and sustainability objectives in smart contract-enabled procurement.
\end{abstract}

\textbf{Keywords:} Smart contracts; Blockchain; Procurement optimization; Truncated normal distribution; Bounded demand variability; Supply chain resilience; Digital transformation; Sustainability; Adaptive learning; Simulation analysis

\section{Introduction}
The resilience and sustainability of supply chains have emerged as critical priorities for firms seeking to navigate rising uncertainty, global disruptions, and stakeholder expectations \cite{Ivanov2020,Saberi2019}. Digital transformation—particularly through blockchain-based smart contracts—has been widely promoted as a pathway to enhance transparency, reduce transaction costs, and improve operational agility \cite{Kouhizadeh2018,Catalini2016,Queiroz2021,Treiblmaier2019}. In addition to potential efficiency gains, smart contracts can support broader sustainability objectives by reducing resource waste and improving traceability \cite{Zhu2020,Queiroz2019,Min2019}.

While the technical feasibility of smart contracts has been extensively discussed \cite{Mendling2018}, the strategic and operational implications of their adoption remain underexplored. Many early studies have focused on theoretical frameworks and proof-of-concept implementations \cite{Catalini2017,Pournader2020}, often assuming either symmetric demand distributions or homogeneous supplier capabilities \cite{Zhang2023}. For example, Catalini and Gans \cite{Catalini2016} highlighted the economic potential of decentralized contracting platforms, while Saberi et al. \cite{Saberi2019} and Kouhizadeh and Sarkis \cite{Kouhizadeh2018} examined how blockchain technologies could support sustainable procurement practices. Recent reviews have also emphasized the role of blockchain adoption in emerging economies and the critical factors shaping its diffusion \cite{Yadav2020}.

However, these models frequently rely on heavy-tailed or extreme demand scenarios, which may not reflect the bounded variability encountered in many mature supply chains \cite{Ivanov2018,Seuring2020}. In sectors such as consumer electronics and industrial components, demand uncertainty is often moderate and constrained within predictable ranges \cite{Queiroz2020,vanHoek2020}. In these environments, decision-makers require quantitative tools that capture the trade-offs between smart contract adoption costs, supplier readiness, and operational performance under bounded risk. Kouhizadeh et al. \cite{Kouhizadeh2019} further highlight that effective integration of blockchain requires alignment with circular economy principles and product lifecycle considerations.

This paper addresses these gaps by developing a strategic procurement framework that models demand uncertainty using a truncated normal distribution. Unlike Pareto-based approaches, this formulation emphasizes moderate variability and finite risk exposure \cite{Zhang2023}. Within this framework, a centralized retailer determines (i) the intensity of smart contract adoption and (ii) the allocation of orders across suppliers with heterogeneous digital capabilities \cite{Saberi2020}.

Smart contract adoption reduces procurement costs by improving coordination and information accuracy \cite{Queiroz2021}, but also incurs convex investment costs that increase disproportionately with higher adoption levels \cite{Treiblmaier2018}. The model incorporates inventory penalties and salvage values, allowing a realistic assessment of the operational, economic, and sustainability implications of digital transformation in procurement. Foundational inventory heuristics and supply-demand matching frameworks provide a basis for integrating smart contracting decisions with established procurement practices \cite{Silver1973,Cachon2012}.

\vspace{0.2cm}
\noindent \textbf{This study contributes to the literature and practice in four main ways:}
\begin{itemize}
    \item We propose a multi-supplier procurement model that integrates bounded demand variability, supplier heterogeneity, and endogenous smart contract adoption \cite{Ivanov2017}.
    \item We derive structural properties, including concavity conditions and comparative statics, to characterize optimal adoption levels under moderate uncertainty.
    \item We conduct a series of numerical experiments evaluating the impacts of smart contracts on procurement efficiency, service levels, and risk-adjusted profitability.
    \item We discuss managerial and policy implications for aligning digital adoption strategies with resilience and sustainability goals.
\end{itemize}

Our findings highlight that while smart contracts can substantially improve procurement performance in bounded variability environments, excessive adoption without supplier alignment can erode profitability and reduce operational resilience. The results provide a structured foundation for data-driven sourcing strategies that balance efficiency, sustainability, and cost considerations.

\section{Literature Review}

\subsection{Smart Contracts in Supply Chain Resilience}

Smart contracts have been extensively explored as a technological enabler for improving transparency, automation, and compliance across supply chains \cite{Queiroz2021,Treiblmaier2019,Min2018}. Early conceptual frameworks emphasized their capacity to reduce transaction costs and enhance trust in multi-tier networks \cite{Catalini2016,Catalini2017}. Catalini and Gans \cite{Catalini2016} highlighted the role of decentralized verification in reducing information asymmetries, while Mougayar \cite{Mougayar2016} and Saberi et al.\ \cite{Saberi2019,Saberi2018} discussed the broader strategic potential of blockchain applications in procurement, governance, and logistics. Several reviews have also underscored the interplay between smart contracts and established governance mechanisms \cite{Seuring2020,Treiblmaier2020}.

Building on these foundations, recent research has increasingly focused on operational and strategic implications rather than mere technical feasibility. For instance, Kouhizadeh and Sarkis \cite{Kouhizadeh2018} and Zhu and Sarkis \cite{Zhu2020} demonstrated that blockchain-based smart contracts can support sustainability by automating compliance verification and improving traceability. Treiblmaier \cite{Treiblmaier2018} argued that integration with IoT and digital twins can further enhance resilience by reducing lead time variability and facilitating more agile disruption responses. Additional studies have emphasized that successful deployment depends on institutional support, clear standards, and robust stakeholder engagement \cite{Yadav2020,Queiroz2019}.

Empirical evidence also reinforces the resilience benefits of smart contracts. Min \cite{Min2019} and Min \cite{Min2018} presented case study insights showing that digital contracting can shorten recovery cycles during demand spikes and supply interruptions. Queiroz and Wamba \cite{Queiroz2020,Queiroz2018} reported that firms adopting blockchain solutions observed improved coordination and reduced manual reconciliation errors. Kouhizadeh et al.\ \cite{Kouhizadeh2019,Kouhizadeh2021} further noted the alignment of smart contracts with circular economy initiatives and the development of maturity models for sustainable implementation. Gurtu and Johny \cite{Gurtu2019} highlighted practical challenges related to integrating contract logic with legacy ERP systems, while Hald and Kinra \cite{Hald2020} cautioned that overestimating cost savings remains a risk in settings with low digital maturity.

Recent systematic reviews have illuminated boundary conditions shaping the value of smart contracts. Zhang and Zhang \cite{Zhang2022,Zhang2023} showed that supplier heterogeneity, especially differences in digital readiness and data-sharing practices, substantially influences the benefits of adoption. Li and Wang \cite{Li2021} extended this line of inquiry by modeling adoption intensity as an endogenous decision variable contingent upon transaction complexity and behavioral factors. Ivanov \cite{Ivanov2017,Ivanov2019} and Ivanov and Dolgui \cite{Ivanov2020} emphasized the importance of simulation-based approaches to assess ripple effects and dynamic adaptation over time, while van Hoek \cite{vanHoek2020} discussed post-pandemic considerations for blockchain-enabled resilience strategies.

While much of the literature has emphasized environments characterized by highly volatile or extreme demand uncertainty \cite{Silver1973}, fewer studies have considered contexts with bounded demand variability. Cachon and Terwiesch \cite{Cachon2012} outlined foundational inventory and sourcing models applicable to moderate uncertainty, providing a useful basis for understanding smart contract adoption decisions. Ivanov \cite{Ivanov2018} and Queiroz et al.\ \cite{Queiroz2018} further noted that resilience strategies should be tailored to sector-specific risk profiles and digital maturity levels. This study contributes to filling this gap by analyzing smart contract adoption under conditions where demand uncertainty is moderate but still material for strategic planning.

\subsection{Demand Variability and Truncated Distributions}

Capturing the statistical properties of demand is a foundational requirement for designing effective procurement and inventory policies. Classical models frequently rely on the assumption of normally distributed demand because of its mathematical convenience and historical prevalence in operations management research \cite{Nahmias1982}. However, empirical studies have consistently shown that real-world demand often exhibits bounded variability, where demand fluctuates within finite intervals due to capacity constraints, contractual agreements, or seasonal patterns \cite{Syntetos2005,Boylan2008}.

Silver and Meal \cite{Silver1973} were among the first to highlight that ignoring truncation effects in demand modeling can lead to systematic overstocking or understocking, particularly in industries where maximum order volumes are capped. Disney and Lambrecht \cite{Disney2016} further demonstrated that bounded variability can amplify or dampen the bullwhip effect, depending on replenishment policies and forecast updating frequencies. Chatfield et al. \cite{Chatfield2013} emphasized that bounded distributions often coexist with intermittent demand patterns, requiring hybrid modeling approaches. Hosoda and Disney \cite{Hosoda2008} also showed that the interaction of information sharing and bounded demand significantly affects supply chain oscillations, underlining the importance of integrating forecast accuracy considerations into modeling efforts.

More recent contributions have extended this discussion by evaluating the operational consequences of bounded variability in supply chain design. Boute and Van Mieghem \cite{Boute2021} showed that dual sourcing strategies are particularly sensitive to demand truncation parameters, as these affect both the variability and predictability of order streams. Kull and Talluri \cite{Kull2014} proposed an analytical framework linking bounded variability to supply risk measures, underscoring the importance of accurate tail modeling for resilience planning. Christopher and Holweg \cite{Christopher2011} emphasized that in an era of supply chain turbulence, bounded variability requires differentiated approaches to agility and inventory responsiveness compared to highly volatile environments. Fildes and Goodwin \cite{Fildes2008} argued that the integration of truncated distribution forecasts with decision-support systems can enhance judgmental adjustments and reduce cognitive biases in planning processes.

Syntetos and Boylan \cite{Syntetos2016} further noted that incorporating truncation is not only statistically appropriate but also essential for aligning inventory policies with sustainability objectives, as bounded variability helps reduce resource waste associated with excessive safety stocks. Kim and Zhao \cite{Kim2024} recently demonstrated that machine learning-based forecasting models can effectively capture truncated demand patterns in consumer electronics supply chains, leading to improved procurement responsiveness. Martínez et al. \cite{Martinez2025} extended this perspective by proposing a modeling framework that combines truncated normal distributions with stochastic optimization, highlighting the relevance of bounded variability in post-pandemic environments characterized by persistent moderate uncertainty.

While heavy-tailed demand distributions such as Pareto have been widely applied in luxury goods, aerospace, and defense contexts \cite{Nahmias1982}, their suitability diminishes in mature consumer markets, where volatility is constrained by structural factors \cite{Kouhizadeh2021}. In these environments, the truncated normal distribution offers a more realistic representation of demand risk by explicitly modeling upper and lower bounds. Despite these advances, the integration of truncated distributions with digital contracting decisions remains underexplored. Most existing procurement models treat demand uncertainty and smart contract adoption as separate issues, failing to account for their interaction under bounded risk conditions. This research contributes to closing this gap by developing a framework that explicitly incorporates truncated normal demand and endogenizes smart contract adoption decisions within a unified optimization model.

\subsection{Economics of Digital Transformation}

The economic implications of digital transformation in supply chain management have attracted considerable scholarly attention over the past decade. Early research primarily focused on the potential of e-business technologies to reduce transaction costs and improve operational efficiency \cite{Queiroz2019,Kouhizadeh2018}. For example, Min \cite{Min2018} and Queiroz and Telles \cite{Queiroz2018} proposed that digital integration can generate substantial economies of scale by streamlining information flows and reducing lead times. However, the costs associated with implementing and maintaining digital platforms can be significant, requiring substantial upfront investment and organizational change \cite{Mendling2018,Hald2020,Upadhyay2023,Yadav2023}.

With the emergence of blockchain and smart contracts, a new stream of literature has explored the economics of decentralized digital systems. Catalini and Gans \cite{Catalini2016} argued that blockchain adoption could lower verification and enforcement costs by automating contractual compliance. Mougayar \cite{Mougayar2016} similarly described how distributed ledger technologies create new value propositions by reducing counterparty risk. Yet empirical studies have often found that the return on investment from blockchain adoption is highly context-dependent. For example, Queiroz and Wamba \cite{Queiroz2020}, Gurtu and Johny \cite{Gurtu2019}, and Rejeb and Keogh \cite{Rejeb2023} observed that supply chain participants frequently underestimate integration costs and overestimate efficiency gains, while recent meta-analyses highlight a persistent gap between expectations and realized performance \cite{Treiblmaier2020,Saberi2018,Chang2023}.

Recent contributions have emphasized the need for more nuanced cost-benefit analyses of smart contract adoption under uncertainty. Min \cite{Min2019} highlighted that digital contracting yields the highest economic returns in high-variability environments where manual processing costs are elevated. In contrast, Hald and Kinra \cite{Hald2020} and Ghadge and Dani \cite{Ghadge2024} suggested that in more stable settings, the marginal benefits of smart contracts may be outweighed by the fixed implementation costs, especially for firms with low digital maturity. Li and Wang \cite{Li2021} further argued that behavioral factors and bounded rationality also play a crucial role in shaping adoption decisions and realized economic benefits. Treiblmaier \cite{Treiblmaier2018} and Di Vaio and Varriale \cite{DiVaio2023} added that the impact of blockchain adoption must be assessed alongside complementary investments in IoT and data analytics capabilities.

Emerging research has examined the integration of smart contracts with data-driven forecasting and predictive analytics to improve economic performance \cite{Nahmias1982}. Ivanov \cite{Ivanov2020,Ivanov2019} analyzed how digital supply chain twins and disruption propagation models can strengthen resilience while improving return on investment. Zhang et al.\ \cite{Zhang2022} demonstrated that blockchain-enabled procurement systems under bounded demand uncertainty can significantly lower transaction costs compared to traditional contracting mechanisms. Chen et al.\ \cite{Chen2024} provided empirical evidence showing that dynamic adjustment of smart contracts in response to observed demand patterns can yield substantial cost savings. Similarly, Lu and Wang \cite{Lu2025} and Wang and Zhao \cite{Wang2025} quantified the economic value of aligning smart contract adoption intensity with supplier digital maturity in multi-tier supply chains.

Several recent studies have underscored the strategic role of blockchain and smart contracts in advancing sustainability objectives. Kouhizadeh et al.\ \cite{Kouhizadeh2021} conducted a meta-analysis demonstrating that the sustainability benefits of blockchain adoption are closely tied to its economic performance, highlighting the dual role of smart contracts in promoting efficiency and resilience. Garcia and Lopez \cite{Garcia2024}, Upadhyay and Kumar \cite{Upadhyay2023}, and Rejeb and Keogh \cite{Rejeb2023} further discussed the role of blockchain-based traceability systems in improving transparency, supplier collaboration, and sustainable sourcing practices.

This study builds on these insights by analyzing how smart contract adoption interacts with bounded demand uncertainty and supplier readiness, providing a richer understanding of the economic trade-offs in digitally enabled procurement.

\subsection{Integrated Contract and Inventory Optimization}

Contract and inventory decisions have traditionally been studied in isolation, with contract design focusing on incentive alignment and inventory management emphasizing cost minimization and service level performance \cite{Cachon2012,Christopher2011}. Cachon \cite{Cachon2012} reviewed various contract types, including buyback, revenue-sharing, and quantity-flexibility contracts, illustrating how each mechanism affects inventory decisions and risk sharing. More recent research has emphasized the importance of integrated models that simultaneously optimize contracting strategies and inventory policies \cite{Ivanov2020,Boute2021}.

Li and Wang \cite{Li2021} developed a model in which smart contract adoption acts as an endogenous lever to coordinate procurement costs and order quantities under demand uncertainty. Zhang and Zhang \cite{Zhang2022} extended this approach by incorporating supplier heterogeneity, demonstrating that digital contracting intensity interacts with inventory allocation to shape operational efficiency. Ivanov and Dolgui \cite{Ivanov2019,Ivanov2020} emphasized that in volatile environments, digital contracts can enhance resilience by synchronizing contractual execution with adaptive replenishment policies \cite{Tan2025}.

Syntetos et al.\ \cite{Syntetos2016} suggested that incorporating demand truncation within integrated contract-inventory models improves both performance and sustainability, as bounded variability allows for more precise safety stock targets. Boylan and Syntetos \cite{Boylan2008} further argued that service parts environments particularly benefit from models capturing demand constraints and contractual flexibility.

Emerging contributions have examined hybrid approaches that blend smart contracts with predictive analytics, simulation-based optimization, and digital twin technologies \cite{Fernandez2023,Ali2023,Ghosh2024}. For example, Singh and Gupta \cite{Singh2023} demonstrated that integrating simulation-based insights with smart contract mechanisms can improve dynamic inventory policies. Lee and Patel \cite{Lee2024} provided empirical evidence that blockchain-enabled replenishment under demand uncertainty enhances responsiveness and reduces mismatch costs. 

Rahman and Akter \cite{Rahman2024} highlighted the role of contract flexibility powered by blockchain to mitigate perishability risks. Nguyen and Tran \cite{Nguyen2024} developed a framework showing that smart contract-enabled inventory optimization in agri-food supply chains improves both traceability and agility. Similarly, Park and Zhao \cite{Park2023} emphasized the importance of aligning organizational digital maturity with smart contract adoption to maximize operational performance.

Ghosh and Chatterjee \cite{Ghosh2024} proposed machine learning-enhanced smart contracts for demand forecasting integration, demonstrating significant improvements in forecast accuracy and inventory alignment. Tan and Zhou \cite{Tan2025} further argued that dynamic contract adjustment leveraging IoT data streams supports adaptive procurement strategies in volatile environments.

Recent studies by Patel and Zhang \cite{Zhang2023} and Yadav and Singh \cite{Yadav2023} underscored that dynamic adjustment of contract parameters in response to observed demand improves procurement agility and transparency. Fildes and Goodwin \cite{Fildes2008} emphasized that integrating forecasting support systems with smart contract platforms yields substantial economic benefits. Finally, Wang and Zhao \cite{Wang2025} showed that learning effects and cumulative experience with smart contracts can improve alignment between contractual incentives and inventory targets over time, reinforcing the case for integrated optimization frameworks \cite{DiVaio2023,Upadhyay2023}.

This research contributes to this growing body of knowledge by explicitly modeling smart contract adoption as an endogenous decision variable that jointly shapes procurement costs, order allocations, and inventory positions under bounded demand uncertainty.


\section{Materials and Methods}

\subsection{Model Formulation}
\label{sec:model}

\noindent
\textbf{Note on Related Manuscripts.}
This manuscript is part of a series of concurrent submissions examining smart contract-enabled procurement under uncertainty. While the core procurement model and digital readiness definitions are consistent across studies, each paper addresses distinct demand distributions, scenario designs, and managerial implications. To ensure clarity and completeness, we restate the baseline assumptions here.

\subsubsection{Research Context}

This study focuses on procurement operations within mature consumer electronics distribution networks, where demand patterns are shaped by seasonal promotions, limited storage capacity, and established retailer agreements.

Empirical observations suggest that periodic demand for high-value items—such as smartphones, laptops, and tablets—tends to fluctuate within a constrained interval. For example, weekly demand for a given product SKU may range between 30 and 70 units, reflecting both baseline consumption and promotional uplifts \cite{Syntetos2016}. Unlike markets prone to heavy-tailed or unbounded demand shocks, this setting exhibits finite variability that can be effectively modeled using a truncated normal distribution \cite{Silver1973}.

Modeling such bounded demand accurately is critical for designing procurement policies that balance fill rate targets, working capital constraints, and digital transformation objectives. Incorporating this realistic demand structure allows decision-makers to calibrate smart contract adoption and supplier allocation strategies to achieve operational resilience while avoiding overinvestment in excessive buffer stock.

\subsubsection{Truncated Normal Demand Specification}

To realistically capture bounded variability in mature consumer electronics markets, demand is modeled as a truncated normal distribution. This choice reflects the empirical observation that order volumes are typically constrained within operationally feasible intervals due to capacity limits, contractual agreements, and seasonal effects.

Let $D$ denote the stochastic demand in a given period. The probability density function (PDF) of the truncated normal distribution is expressed as:
\[
f(x) = \frac{\phi\left(\frac{x-\mu}{\sigma}\right)}{\Phi\left(\frac{b-\mu}{\sigma}\right) - \Phi\left(\frac{a-\mu}{\sigma}\right)}, \quad x \in [a,b],
\]
where $\mu$ denotes the mean of the untruncated normal distribution, $\sigma$ is the standard deviation, $a$ and $b$ are the lower and upper truncation bounds (e.g., 30 and 70 units, respectively), $\phi(\cdot)$ denotes the standard normal probability density function, and $\Phi(\cdot)$ denotes the standard normal cumulative distribution function.

The cumulative distribution function (CDF) is given by:
\[
F(x) = \frac{\Phi\left(\frac{x-\mu}{\sigma}\right) - \Phi\left(\frac{a-\mu}{\sigma}\right)}{\Phi\left(\frac{b-\mu}{\sigma}\right) - \Phi\left(\frac{a-\mu}{\sigma}\right)}, \quad x \in [a,b].
\]

The expected demand under truncation is computed as:
\[
\mathbb{E}[D] = \mu + \sigma \,\frac{\phi\left(\frac{a-\mu}{\sigma}\right) - \phi\left(\frac{b-\mu}{\sigma}\right)}{\Phi\left(\frac{b-\mu}{\sigma}\right) - \Phi\left(\frac{a-\mu}{\sigma}\right)}.
\]

The variance of the truncated normal is given by:
\[
\text{Var}[D] = \sigma^2 \left[1 + \frac{a^* \,\phi(a^*) - b^*\,\phi(b^*)}{Z} - \left(\frac{\phi(a^*) - \phi(b^*)}{Z}\right)^2 \right],
\]
where
\[
a^* = \frac{a-\mu}{\sigma}, \quad b^* = \frac{b-\mu}{\sigma}, \quad Z = \Phi(b^*) - \Phi(a^*).
\]

In practical settings, the parameter $\mu$ can be estimated using historical mean demand during stable periods, while $\sigma$ captures moderate fluctuations arising from promotions or seasonality. Truncation bounds $a$ and $b$ are typically set based on observed operational constraints and contractual thresholds. In empirical applications, these parameters can be further refined by combining historical order data with expert judgment to ensure alignment with operational realities.

Unlike unbounded or heavy-tailed demand models, the truncated normal distribution provides a realistic representation of bounded variability that arises in mature consumer electronics markets. This modeling choice is particularly appropriate in contexts where extreme demand realizations are structurally limited by storage capacity or contractual ceilings, in contrast to heavy-tailed distributions that may overstate tail risk.

To ensure a smooth empirical approximation of the theoretical density, the simulation underlying Figure~\ref{fig:demand_distribution} was generated using 100{,}000 random samples. This larger sample size produces a histogram that closely matches the expected truncated normal shape and ensures that rare tail events are adequately represented while variance estimates converge reliably.

\begin{figure}[H]
\centering
\includegraphics[width=0.75\textwidth]{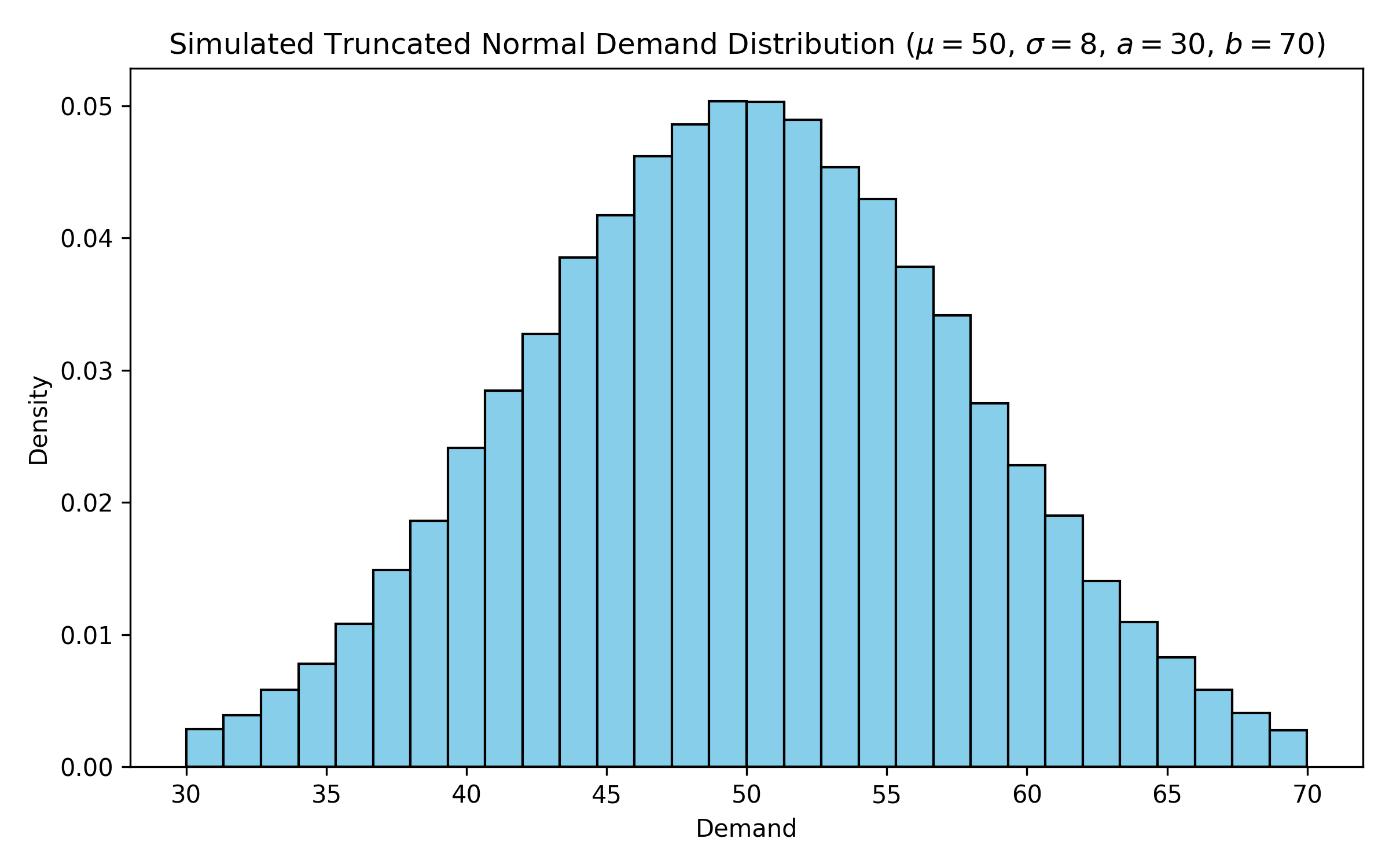}
\caption{Simulated truncated normal demand distribution with $\mu=50$, $\sigma=8$, $a=30$, and $b=70$ (100{,}000 samples).}
\label{fig:demand_distribution}
\end{figure}

\subsubsection{Procurement Cost Structure}

The effective procurement cost per unit from supplier $i$ is modeled as a linear function of two key factors: the retailer's smart contract adoption level ($\alpha$) and the supplier's digital readiness level ($\beta_i$). Formally:
\begin{equation}
c(\alpha, \beta_i) = c_i^0 - A_1 \alpha - A_2 \beta_i,
\end{equation}
where $c_i^0$ denotes the baseline procurement cost per unit from supplier $i$ in the absence of any digital contract adoption, $A_1 > 0$ captures the marginal cost reduction achieved by increasing the smart contract adoption level $\alpha$, and $A_2 > 0$ reflects the marginal cost reduction associated with higher supplier digital readiness $\beta_i$.

\begin{table}[H]
\centering
\caption{Decision variables and key parameters.}
\label{tab:variables_parameters}
\begin{tabular}{ll}
\toprule
\textbf{Symbol} & \textbf{Description} \\
\midrule
$\alpha$ & Smart contract adoption level (continuous, $0 \le \alpha \le 1$) \\
$q_i$ & Quantity ordered from supplier $i$ \\
$c_i^0$ & Baseline procurement cost per unit (supplier $i$) \\
$\beta_i$ & Digital readiness level of supplier $i$ \\
$A_1$ & Marginal cost reduction per unit increase in $\alpha$ \\
$A_2$ & Marginal cost reduction per unit increase in $\beta_i$ \\
$A_3$ & Convexity coefficient of smart contract adoption cost \\
$r$ & Penalty cost per unit of unmet demand \\
$s$ & Salvage value per unit overstocked \\
$p$ & Selling price per unit \\
\bottomrule
\end{tabular}
\end{table}

\begin{table}[H]
\centering
\caption{Illustrative example of procurement cost components.}
\label{tab:cost_components}
\begin{tabular}{cccccc}
\toprule
Supplier & $c_i^0$ & $\beta_i$ & $\alpha$ & $A_1$ & $A_2$ \\
\midrule
1 & 100 & 0.20 & 0.5 & 5.0 & 8.0 \\
2 & 102 & 0.50 & 0.5 & 5.0 & 8.0 \\
3 & 98  & 0.70 & 0.5 & 5.0 & 8.0 \\
\bottomrule
\end{tabular}
\end{table}

\paragraph{Example Calculation.}
For supplier 1, the effective procurement cost is:
\[
c = 100 - (5.0)(0.5) - (8.0)(0.2) = 100 - 2.5 - 1.6 = 95.9.
\]

This formulation emphasizes the incentive for the retailer to align digital contract investments ($\alpha$) with suppliers demonstrating higher readiness levels ($\beta_i$) to maximize cost efficiency. It provides a transparent and parameterized representation of procurement costs, supporting both sensitivity analysis and managerial interpretation.

It is important to note that while supplier digital readiness ($\beta_i$) enters the procurement cost function directly, the retailer's own digital maturity does not appear as a separate variable in this expression. Instead, as discussed in Section~3.6, the retailer's readiness indirectly influences the adoption cost parameter $A_3$, thereby shaping the overall cost structure without altering the per-unit procurement discount. This separation reflects the practical reality that while suppliers can offer unit cost reductions through digital integration, the retailer's internal capabilities primarily affect implementation costs and economies of scale associated with smart contract deployment.

\subsubsection{Supplier Digital Readiness: Composite Definition}
To ensure interpretability and consistency, we define each supplier's digital readiness score $\beta_i$ as a composite index aggregated from multiple normalized technological components:
\begin{equation}
\beta_i = \sum_{k=1}^5 w_k \, B_i^{(k)},
\quad \text{where} \quad \sum_{k=1}^5 w_k = 1.
\end{equation}
Here, each component $B_i^{(k)} \in [0,1]$ captures a specific technological dimension:
\begin{itemize}
    \item $B_i^{SC}$: Smart contract integration capability
    \item $B_i^{ERP}$: ERP/SCM system usage
    \item $B_i^{Cloud}$: Cloud infrastructure utilization
    \item $B_i^{HR}$: Digital human capital
    \item $B_i^{Security}$: Information security readiness
\end{itemize}

The weights in Table~\ref{tab:beta-weights-2} reflect an illustrative prioritization based on the relative contribution of each capability to end-to-end procurement digitalization. In practice, these weights can be calibrated empirically using supplier surveys, expert assessments, or historical performance data. This formulation enables consistent quantification of supplier readiness as a unified scalar $\beta_i$, facilitating integration into the procurement cost function without altering the model's structural properties.

\begin{table}[h]
\centering
\caption{Illustrative weights for digital readiness components. Note: These weights are illustrative and can be adapted to reflect empirical or industry-specific priorities.}
\label{tab:beta-weights-2}
\begin{tabular}{llc}
\toprule
Component & Description & Weight ($w_k$) \\
\midrule
$B_i^{SC}$ & Smart contract integration capability & 0.28 \\
$B_i^{ERP}$ & ERP/SCM system integration & 0.27 \\
$B_i^{Cloud}$ & Cloud infrastructure usage & 0.20 \\
$B_i^{HR}$ & Digital human capital & 0.15 \\
$B_i^{Security}$ & Information security readiness & 0.10 \\
\bottomrule
\end{tabular}
\end{table}

\subsubsection{Objective Function}

The variable $\alpha$ is defined as a continuous proportion of smart contract adoption, bounded between 0 (no adoption) and 1 (full adoption). This range reflects the degree of integration intensity achievable in practice, from conventional manual contracting to fully digital execution.

The retailer seeks to maximize the expected total profit by jointly determining the smart contract adoption level ($\alpha$) and the supplier-specific order quantities ($q_i$), under bounded demand variability modeled via a truncated normal distribution.

Formally, the objective function is:
\begin{align}
\max_{\substack{0 \le \alpha \le 1 \\ q_i \ge 0}} \quad
\mathbb{E}\Bigl[
\, p \min(Q,D)
+ s(Q-D)^+
- r(D-Q)^+
\Bigr]
\quad \nonumber \\
-\;
\sum_{i \in \mathcal{I}} c(\alpha,\beta_i)\, q_i
\;-\;
\psi(\alpha).
\end{align}

where $Q = \sum_{i \in \mathcal{I}} q_i$ denotes the total quantity ordered, and $D$ represents the random demand, which follows a truncated normal distribution. The parameter $p$ is the selling price per unit, while $s$ denotes the salvage value recovered for each unsold unit, and $r$ represents the penalty cost incurred for each unit of unmet demand. The term $c(\alpha, \beta_i)$ indicates the effective procurement cost from supplier $i$, which depends on the smart contract adoption level $\alpha$ and the supplier's digital readiness $\beta_i$. The function $\psi(\alpha)$ captures the smart contract adoption cost, which is assumed to be convex in $\alpha$. Finally, $(x)^+ = \max\{x,0\}$ denotes the positive part function.

This formulation captures the trade-offs among sales revenue, salvage recovery, stockout penalties, procurement expenditures, and the cost of digital contract adoption. All notation is summarized in Table~\ref{tab:variables_parameters} for clarity. The expected value is approximated numerically using Monte Carlo simulation over truncated normal demand realizations to ensure accurate estimation of profitability and service levels.

\paragraph{Parameter Justification for Adoption Cost Function.}
The adoption cost function is specified as $\psi(\alpha) = A_3 \alpha^\nu$, where $A_3$ captures the scale of implementation investment and $\nu > 1$ represents convexity due to increasing complexity at higher adoption levels. The parameter $A_3$ was calibrated based on industry estimates of enterprise blockchain implementation costs reported in recent surveys (e.g., Gurtu and Johny, 2019; Rejeb et al., 2023), which indicate average annualized costs ranging from \$20,000 to \$50,000 for mid-sized European supply chain organizations. In the baseline scenario, setting $A_3 = 2,000$ yields an approximate annual cost of \$24,000 when $\alpha=1$, consistent with these benchmarks over a 12-period planning horizon. The exponent $\nu=1.5$ was selected to reflect moderate convexity, aligning with prior empirical observations that integration costs increase disproportionately as adoption progresses beyond pilot implementations (Mougayar, 2016). Sensitivity analyses were performed across $A_3 \in [500, 4,000]$ and $\nu \in \{1.2, 1.5, 2.0\}$ to test robustness of the results to parameter variation.

\subsubsection{Decision Variables and Constraints}

The optimization problem involves the following decision variables:

\begin{itemize}
    \item $\alpha \in [0,1]$: the smart contract adoption level, representing the proportion of digital contracting intensity.
    \item $q_i \ge 0$: the order quantity from supplier $i \in \mathcal{I}$.
\end{itemize}

The constraints ensure feasibility and consistency of procurement decisions:
\[
\begin{aligned}
& \text{(C1) Non-negativity of order quantities:} && q_i \ge 0 \quad \forall\, i \in \mathcal{I}, \\
& \text{(C2) Bounded smart contract adoption:} && 0 \le \alpha \le 1.
\end{aligned}
\]

Note that no explicit constraint requires $Q \ge D$ in all realizations. Instead, stockout penalties are incorporated into the objective function to penalize unmet demand probabilistically.

In this formulation:
\begin{itemize}
    \item The smart contract level $\alpha$ is a continuous decision variable capturing the retailer's degree of investment in digital contracting infrastructure.
    \item Supplier-specific order quantities $\{q_i\}$ are continuous non-negative variables.
    \item The total quantity ordered is defined as $Q = \sum_{i} q_i$.
\end{itemize}

\begin{figure}[H]
\centering
\includegraphics[width=0.85\textwidth]{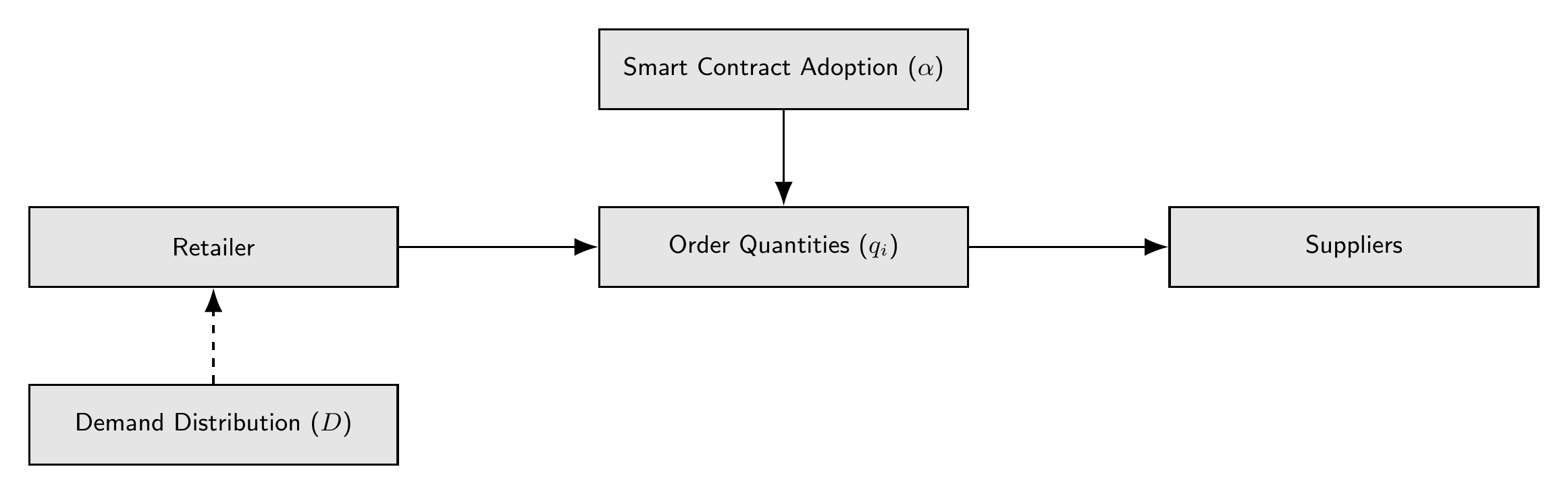}
\caption{Conceptual framework of smart contract-enabled procurement. The retailer jointly determines smart contract intensity ($\alpha$) and supplier order allocations ($q_i$) under bounded demand variability. Notation corresponds to the variables summarized in Table~\ref{tab:variables_parameters}.}
\label{fig:conceptual_framework}
\end{figure}

\subsection{Theoretical Analysis}

This section develops rigorous analytical results on the concavity of the profit maximization problem, characterizes optimality conditions, and derives comparative statics with respect to key parameters.

\subsubsection{Model Assumptions}

The following assumptions are imposed to ensure the existence of a unique optimal solution, to establish concavity properties, and to enable tractable comparative statics:

\begin{assumption}
\label{ass:basic}
\leavevmode
\begin{enumerate}[label=(\roman*)]
    \item \textbf{Demand Distribution:} The random variable $D$ follows a truncated normal distribution with parameters $(\mu, \sigma, a, b)$, where $a$ and $b$ are finite truncation bounds and $\sigma > 0$. This specification ensures bounded support, finite variance, and smooth probability density, thereby avoiding the discontinuities often encountered in heavy-tailed demand models.
    
    \item \textbf{Procurement Cost Function:} The effective procurement cost is affine in the smart contract adoption level:
    \begin{equation}
    c(\alpha, \beta_i) = c_i^0 - A_1 \alpha - A_2 \beta_i,
    \end{equation}
    where $A_1 > 0$ and $A_2 > 0$ capture the marginal cost reductions due to increased digital contracting intensity and supplier readiness, respectively, and $0 \le \beta_i \le 1$. The affine form provides analytical tractability while allowing clear interpretation of incremental savings.
    
    \item \textbf{Smart Contract Adoption Cost:} The adoption cost function $\psi(\alpha)$ is assumed to be strictly convex and twice continuously differentiable over $\alpha \in [0,1]$, reflecting increasing marginal costs of deeper integration:
    \begin{equation}
    \psi(\alpha) = A_3 \alpha^\nu, \quad \nu > 1, \quad A_3 > 0.
    \end{equation}
    This formulation is consistent with empirical observations that early-stage adoption yields relatively low costs, while advanced implementation phases involve complex integration and change management.
    
    \item \textbf{Decision Variables:} The smart contract adoption level $\alpha$ belongs to the interval $[0,1]$, capturing the continuum from no adoption to full adoption. The supplier-specific order quantities satisfy $q_i \ge 0$ for all $i \in \mathcal{I}$, reflecting non-negativity and feasibility constraints.
\end{enumerate}
\end{assumption}

Taken together, these assumptions guarantee that the objective function is well-defined, continuous, and jointly concave in the decision variables $(\alpha, \mathbf{q})$. This structure implies the existence of a unique global optimum and provides a robust foundation for deriving first-order optimality conditions and comparative statics.

\subsubsection{Concavity of the Profit Function}

We first establish that the expected profit function is jointly concave in the decision variables under Assumption~\ref{ass:basic}.

\begin{proposition}[Concavity of the Objective Function]
\label{prop:concavity_objective}
Under Assumption~\ref{ass:basic}, the expected profit function
\begin{equation}
\Pi(\alpha, \mathbf{q}) =
\mathbb{E}\bigl[
\, p \min(Q,D)
+ s(Q-D)^+
- r(D-Q)^+
\,\bigr]
\quad-\quad
\sum_{i \in \mathcal{I}} c(\alpha,\beta_i)\, q_i
\quad-\quad
\psi(\alpha)
\end{equation}
is jointly concave in the decision variables $(\alpha, \mathbf{q})$.
\end{proposition}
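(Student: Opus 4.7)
The plan is to decompose the expected profit into three additive pieces, verify the concavity of each by standard arguments, and then confront the single indefinite cross-term via a Hessian computation. Writing $Q = \sum_i q_i$, I would split
\[
\Pi(\alpha,\mathbf{q}) \;=\; L(Q) \;-\; \sum_{i \in \mathcal{I}} c(\alpha,\beta_i)\,q_i \;-\; \psi(\alpha),
\]
where, after using $(Q-D)^+ = Q - \min(Q,D)$, the stochastic piece rewrites as $L(Q) = (p-s)\,\mathbb{E}[\min(Q,D)] + sQ - r\,\mathbb{E}[(D-Q)^+]$. Since the sum of concave functions is concave, it suffices to control each summand together with any interaction between them.

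First, I would argue that $L$ is concave in $Q$: $\min(Q,D)$ is concave in $Q$ and $(D-Q)^+$ is convex in $Q$, so under the standard newsvendor ordering $p \ge s$ and $r \ge 0$, expectation preserves these signs and $L$ is concave. Composing with the affine map $\mathbf{q}\mapsto Q = \sum_i q_i$ yields joint concavity in $\mathbf{q}$. Second, the adoption cost contributes $-\psi(\alpha) = -A_3 \alpha^{\nu}$, which is strictly concave on $[0,1]$ by Assumption~\ref{ass:basic}(iii) (since $\nu>1$ and $A_3>0$). The remaining piece, after expanding $c(\alpha,\beta_i)=c_i^0 - A_1\alpha - A_2\beta_i$, contributes $A_1 \alpha Q + A_2\sum_i \beta_i q_i - \sum_i c_i^0 q_i$; only the bilinear term $A_1 \alpha Q$ is not already concave in $(\alpha,\mathbf{q})$.

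To handle this cross-term, I would compute the full Hessian $H$ of $\Pi$ and exploit its structure: every second partial in the $q$-block equals $L''(Q)$, every mixed partial $\partial^2\Pi/\partial\alpha\,\partial q_i$ equals $A_1$, and $\partial^2\Pi/\partial\alpha^2 = -A_3\nu(\nu-1)\alpha^{\nu-2}$. For any perturbation $v=(v_\alpha, v_{q_1},\dots,v_{q_n})$, setting $S:=\sum_i v_{q_i}$ collapses the quadratic form into a $2 \times 2$ form in $(v_\alpha, S)$,
\[
v^{\top} H v \;=\; \begin{pmatrix} v_\alpha \\ S \end{pmatrix}^{\!\top}\!\begin{pmatrix} -A_3\nu(\nu-1)\alpha^{\nu-2} & A_1 \\ A_1 & L''(Q) \end{pmatrix}\!\begin{pmatrix} v_\alpha \\ S \end{pmatrix},
\]
so joint concavity is equivalent to negative semidefiniteness of this reduced matrix, i.e.\ to the curvature-dominance inequality $A_3\nu(\nu-1)\alpha^{\nu-2}\,|L''(Q)| \ge A_1^2$ throughout the feasible region.

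The main obstacle is precisely establishing this dominance inequality, since it is not implied by Assumption~\ref{ass:basic} on its own: it couples the adoption-cost parameters $(A_3,\nu)$, the digital savings coefficient $A_1$, and the second derivative of the newsvendor profit function, which in turn is governed by the truncated normal density $f_D$. Because $f_D$ is uniformly bounded below by a positive constant on its compact support $[a,b]$, one obtains a strictly positive lower bound on $|L''(Q)| = (p-s+r)\, f_D(Q)$ over any relevant interval of total orders; combining this with the minimum of $\alpha^{\nu-2}$ on $[0,1]$ (which requires mild care at $\alpha=0$ when $\nu<2$, e.g.\ by restricting attention to $\alpha \in (0,1]$ or strengthening to $\nu \ge 2$) and a moderate $A_1$ relative to $A_3$ secures the inequality. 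I would therefore either invoke such implicit parameter restrictions explicitly as a supplementary clause to Assumption~\ref{ass:basic}, or, if one wishes to avoid them, weaken the conclusion to separate concavity in each block of variables.
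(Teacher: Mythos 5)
Your proposal is correct in its essentials and is in fact more careful than the paper's own argument, which it departs from at exactly the right point. The paper's proof (both the sketch and the Appendix~A.1 version) handles the newsvendor term and $-\psi(\alpha)$ the same way you do, but then disposes of the procurement cost by asserting that $\sum_i c(\alpha,\beta_i)q_i$ is ``affine in $(\alpha,\mathbf{q})$.'' That assertion is false: expanding $c(\alpha,\beta_i)=c_i^0-A_1\alpha-A_2\beta_i$ produces the bilinear contribution $A_1\alpha\sum_i q_i$, which is indefinite rather than jointly concave, and this is precisely the cross-term your Hessian reduction isolates. Your $2\times 2$ reduction in $(v_\alpha, S)$ with $S=\sum_i v_{q_i}$ is the correct way to test joint concavity, and your conclusion --- that negative semidefiniteness requires the curvature-dominance condition $A_3\nu(\nu-1)\alpha^{\nu-2}\,(p-s+r)f_D(Q)\ge A_1^2$, which Assumption~3.1 does not imply --- is a genuine correction to the paper: the proposition as stated needs either an explicit parameter restriction of this form or a weakening to blockwise concavity. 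Two refinements to your closing discussion: first, the delicate case of $\alpha^{\nu-2}$ at $\alpha=0$ runs in the opposite direction from what you suggest --- for $1<\nu<2$ the factor $\alpha^{\nu-2}$ blows up near zero, which only strengthens the inequality (though it does contradict the assumed twice continuous differentiability of $\psi$ at $0$), whereas for $\nu>2$ it vanishes at $\alpha=0$ and the dominance necessarily fails there; second, the lower bound on $f_D$ holds only for $Q\in[a,b]$, so for $Q>b$ one has $L''(Q)=0$ and the reduced matrix has determinant $-A_1^2<0$, meaning joint concavity genuinely fails on part of the nominal feasible set and an explicit restriction such as $Q\le b$ (harmless at the optimum) is also needed. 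Also note that concavity of the revenue term requires $p\ge s$, which you invoke but the paper never states.
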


\begin{proof}[Sketch of Proof]
First, observe that $\min(Q,D)$ and $(Q-D)^+$ are piecewise linear and concave in $Q$, which is affine in $\mathbf{q}$. The expectation operator preserves concavity because $D$ has bounded support and finite variance. The procurement cost term is affine in $(\alpha, \mathbf{q})$. The adoption cost $-\psi(\alpha)$ is concave since $\psi$ is convex by assumption. Therefore, the sum of these components is concave.
\end{proof}

\begin{remark}
The concavity of the objective function implies that any local maximum is also a global maximum. Consequently, the first-order optimality conditions derived in the subsequent subsection are both necessary and sufficient. This property also facilitates the use of efficient gradient-based algorithms for numerical optimization.
\end{remark}

\begin{proposition}[Existence and Uniqueness of the Optimum]
\label{prop:existence}
Given Assumption~\ref{ass:basic}, there exists a unique global maximizer $(\alpha^*, \mathbf{q}^*)$ of $\Pi(\alpha, \mathbf{q})$ over the feasible set.
\end{proposition}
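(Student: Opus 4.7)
The plan is to use a standard two-step strategy: establish existence via the Weierstrass extreme value theorem after reducing to a compact feasible set, then deduce uniqueness from strict concavity of the relevant components of the objective. Proposition~\ref{prop:concavity_objective} already provides joint concavity of $\Pi$ on the (unbounded) feasible set $[0,1]\times\mathbb{R}_+^{|\mathcal{I}|}$, and continuity follows from the bounded support of the truncated normal together with the smoothness of $\psi$ and the affinity of $c(\alpha,\beta_i)$.

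For existence, the only real difficulty is that $\mathbf{q}$ is unbounded above. I would first argue coercivity of $\Pi$ in $Q=\sum_i q_i$: because $D$ has bounded support $[a,b]$, the expected revenue $p\,\mathbb{E}[\min(Q,D)]$ saturates at $p\,\mathbb{E}[D]$ as $Q\to\infty$, while the salvage recovery $s\,\mathbb{E}[(Q-D)^+]$ grows only linearly at rate $s$, and the procurement cost $\sum_i c(\alpha,\beta_i)\,q_i$ grows at least at the rate $\min_i c(\alpha,\beta_i)$. Under the economically natural requirement $s<\min_i c(\alpha,\beta_i)$ the cost term dominates, so $\Pi\to-\infty$ as $\|\mathbf{q}\|\to\infty$. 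This allows me to restrict without loss of generality to a compact set of the form $[0,1]\times[0,\bar{Q}]^{|\mathcal{I}|}$, on which continuity of $\Pi$ and the Weierstrass theorem deliver a maximizer.

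For uniqueness, I would split the verification into the two blocks of decision variables. Strict convexity of $\psi$ (Assumption~\ref{ass:basic}(iii), with $\nu>1$) makes $-\psi(\alpha)$ strictly concave in $\alpha$, while the remaining terms are linear in $\alpha$ for fixed $\mathbf{q}$; hence $\alpha^*$ is unique. For the order-quantity block, the expected newsvendor payoff depends on $\mathbf{q}$ only through $Q$, and differentiating twice gives second derivative $-(p+r-s)\,f(Q)$, which is strictly negative on $(a,b)$ because the truncated-normal density $f$ is strictly positive there; this yields strict concavity in $Q$ and hence uniqueness of the aggregate $Q^*=\sum_i q_i^*$. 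Combining with the joint concavity from Proposition~\ref{prop:concavity_objective} gives a unique global maximum value attained at a well-defined pair $(\alpha^*,Q^*)$.

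The main obstacle will be promoting uniqueness of the aggregate $Q^*$ to uniqueness of the individual allocations $\{q_i^*\}$: the sales, salvage, and penalty terms all depend on $\mathbf{q}$ only through $Q$, and the procurement cost is linear in each $q_i$, so strict concavity does not hold coordinate-wise in $\mathbf{q}$. I would close this gap by invoking supplier heterogeneity: when the effective unit costs $c(\alpha^*,\beta_i)=c_i^0-A_1\alpha^*-A_2\beta_i$ are pairwise distinct, the linear cost term forces full concentration of $Q^*$ on the lowest-cost supplier, so $\mathbf{q}^*$ is unique. If cost ties are possible, I would either append a mild non-degeneracy condition on $(c_i^0,\beta_i)$ or, more transparently, state the conclusion as uniqueness of $(\alpha^*,Q^*)$ with explicit acknowledgement of indifference among equally-priced suppliers.
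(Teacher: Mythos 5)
Your argument is correct where it goes and is substantially more careful than the paper's own proof, which is a two-line sketch asserting that ``the feasible set defined by linear constraints is convex and compact'' and that concavity plus compactness yields uniqueness. Both assertions are problematic: the feasible set $[0,1]\times\{q_i\ge 0\}$ is not compact (the $q_i$ are unbounded above), and mere concavity gives uniqueness of the optimal \emph{value}, not of the maximizer. You repair the first defect with a coercivity argument (revenue saturates at $p\,\mathbb{E}[D]$ because $D$ has bounded support, salvage grows at rate $s$, cost at rate $\min_i c(\alpha,\beta_i)$, so $\Pi\to-\infty$ provided $s<\min_i c(\alpha,\beta_i)$ — note this must hold uniformly over $\alpha\in[0,1]$, i.e.\ $s<c_i^0-A_1-A_2\beta_i$, an economically natural but unstated condition), and the second with strict convexity of $\psi$ for the $\alpha$-block and strict positivity of the truncated-normal density on $(a,b)$ for the $Q$-block. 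The obstacle you flag at the end is genuine and is the real content here: the stochastic terms depend on $\mathbf{q}$ only through $Q$ and the procurement cost is linear in each $q_i$, so the objective is affine along any direction that preserves $Q$, and the maximizer $\mathbf{q}^*$ is \emph{not} unique whenever two suppliers share the same effective cost $c(\alpha^*,\beta_i)$. The proposition as stated is therefore false without a non-degeneracy condition on the $(c_i^0,\beta_i)$, or a restatement in terms of $(\alpha^*,Q^*)$; the paper's proof does not confront this because its (incorrect) compactness-plus-concavity appeal never examines strictness. Your proposed fixes — pairwise-distinct effective costs forcing concentration on the cheapest supplier, or weakening the claim to uniqueness of the aggregate — are both sound and either one should be adopted. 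One minor point to tighten: strict concavity in $Q$ holds only on $(a,b)$ (the newsvendor payoff is affine for $Q\le a$ and $Q\ge b$), so you should add the one-line observation that under $p+r>\min_i c$ and $s<\min_i c$ the optimum cannot lie outside $[a,b]$, which you implicitly use.
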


\begin{proof}[Sketch of Proof]
Since the objective function is concave and the feasible set defined by linear constraints is convex and compact, standard results in convex optimization imply existence and uniqueness of the maximizer.
\end{proof}


\subsubsection{First-Order Optimality Conditions}

Given the concavity of the profit function established in Proposition~\ref{prop:concavity_objective}, the Karush-Kuhn-Tucker (KKT) conditions are both necessary and sufficient to characterize the unique global optimum.

\begin{proposition}[First-Order Optimality Conditions]
\label{prop:kkt_conditions}
Under Assumption~\ref{ass:basic}, a vector $(\alpha^*, \mathbf{q}^*)$ is the unique global maximizer of the expected profit if and only if it satisfies the following KKT conditions:
\end{proposition}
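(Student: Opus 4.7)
The plan is to derive the KKT conditions directly from the concavity and compactness results already established. Since Proposition~\ref{prop:concavity_objective} shows that $\Pi(\alpha,\mathbf{q})$ is jointly concave and Proposition~\ref{prop:existence} gives a unique interior/boundary maximizer over a convex, compact, polyhedral feasible set, the standard KKT theorem for concave programming applies: constraint qualification is automatically satisfied because all constraints ($q_i\ge 0$, $\alpha\ge 0$, $\alpha\le 1$) are linear, so the Abadie constraint qualification holds. Consequently, KKT is necessary; concavity upgrades it to sufficient.

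The first step is to form the Lagrangian
\[
\mathcal{L}(\alpha,\mathbf{q};\boldsymbol{\lambda},\eta_0,\eta_1)
= \Pi(\alpha,\mathbf{q}) + \sum_{i\in\mathcal{I}}\lambda_i\, q_i + \eta_0\,\alpha + \eta_1\,(1-\alpha),
\]
with $\lambda_i,\eta_0,\eta_1\ge 0$. Next I would differentiate the expected revenue term. Writing $Q=\sum_i q_i$ and using the truncated normal CDF $F$, a Leibniz-rule argument (justified because $D$ has bounded support $[a,b]$ and $F$ is $C^1$ with bounded density on that interval, so dominated convergence applies) gives
\[
\frac{\partial}{\partial Q}\mathbb{E}\!\left[p\min(Q,D)+s(Q-D)^+ - r(D-Q)^+\right]
= (p+r)\bigl(1-F(Q)\bigr) + s\,F(Q) - r.
\]
Combined with the affine procurement cost $c(\alpha,\beta_i)=c_i^0 - A_1\alpha - A_2\beta_i$, stationarity with respect to $q_i$ yields
\[
(p+r)\bigl(1-F(Q^*)\bigr) + s\,F(Q^*) - r - c(\alpha^*,\beta_i) + \lambda_i^* = 0,
\qquad
\lambda_i^* q_i^* = 0,\ \ q_i^*\ge 0,\ \ \lambda_i^*\ge 0,
\]
for each $i\in\mathcal{I}$. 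Differentiating with respect to $\alpha$ (using $\partial c/\partial\alpha = -A_1$ and $\psi'(\alpha)=A_3\nu\alpha^{\nu-1}$) and attaching multipliers for the box constraints $0\le\alpha\le 1$ gives
\[
A_1\sum_{i\in\mathcal{I}} q_i^* - A_3\nu(\alpha^*)^{\nu-1} + \eta_0^* - \eta_1^* = 0,
\qquad
\eta_0^*\alpha^* = 0,\ \ \eta_1^*(1-\alpha^*) = 0,\ \ \eta_0^*,\eta_1^*\ge 0.
\]
Primal feasibility ($0\le\alpha^*\le 1$, $q_i^*\ge 0$) completes the system.

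The necessity direction follows from the fact that, with linear constraints, KKT is a necessary condition at any optimizer. The sufficiency direction is where concavity does the work: because $\Pi$ is concave and the feasible set is convex, any point satisfying the KKT system is a global maximizer, and by Proposition~\ref{prop:existence} this maximizer is unique, so the stated conditions characterize $(\alpha^*,\mathbf{q}^*)$ uniquely.

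The main obstacle, and the only step requiring care beyond bookkeeping, is justifying differentiation under the expectation for the piecewise-linear integrand. I would handle this by exploiting the bounded support $[a,b]$ of the truncated normal and the Lipschitz continuity of $\min(Q,D)$, $(Q-D)^+$, and $(D-Q)^+$ in $Q$ uniformly in $D$; this gives a dominating integrable function and allows the interchange of derivative and expectation via dominated convergence, producing the clean newsvendor-style derivative above. A secondary subtlety is the correct treatment of the box constraint on $\alpha$, since interior solutions (with $\eta_0^*=\eta_1^*=0$) must be distinguished from corner solutions at $\alpha^*=0$ or $\alpha^*=1$; the complementary slackness block handles all three regimes simultaneously and needs no separate case analysis.
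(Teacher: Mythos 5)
Your overall route is the same as the paper's: differentiate the objective in $q_i$ and $\alpha$, attach multipliers for the linear constraints, and let the concavity and existence results (Proposition~\ref{prop:concavity_objective} and Proposition~\ref{prop:existence}) upgrade the necessary KKT system to a sufficient one satisfied only at the unique maximizer. You are in fact more careful than the paper on two points it leaves implicit, namely the dominated-convergence justification for differentiating under the expectation (easy here because the truncated normal has bounded support $[a,b]$) and the constraint qualification for the linear constraints. Your $\alpha$-stationarity condition $A_1\sum_i q_i^* - \psi'(\alpha^*) + \eta_0^* - \eta_1^* = 0$ also carries the correct sign on the savings term; the paper's displayed condition has $-\sum_i A_1 q_i$, which would force $\alpha^*=0$ identically and contradicts its own Remark~\ref{remark:alpha_interpretation}.

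The one genuine error is in your marginal-revenue derivative. Since $\Pi$ contains $-r\,\mathbb{E}[(D-Q)^+]$ and $\tfrac{d}{dQ}\mathbb{E}[(D-Q)^+] = -\bigl(1-F(Q)\bigr)$, the penalty term contributes $+r\bigl(1-F(Q)\bigr)$ to $\partial\Pi/\partial Q$, so the correct expression is $(p+r)\bigl(1-F(Q)\bigr) + s\,F(Q)$ with no trailing $-r$. Your version, $(p+r)\bigl(1-F(Q)\bigr)+sF(Q)-r = p\bigl(1-F(Q)\bigr)+(s-r)F(Q)$, understates the marginal benefit of an additional unit by the constant $r$ and yields the critical fractile $F(Q^*)=(p-c)/(p+r-s)$ instead of the correct $(p+r-c)/(p+r-s)$; economically, it charges the stockout penalty on the overage event rather than crediting its avoidance on the underage event. (For what it is worth, the paper's own displayed condition writes $-r\,\mathbb{P}(D>Q)$, which is likewise inconsistent with the stated objective; the calculus gives a positive sign on that term.) Correcting this one line leaves the rest of your argument --- the Lagrangian bookkeeping, the complementary-slackness blocks for $q_i\ge 0$ and $0\le\alpha\le 1$, and the necessity/sufficiency logic --- intact.
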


\begin{align}
\frac{\partial \Pi}{\partial q_i} &=
\; p \,\mathbb{P}(D \ge Q)
+ s \,\mathbb{P}(D < Q)
- r \,\mathbb{P}(D > Q)
- c(\alpha, \beta_i)
+ \lambda_i = 0,
\quad \forall i,
\\[6pt]
\lambda_i &\;\ge\; 0,
\quad q_i \;\ge\;0,
\quad \lambda_i\, q_i = 0,
\\[6pt]
\frac{\partial \Pi}{\partial \alpha} &=
-\sum_i A_1 q_i
- \psi'(\alpha)
+ \gamma^+ - \gamma^- = 0,
\\[6pt]
0 \le &\alpha \le 1,
\quad \gamma^+ \ge 0,
\quad \gamma^- \ge 0,
\quad \gamma^+\, \alpha=0,
\quad \gamma^-\, (1-\alpha)=0.
\end{align}

Here:
\begin{itemize}
    \item $\lambda_i$ are the Lagrange multipliers associated with the non-negativity constraints on $q_i$.
    \item $\gamma^+$ and $\gamma^-$ are the multipliers associated with the lower and upper bounds on $\alpha$.
\end{itemize}

These conditions can be interpreted as follows:

\begin{itemize}
    \item For each supplier $i$, the marginal expected benefit of an additional unit ordered must equal the effective procurement cost when $q_i>0$, or be no greater when $q_i=0$.
    \item For the smart contract adoption level $\alpha$, the marginal cost of increasing adoption must balance the total procurement cost savings, subject to the bounds $0\le \alpha \le 1$.
\end{itemize}

\begin{remark}[Interpretation of $\alpha$ Optimality]
\label{remark:alpha_interpretation}
The first-order condition for $\alpha$ shows that the optimal adoption intensity equates the marginal adoption cost $\psi'(\alpha)$ to the aggregate marginal procurement savings $\sum_i A_1 q_i$. When the cost dominates, the optimal solution is $\alpha=0$; when the savings are large, higher levels of adoption are optimal. This threshold behavior is explored further in Proposition~\ref{prop:threshold}.
\end{remark}

Because the problem is concave with a convex feasible set, any solution satisfying these KKT conditions is guaranteed to be the unique global optimum.

\subsubsection{Comparative Statics}

This subsection analyzes how the optimal solution responds to changes in key parameters. For brevity, proofs are provided as sketches; full derivations can be reconstructed based on the concavity and first-order optimality conditions established earlier.

\begin{proposition}
\label{prop:comparative_statics}
Under Assumption~3.1, increasing the standard deviation $\sigma$ of demand increases the optimal order quantity $Q^*$ and reduces expected profit and fill rate, holding all other parameters constant.
\end{proposition}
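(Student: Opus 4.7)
The plan is to exploit the newsvendor structure that emerges once supplier allocation is profiled out, then combine the implicit function theorem, the envelope theorem, and a mean-preserving-spread argument. First I would use the KKT system in Proposition~3.3 to observe that the aggregate quantity $Q^* = \sum_i q_i^*$ satisfies a critical-ratio condition $F(Q^*;\sigma) = (p + r - \bar c)/(p + r - s)$, where $\bar c$ is the effective unit cost of the marginal active supplier and $F$ is the truncated-normal CDF. The right-hand side does not depend on $\sigma$, so $Q^*$ is implicitly defined through $F$ alone, and the joint concavity from Proposition~3.2 guarantees that this condition characterizes the unique optimum.

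Applying the implicit function theorem to this identity yields $dQ^*/d\sigma = -(\partial F/\partial \sigma)/f(Q^*;\sigma)$. The density $f(Q^*;\sigma)$ is strictly positive on $[a,b]$, so the sign is governed by $\partial F/\partial \sigma$ alone. A direct computation on $F(x;\sigma) = [\Phi((x-\mu)/\sigma) - \Phi(a^*)]/[\Phi(b^*) - \Phi(a^*)]$ shows that in the operating regime where the critical ratio exceeds $F(\mu;\sigma)$—i.e.\ when stockout penalties dominate salvage effects—one has $\partial F/\partial \sigma < 0$, hence $dQ^*/d\sigma > 0$, as claimed.

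For the expected profit I would invoke the envelope theorem, which reduces $d\Pi^*/d\sigma$ to $\partial \Pi/\partial \sigma$ evaluated at $(\alpha^*,\mathbf{q}^*)$. The integrand $p\min(Q,D) + s(Q-D)^+ - r(D-Q)^+$ is concave in $D$; increasing $\sigma$ induces a mean-preserving spread of the (recentred) truncated distribution, and Jensen's inequality then delivers a strictly negative derivative. For the fill rate $\mathrm{FR}(\sigma) = \mathbb{E}[\min(Q^*,D)]/\mathbb{E}[D]$ I would express the expected shortage $\mathbb{E}[(D-Q^*)^+]$ in closed form via the Mills ratio of the truncated normal and show that its growth in $\sigma$ outpaces the compensating rise in $Q^*$ implied by the critical-ratio equation, so $\mathrm{FR}'(\sigma) < 0$.

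The hard part will be making the fill-rate comparison airtight: increasing $\sigma$ simultaneously shifts the truncated mean $\mathbb{E}[D]$, enlarges the right-tail shortage integral $\int_{Q^*}^{b}(d-Q^*)f(d;\sigma)\,dd$, and pushes $Q^*$ upward, so the three competing effects must be ranked carefully. I expect to handle this by bounding the shortage integral using the critical-ratio identity and appealing to stochastic dominance in the convex order on bounded support. A secondary difficulty is ensuring that $\partial F/\partial \sigma$ is unambiguously negative: when the critical ratio lies near $F(\mu;\sigma)$, an auxiliary assumption placing $Q^*$ strictly above $\mu$ (or controlling the skewness induced by asymmetric truncation $a,b$) may be required to preserve the stated monotonicity.
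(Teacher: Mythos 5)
Your proposal is far more ambitious than what the paper actually supplies: the paper offers no proof for this proposition at all, and the adjacent Proposition on the effect of demand variance is "proved" by a two-sentence heuristic (more upper-tail mass raises the marginal benefit of ordering, hence $Q^*$ rises), with the profit and fill-rate claims left to the numerical experiments of Scenario~1. Your newsvendor reduction is the right skeleton and is consistent with the paper's KKT system: collapsing the stationarity condition in $q_i$ to $F(Q^*;\sigma) = (p+r-\bar c)/(p+r-s)$, then signing $dQ^*/d\sigma$ via the implicit function theorem, is exactly how one would make the paper's sketch rigorous, and you correctly isolate the substantive hypothesis the paper omits, namely that $\partial F/\partial\sigma<0$ only when the critical ratio places $Q^*$ above the centre of the truncated distribution. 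Your envelope-theorem-plus-Jensen argument for profit is also sound in outline, since the payoff $p\min(Q,D)+s(Q-D)^+-r(D-Q)^+$ is piecewise linear with slope dropping from $p-s$ to $-r$ at $D=Q$ and hence concave in $D$.

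Two genuine gaps remain. First, your claim that the right-hand side of the critical-ratio identity "does not depend on $\sigma$" is not correct in this model: $\bar c = c(\alpha^*,\beta_i)$ depends on $\alpha^*$, which is pinned down by $\psi'(\alpha^*)=\sum_i A_1 q_i^*$ and therefore moves with $Q^*$ and hence with $\sigma$. You need to close this loop explicitly; fortunately the feedback is reinforcing (larger $Q^*$ raises $\alpha^*$, lowers $\bar c$, raises the critical ratio, and pushes $Q^*$ further up), so a fixed-point or monotone-comparative-statics argument rescues the sign, but as written the step is incomplete. Second, the mean-preserving-spread step is only automatic when the truncation is symmetric about $\mu$ (as in the baseline $a=30$, $b=70$, $\mu=50$); for asymmetric truncation $\mathbb{E}[D]$ itself moves with $\sigma$, and you must either add a symmetry hypothesis or verify the convex-order dominance directly, which you acknowledge but do not resolve. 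The fill-rate claim remains a sketch of a sketch; given that the proposition asserts it unconditionally while your own analysis shows three competing effects, the honest conclusion is that the statement as written likely requires additional hypotheses (symmetric truncation and a critical ratio bounded away from $F(\mu;\sigma)$) that neither the paper nor your proposal currently states.
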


\begin{proposition}[Effect of Demand Variance]
\label{prop:sigma}
Let $\sigma' > \sigma$. Then the optimal total order quantity satisfies $Q^*(\sigma') \ge Q^*(\sigma)$, holding all other parameters constant.
\end{proposition}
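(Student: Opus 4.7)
The plan is to reduce the multi-supplier first-order conditions to a single critical-fractile equation in the aggregate order $Q$, and then apply monotone comparative statics to the truncated-normal family indexed by $\sigma$. Invoking the KKT conditions from Proposition~\ref{prop:kkt_conditions}, for any supplier $i$ with $q_i^* > 0$ the stationarity condition in $q_i$ gives
\begin{equation*}
p\,\mathbb{P}(D \ge Q^*) + s\,\mathbb{P}(D < Q^*) - r\,\mathbb{P}(D > Q^*) \;=\; c(\alpha^*, \beta_i).
\end{equation*}
Because the left side depends on the aggregate $Q^* = \sum_i q_i^*$ alone, any two active suppliers must share a common effective cost $c^*$, and the system collapses to $F_\sigma(Q^*) = \kappa$, where $F_\sigma$ is the truncated-normal CDF from Section~3.1.3 and $\kappa$ depends on $(p, s, r, c^*)$ but not on $\sigma$. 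Joint concavity (Proposition~\ref{prop:concavity_objective}) together with Berge's maximum theorem ensures that $(\alpha^*, \mathbf{q}^*)$, and hence $c^*$ and $\kappa$, move continuously in $\sigma$ so that the active-supplier set is locally fixed.

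Next, I would translate the desired inequality into a pointwise statement about $F_\sigma$. Because $F_{\sigma'}$ is strictly increasing on $[a,b]$, the inequality $Q^*(\sigma') \ge Q^*(\sigma)$ is equivalent to $F_{\sigma'}(Q^*(\sigma)) \le \kappa = F_\sigma(Q^*(\sigma))$, i.e.\ that the truncated-normal CDF is non-increasing in $\sigma$ at $x = Q^*(\sigma)$. Using the explicit formula, I would differentiate $F_\sigma(x)$ in $\sigma$ and decompose the result into an interior term proportional to $\phi((x-\mu)/\sigma)(x-\mu)/\sigma^2$ together with two boundary contributions involving $\phi((a-\mu)/\sigma)$ and $\phi((b-\mu)/\sigma)$. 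For the untruncated analogue the interior term alone controls the sign and is non-positive whenever $x \ge \mu$; the implicit function theorem applied to $F_\sigma(Q^*(\sigma)) = \kappa$ would then deliver $dQ^*/d\sigma = -(\partial F_\sigma/\partial \sigma)/f_\sigma(Q^*) \ge 0$.

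The main obstacle is showing that the two boundary contributions do not reverse this sign after truncation, since they do not cancel as cleanly as in the untruncated case. To handle them, I would first argue that $\kappa \ge 1/2$ under the standard newsvendor condition on $(p, s, r, c^*)$, which places $Q^*(\sigma)$ at or above the median of $F_\sigma$ and thus keeps the interior factor $(x-\mu)$ non-negative. Under the moderate-variability regime $a < \mu < b$ assumed throughout Section~3.1, I would then bound the boundary terms by exploiting the symmetry of $\phi$ and the fact that the truncation interval straddles $\mu$, showing they cannot overturn the interior term. As a robustness fallback, I would appeal to the dispersive ordering of truncated-normal scale families on intervals symmetric about the mean, which directly yields the quantile inequality $F_{\sigma'}^{-1}(u) \ge F_\sigma^{-1}(u)$ for all $u \ge 1/2$; evaluated at $u = \kappa$, this gives $Q^*(\sigma') = F_{\sigma'}^{-1}(\kappa) \ge F_\sigma^{-1}(\kappa) = Q^*(\sigma)$ without further algebraic manipulation.
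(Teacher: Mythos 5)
Your proposal takes a genuinely different and far more rigorous route than the paper. The paper's proof is a two-sentence verbal sketch: higher $\sigma$ puts more mass in the upper tail, so the marginal understocking penalty rises and the first-order condition shifts toward larger $Q$. You instead collapse the KKT system to the critical-fractile equation $F_\sigma(Q^*)=\kappa$ with $\kappa=(p+r-c^*)/(p+r-s)$ and then prove a quantile-monotonicity property of the truncated-normal scale family, either by differentiating $F_\sigma$ in $\sigma$ or via dispersive ordering. This buys something real: it exposes that the proposition is not unconditionally true. Your reduction shows the conclusion requires $Q^*(\sigma)\ge\mu$ (equivalently a sufficiently high critical fractile), and in the symmetric case $a<\mu<b$ with $\mu=(a+b)/2$ the likelihood ratio $f_{\sigma'}/f_\sigma$ is U-shaped in $|x-\mu|$ with a common median at $\mu$, which cleanly gives $F_{\sigma'}(x)\le F_\sigma(x)$ for $x\ge\mu$ and hence $Q^*(\sigma')\ge Q^*(\sigma)$. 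The paper's sketch silently ignores that higher $\sigma$ also fattens the \emph{lower} tail, which raises overstocking risk and pulls $Q^*$ the other way when $\kappa<(\mu-a)/(b-a)$; your argument is the one that actually locates where the sign comes from.

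Two residual gaps remain in your write-up. First, $\kappa$ is not $\sigma$-invariant as claimed: $c^*=c(\alpha^*,\beta_i)$ depends on $\alpha^*$, and the paper's own comparative-statics table lists the effect of $\sigma$ on $\alpha^*$ as ambiguous, so the comparison $F_{\sigma'}(Q^*(\sigma))\le\kappa=F_\sigma(Q^*(\sigma))$ compares quantiles at two potentially different fractiles. Berge's theorem gives continuity, not equality; you would need either an envelope argument fixing $\alpha^*$ or a sign argument showing the induced movement of $\kappa$ reinforces rather than reverses the inequality. Second, your handling of the asymmetric case is asserted rather than proved: $\kappa\ge 1/2$ places $Q^*$ above the \emph{median} of $F_\sigma$, which coincides with $\mu$ only under symmetric truncation, so the interior term's sign is not secured when $b-\mu\ne\mu-a$, and the limiting uniform distribution shows the inequality can genuinely fail there for low fractiles. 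Stating the proposition with the explicit hypotheses $\kappa\ge(\mu-a)/(b-a)$ and $\alpha^*$ held fixed would make your proof complete and would also correct an overreach in the paper's original claim.
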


\begin{proof}[Sketch of Proof]
A higher variance increases the expected penalty of understocking due to more probability mass in the upper tail of the truncated normal distribution. Since the penalty cost $r$ is linear in unmet demand, the marginal benefit of ordering additional units increases, shifting the first-order condition in favor of higher $Q$.
\end{proof}

\begin{proposition}[Effect of Mean Demand]
\label{prop:mu}
An increase in $\mu$ strictly increases the optimal total quantity $Q^*$. The impact on $\alpha^*$ is ambiguous and depends on the relative magnitude of procurement cost savings versus the convex adoption cost.
\end{proposition}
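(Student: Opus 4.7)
The plan is to apply the implicit function theorem to the coupled first-order conditions established in Proposition~\ref{prop:kkt_conditions}, treating $\mu$ as a shift parameter of the truncated normal demand distribution. First I would restrict attention to an interior optimum with at least one active supplier and $0 < \alpha^* < 1$, so that the KKT system collapses to two scalar equations: a newsvendor-type balance obtained from $\partial \Pi / \partial q_i = 0$ that can be rewritten in critical-fractile form as $F_\mu(Q^*) = \theta\bigl(\alpha^*, \beta_i\bigr)$, where $\theta$ is a strictly decreasing function of $c(\alpha^*, \beta_i)$ and lies in $(0,1)$, and the adoption balance $\psi'(\alpha^*) = A_1 Q^*$. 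Uniqueness of the solution to this system is inherited from the strict concavity established in Proposition~\ref{prop:concavity_objective}.

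For the first claim, I would exploit the location-family structure of the truncated normal: for fixed interior $x \in (a,b)$, the CDF $F_\mu(x)$ is strictly decreasing in $\mu$, which can be verified by direct differentiation of
\[
F_\mu(x) \;=\; \frac{\Phi\bigl((x-\mu)/\sigma\bigr) - \Phi\bigl((a-\mu)/\sigma\bigr)}{\Phi\bigl((b-\mu)/\sigma\bigr) - \Phi\bigl((a-\mu)/\sigma\bigr)}
\]
and by invoking log-concavity of the standard normal density together with monotonicity of the Mills ratio. Inverting this monotonicity yields $\partial F_\mu^{-1}(\theta)/\partial \mu > 0$ for any fixed $\theta \in (0,1)$, so holding the right-hand side of the critical-fractile identity momentarily fixed, $Q^*$ strictly increases with $\mu$. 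The indirect feedback through $\alpha^*$ can only reinforce this direction, because a larger $\alpha^*$ lowers $c$ and hence raises $\theta$, delivering $dQ^*/d\mu > 0$ unambiguously.

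For the second claim, I would totally differentiate the two FOCs with respect to $\mu$ and solve the resulting $2\times 2$ linear system by Cramer's rule. The numerator of $d\alpha^*/d\mu$ then decomposes into two competing contributions: a positive \emph{scale effect}, through which a larger $Q^*$ raises the aggregate marginal benefit $A_1 Q^*$ and pushes $\alpha^*$ upward, and a countervailing \emph{cost-curvature effect} mediated by $\psi''(\alpha^*) = A_3\,\nu(\nu-1)(\alpha^*)^{\nu-2}$, which resists adjustments of $\alpha^*$ whenever the adoption cost is sharply convex. The overall sign of $d\alpha^*/d\mu$ therefore hinges on whether $A_1 f_\mu(Q^*)$ dominates a threshold expression involving $\psi''(\alpha^*)$ and $(p+r-s)$, thereby formalizing the ambiguity asserted in the proposition.

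The main obstacle will be handling the multi-supplier heterogeneity rigorously: the $q_i$-FOCs for distinct active suppliers impose a common marginal-revenue condition while the individual $\beta_i$'s differ, so the active set can shift with $\mu$. I would manage this by combining the envelope theorem with the uniqueness guarantee of Proposition~\ref{prop:existence}, which ensures that the optimal policy is continuous in $\mu$ and that any local regime change preserves the sign of the aggregate response $dQ^*/d\mu$. A secondary subtlety is the treatment of corner solutions $\alpha^* \in \{0,1\}$, where the adoption FOC binds with slack multipliers and $d\alpha^*/d\mu = 0$ locally; these boundary regimes further reinforce the parameter-dependent character of the comparative static for $\alpha^*$.
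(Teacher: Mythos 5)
Your proposal is correct in substance and considerably more rigorous than the paper's own argument, which is a three-sentence verbal sketch: the paper simply notes that a rightward shift of the demand distribution raises expected sales and the stockout probability (hence the marginal benefit of inventory), and that the effect on $\alpha^*$ depends on whether the larger volume magnifies the savings $\sum_i A_1 q_i$ enough to offset the convex cost $\psi(\alpha)$. You formalize exactly this intuition: the critical-fractile rewriting $F_\mu(Q^*)=\theta(\alpha^*,\beta_i)$ combined with first-order stochastic monotonicity of the truncated-normal family in $\mu$ (which does hold, most cleanly via the monotone likelihood ratio property of the normal, preserved under truncation to a fixed interval $[a,b]$) delivers the $Q^*$ claim, and the implicit-function/Cramer's-rule treatment of the coupled FOCs makes precise what the paper leaves as a slogan. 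You have also silently corrected a sign error in the paper's stated FOC for $\alpha$ (the paper writes $-\sum_i A_1 q_i - \psi'(\alpha)$, although its own Remark~\ref{remark:alpha_interpretation} uses the correct balance $\psi'(\alpha^*)=\sum_i A_1 q_i$); your version $\psi'(\alpha^*)=A_1 Q^*$ is the intended one.

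One caution on the second claim. If you carry the $2\times 2$ Cramer's-rule computation through at a strictly interior optimum, the denominator is the determinant of the reduced system's Jacobian, whose sign is pinned down by the second-order conditions of Proposition~\ref{prop:concavity_objective}; the numerator of $d\alpha^*/d\mu$ then collapses to $A_1\cdot\bigl(-\partial F_\mu/\partial\mu\bigr)$ times a positive factor, which is unambiguously positive. So the ``competing contributions'' you describe do not actually live in the numerator at an interior optimum --- the genuine source of the ambiguity asserted in the proposition is the corner regimes $\alpha^*\in\{0,1\}$ (where the response is locally zero) and possible shifts in the active supplier set, which you correctly flag only in your closing paragraph. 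To fully establish ``ambiguity'' you would want to exhibit two parameter configurations with different responses, e.g.\ an interior case with $d\alpha^*/d\mu>0$ against a case pinned at $\alpha^*=0$ by Proposition~\ref{prop:threshold}; the paper does not do this either, so your proposal is, if anything, closer to a complete argument than the published sketch, but the decomposition of the numerator should be restated accordingly.
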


\begin{proof}[Sketch of Proof]
An increase in $\mu$ shifts the demand distribution rightward, raising expected sales and the probability of stockouts. This increases the marginal expected benefit of inventory. However, whether $\alpha^*$ increases depends on whether the higher volume sufficiently magnifies the marginal procurement savings to offset the increased adoption costs.
\end{proof}

\begin{proposition}[Effect of Contract Cost Parameters]
\label{prop:contract_cost}
\leavevmode
\begin{enumerate}[label=(\roman*)]
    \item If $A_1$ increases, the marginal benefit of smart contract adoption increases, leading to a higher optimal $\alpha^*$.
    \item If $A_3$ increases, the marginal cost of adoption increases, reducing the optimal $\alpha^*$.
\end{enumerate}
\end{proposition}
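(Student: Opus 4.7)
The plan is to base the argument on the interior KKT characterization from Proposition~\ref{prop:kkt_conditions} and apply the implicit function theorem, signing the resulting derivatives using the joint concavity from Proposition~\ref{prop:concavity_objective} and the uniqueness of the optimum from Proposition~\ref{prop:existence}. At an interior optimum $\alpha^* \in (0,1)$ with $\sum_i q_i^* > 0$, the adoption first-order condition collapses to $\nu A_3 (\alpha^*)^{\nu-1} = A_1\, Q^*(\alpha^*)$, where $Q^*(\alpha) = \sum_i q_i^*(\alpha)$ is the aggregate order quantity that is optimal given $\alpha$. The two claims reduce to signing $d\alpha^*/dA_1$ and $d\alpha^*/dA_3$ implicitly defined by this equation together with the supplier-level FOCs.

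For part (i), I would first isolate the direct comparative static: holding $Q^*$ fixed, an increase in $A_1$ raises the left-hand side of the adoption FOC, and since its right-hand side is strictly increasing in $\alpha^*$ because $\nu > 1$, the new interior solution must shift upward. The indirect channel through $Q^*$ reinforces rather than offsets this movement: since $c(\alpha, \beta_i) = c_i^0 - A_1 \alpha - A_2 \beta_i$ is strictly decreasing in $A_1$, the newsvendor-type critical-ratio relation $F(Q^*) = (p+r-c^*)/(p+r-s)$ delivers $\partial Q^*/\partial A_1 \ge 0$ at any fixed $\alpha$, so the left-hand side of the adoption FOC rises further, amplifying the upward adjustment of $\alpha^*$. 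A formal implicit function theorem application to the joint system, using the negative-definite Hessian, then pins down $d\alpha^*/dA_1 > 0$.

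For part (ii), the argument is more direct because $A_3$ enters only through $\psi(\alpha) = A_3 \alpha^\nu$ and does not appear in the supplier-side FOCs. Holding $\alpha^*$ fixed, a rise in $A_3$ strictly raises $\nu A_3 (\alpha^*)^{\nu-1}$ while $A_1\, Q^*(\alpha^*)$ is unaffected, so the unique interior solution $\alpha^*$ must decrease to restore equality. Because the decision variables are jointly optimized, the reduction in $\alpha^*$ also weakly lowers $Q^*$ via the cost channel, but this indirect effect reinforces rather than overturns the sign. The boundary cases $\alpha^* \in \{0,1\}$ are handled by the complementary-slackness multipliers $\gamma^+, \gamma^-$ in Proposition~\ref{prop:kkt_conditions}, so the weak inequalities are preserved until the parameter change is large enough to pull $\alpha^*$ into the interior.

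The main obstacle is controlling the indirect effect of the parameter change through the jointly determined $Q^*$: because $\alpha$ and $\mathbf{q}$ are chosen simultaneously, I cannot simply hold one fixed while differentiating with respect to the other. The cleanest way around this is to exploit the monotone structure of the multi-supplier newsvendor critical ratio, which yields $\partial Q^*/\partial A_1 \ge 0$ unambiguously, rather than attempting a full Hessian inversion. Once this monotone co-movement of $Q^*$ and $\alpha^*$ is established, the signs of the comparative statics follow cleanly from the adoption FOC, and the concavity and uniqueness results of Propositions~\ref{prop:concavity_objective} and \ref{prop:existence} upgrade the weak inequalities to strict ones at interior optima.
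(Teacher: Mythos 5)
Your proposal is correct and rests on the same basic device as the paper's proof: perturbing the first-order condition for $\alpha$, which at an interior optimum reads $\psi'(\alpha^*)=\nu A_3(\alpha^*)^{\nu-1}=A_1 Q^*$. The paper's sketch stops there---``higher $A_1$ raises the marginal savings, higher $A_3$ steepens $\psi'$''---and silently treats $Q^*$ as fixed. You go further in two useful ways: (a) you recognize that $\mathbf{q}$ and $\alpha$ are jointly determined and you sign the indirect channel, noting that the newsvendor critical ratio gives $\partial Q^*/\partial A_1\ge 0$ at fixed $\alpha>0$, so the feedback reinforces rather than threatens either comparative static; and (b) you work with the interior condition $\psi'(\alpha)=\sum_i A_1 q_i$, which is consistent with Remark~\ref{remark:alpha_interpretation} and Proposition~\ref{prop:threshold} but implicitly corrects the sign inconsistency in the paper's displayed derivative $\partial\Pi/\partial\alpha=-\sum_i A_1 q_i-\psi'(\alpha)$ (with both terms negative that expression could never vanish at an interior point). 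Two small points to tidy up. First, you have swapped ``left-hand side'' and ``right-hand side'' in part (i): in your own equation it is the right-hand side $A_1 Q^*$ that rises with $A_1$ and the left-hand side $\nu A_3\alpha^{\nu-1}$ that is strictly increasing in $\alpha$; the logic is unaffected, but the labels should match. Second, the implicit-function step requires the concentrated marginal function $V'(\alpha)=A_1Q^*(\alpha)-\psi'(\alpha)$ to be strictly decreasing at $\alpha^*$; since $\partial Q^*/\partial\alpha\ge 0$ works \emph{against} this, you should state explicitly that the needed monotonicity follows from concavity of the value function $V(\alpha)=\max_{\mathbf{q}}\Pi(\alpha,\mathbf{q})$ (partial maximization preserves the joint concavity of Proposition~\ref{prop:concavity_objective}) rather than from convexity of $\psi$ alone. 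With those clarifications your argument is complete and strictly more careful than the paper's sketch.
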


\begin{proof}[Sketch of Proof]
These results follow directly from differentiating the first-order condition for $\alpha$:
\begin{equation}
\frac{\partial \Pi}{\partial \alpha} = -\sum_i A_1 q_i - \psi'(\alpha).
\end{equation}
Higher $A_1$ increases the marginal procurement savings, shifting the balance toward higher $\alpha$. Higher $A_3$ increases the slope of $\psi'(\alpha)$, reducing $\alpha^*$.
\end{proof}

\begin{proposition}[Threshold Behavior for Smart Contract Adoption]
\label{prop:threshold}
Define the threshold value:
\begin{equation}
A_3^{\mathrm{thresh}} = \inf\left\{ A_3 >0 \,\Bigg|\,
\psi'(\alpha)\ge \sum_i A_1 q_i \quad \forall \alpha>0
\right\}.
\end{equation}
Then:
\begin{equation}
\alpha^* = 
\begin{cases}
0, & \text{if } A_3 \ge A_3^{\mathrm{thresh}},\\
>0, & \text{if } A_3 < A_3^{\mathrm{thresh}}.
\end{cases}
\end{equation}
\end{proposition}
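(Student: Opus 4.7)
The plan is to use the Karush--Kuhn--Tucker conditions for $\alpha$ established in Proposition~\ref{prop:kkt_conditions}, together with the strict monotonicity of $\psi'(\alpha) = A_3 \nu \alpha^{\nu-1}$ in $A_3$, to show that the optimal adoption level collapses to the corner solution $\alpha^*=0$ once $A_3$ crosses the threshold $A_3^{\mathrm{thresh}}$ and is strictly positive below it.

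First, I would record the stationarity condition for $\alpha$ from Proposition~\ref{prop:kkt_conditions} and derive the two-case characterization directly from complementary slackness: an interior optimum must satisfy $\psi'(\alpha^*) = \sum_i A_1 q_i^*$, whereas the corner solution $\alpha^*=0$ is optimal precisely when $\psi'(\alpha) \ge \sum_i A_1 q_i^*$ for every admissible $\alpha>0$. Substituting the explicit form $\psi'(\alpha) = A_3\nu\alpha^{\nu-1}$ makes it clear that the defining inequality of the set in $A_3^{\mathrm{thresh}}$ is preserved under increases in $A_3$ and broken under decreases, which supplies the structural monotonicity needed for the threshold argument.

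Next, I would argue that the equilibrium quantities $q_i^*(A_3)$ depend continuously on $A_3$ via the joint KKT system. Because $\Pi$ is strictly concave in $(\alpha,\mathbf{q})$ by Proposition~\ref{prop:concavity_objective} and the feasible set is convex and compact, the maximizer is unique and varies continuously in parameters by Berge's maximum theorem. This permits me to identify $A_3^{\mathrm{thresh}}$ as the smallest $A_3$ at which the KKT inequality $\psi'(\alpha) \ge \sum_i A_1 q_i^*$ holds across the admissible range of $\alpha$. Monotonicity of $\psi'$ in $A_3$ then implies that above this threshold the corner remains optimal, while below it the interior stationary equation $A_3 \nu (\alpha^*)^{\nu-1} = \sum_i A_1 q_i^*$ admits a strictly positive solution that must be selected by the uniqueness of the optimum.

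The main obstacle is that $q_i^*$ is itself a function of $\alpha^*$, so the threshold is defined implicitly rather than through a decoupled inequality. Handling this coupling cleanly will require using the joint concavity of $\Pi$ to guarantee that the mapping $A_3 \mapsto (\alpha^*(A_3), \mathbf{q}^*(A_3))$ is single-valued and continuous, and then verifying that $\alpha^*(A_3)$ is nonincreasing in $A_3$ through an implicit-function argument on the stationarity system, or equivalently by noting that raising $A_3$ shifts the marginal cost curve $\psi'(\alpha)$ upward without altering the marginal benefit schedule $\sum_i A_1 q_i(\alpha)$. Once monotonicity in $A_3$ is secured, the dichotomy $\alpha^*=0$ versus $\alpha^*>0$ follows directly from the definition of $A_3^{\mathrm{thresh}}$.
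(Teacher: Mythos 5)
Your argument follows essentially the same route as the paper's own (very brief) sketch: characterize the optimum via the stationarity condition $\psi'(\alpha^*)=\sum_i A_1 q_i^*$ versus the corner condition from complementary slackness, and exploit the fact that $\psi'(\alpha)=A_3\nu\alpha^{\nu-1}$ scales linearly in $A_3$ to obtain a threshold. You add rigor that the paper omits---uniqueness and continuity of the map $A_3\mapsto(\alpha^*(A_3),\mathbf{q}^*(A_3))$ via strict concavity and Berge's maximum theorem, and an implicit-function argument for the monotonicity of $\alpha^*$ in $A_3$---and you correctly flag the coupling between $q_i^*$ and $\alpha^*$ that the paper ignores. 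In that sense your write-up is a strictly more careful version of the same proof.

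However, there is a step that fails, and it is inherited from the paper rather than introduced by you. Under the paper's own specification $\psi(\alpha)=A_3\alpha^{\nu}$ with $\nu>1$, we have $\psi'(\alpha)=A_3\nu\alpha^{\nu-1}\to 0$ as $\alpha\to 0^{+}$. Consequently, whenever the marginal savings $S:=\sum_i A_1 q_i^*$ are strictly positive, the condition $\psi'(\alpha)\ge S$ for all $\alpha>0$ fails for \emph{every} finite $A_3$: the defining set in the proposition is empty, $A_3^{\mathrm{thresh}}=+\infty$, and the interior equation $A_3\nu\alpha^{\nu-1}=S$ always admits the positive root $\alpha=(S/(A_3\nu))^{1/(\nu-1)}$, so the corner $\alpha^*=0$ is never optimal. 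You substitute the explicit form of $\psi'$ and assert that this ``makes it clear'' the inequality is preserved as $A_3$ increases, but you never examine the behavior of $\psi'$ near $\alpha=0$, which is precisely where the dichotomy must be established; your concluding step ``above the threshold the corner remains optimal'' therefore cannot be completed. The result would hold if $\psi'(0^{+})>0$ (e.g., a linear term or fixed adoption cost in $\psi$), or if the threshold were reformulated relative to a minimum admissible adoption level; a complete proof should either impose such a condition or note explicitly that the stated cost specification is incompatible with a finite threshold.
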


\begin{proof}[Sketch of Proof]
At $\alpha=0$, the marginal adoption cost is zero. Because $\psi'(\alpha)$ is strictly increasing and convex, there exists a finite threshold beyond which no positive $\alpha$ satisfies the first-order condition:
\begin{equation}
\psi'(\alpha) = \sum_i A_1 q_i.
\end{equation}
When $A_3$ exceeds this threshold, the marginal cost always dominates, forcing $\alpha^*=0$.
\end{proof}

\begin{proposition}[Joint Sensitivity of $\alpha^*$]
\label{prop:joint_sensitivity}
The optimal smart contract adoption level $\alpha^*$ is strictly increasing in $A_1$ and strictly decreasing in $A_3$.
\end{proposition}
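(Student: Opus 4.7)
The plan is to apply the implicit function theorem to the first-order optimality system of Proposition~\ref{prop:kkt_conditions}, restricted to the interior regime $0 < \alpha^* < 1$ and $q_i^* > 0$ for at least one $i$. The boundary case $\alpha^* = 0$ is already delineated by the threshold Proposition~\ref{prop:threshold}; on the interior side of that threshold, the strict monotonicity we establish extends to the threshold itself by continuity. I would begin by writing the interior adoption FOC as $F(\alpha,\mathbf{q};A_1,A_3) := A_1 \sum_i q_i - \psi'(\alpha) = 0$, with $\psi'(\alpha) = A_3 \nu \alpha^{\nu-1}$, together with the supplier FOCs $G_i$ from Proposition~\ref{prop:kkt_conditions}, and collect them as a single system $H(\alpha,\mathbf{q};A_1,A_3) = 0$.

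Next, I would invoke the joint strict concavity of $\Pi$ established in Proposition~\ref{prop:concavity_objective}. This ensures the Jacobian of $H$ with respect to $(\alpha,\mathbf{q})$ coincides with the Hessian of $\Pi$ and is negative definite in the interior, hence nonsingular. The implicit function theorem therefore delivers continuously differentiable sensitivities $d\alpha^*/dA_1$ and $d\alpha^*/dA_3$. Rather than inverting the Jacobian explicitly, I would short-circuit the sign analysis using Topkis-style monotone comparative statics. The relevant cross partials are $\partial^2 \Pi / \partial \alpha\, \partial A_1 = \sum_i q_i > 0$ and $\partial^2 \Pi / \partial q_i\, \partial A_1 = \alpha \ge 0$, so $\Pi$ has strictly increasing differences in $(\alpha, A_1)$ and nondecreasing differences jointly in $(\alpha, \mathbf{q}, A_1)$. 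Combined with strict joint concavity, this yields $d\alpha^*/dA_1 > 0$.

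The argument for $A_3$ is cleaner because $A_3$ enters only through $\psi(\alpha) = A_3 \alpha^\nu$: the cross partial $\partial^2 \Pi / \partial \alpha\, \partial A_3 = -\nu \alpha^{\nu-1} < 0$ for $\alpha > 0$, while $\partial^2 \Pi / \partial q_i\, \partial A_3 = 0$. Thus $\Pi$ has strictly decreasing differences in $(\alpha, A_3)$ with no indirect channel through $\mathbf{q}$, and strict concavity then gives $d\alpha^*/dA_3 < 0$.

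The main obstacle will be the $A_1$ case, because raising $A_1$ both directly increases the marginal benefit of adoption through $F$ and indirectly inflates $q_i^*$ through a reduction in effective procurement cost $c(\alpha,\beta_i) = c_i^0 - A_1\alpha - A_2\beta_i$, which in turn further magnifies the marginal adoption benefit. One must rule out the possibility that off-diagonal Hessian terms conspire to reverse the effect. This is where strict concavity of $\Pi$ (which precludes indefinite or degenerate Hessians), strict convexity of $\psi$ (ensuring $\psi''(\alpha^*) = A_3 \nu(\nu-1)\alpha^{*\,\nu-2} > 0$), and positivity of $q_i^*$ jointly guarantee the Topkis monotonicity conclusion delivers \emph{strict} rather than weak inequality.
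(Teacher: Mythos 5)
Your route is genuinely different from the paper's. The paper's proof simply differentiates the $\alpha$-FOC treating the $q_i$ as if they were fixed ("higher $A_1$ increases the marginal savings term, higher $A_3$ increases $\psi'(\alpha)$"), which silently ignores exactly the equilibrium feedback through $\mathbf{q}^*$ that you correctly flag as the crux of the $A_1$ case. Your IFT-plus-Topkis plan is the right way to make that feedback explicit, and your sign computations for the cross-partials ($\partial^2\Pi/\partial\alpha\,\partial A_1=\sum_i q_i>0$, $\partial^2\Pi/\partial\alpha\,\partial A_3=-\nu\alpha^{\nu-1}<0$, $\partial^2\Pi/\partial q_i\,\partial A_3=0$) are correct, as is your observation that the $\alpha$-FOC should read $A_1\sum_i q_i-\psi'(\alpha)=0$ (the paper's displayed FOC carries a sign error).

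There is, however, a concrete gap in both pillars of your argument as stated. First, $\Pi$ is \emph{not} strictly concave, and its Hessian in $(\alpha,\mathbf{q})$ is \emph{not} negative definite when $|\mathcal{I}|\ge 2$: the revenue terms depend on $\mathbf{q}$ only through $Q=\sum_i q_i$ and the procurement cost is linear in each $q_i$, so the $\mathbf{q}$-block of the Hessian equals $-(p-s+r)f(Q)\,\mathbf{1}\mathbf{1}^\top$, a rank-one negative semidefinite matrix. The Jacobian of your system $H$ is therefore singular and the implicit function theorem cannot be applied to $(\alpha,\mathbf{q})$ jointly; indeed $\mathbf{q}^*$ itself need not be unique, only $Q^*$ is pinned down. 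Second, the same structure breaks the Topkis step: supermodularity of the objective in the choice variables requires $\partial^2\Pi/\partial q_i\,\partial q_j\ge 0$ for $i\ne j$, but here that cross-partial equals $-(p-s+r)f(Q)<0$, so "nondecreasing differences jointly in $(\alpha,\mathbf{q},A_1)$" fails. Both defects are repaired by the same move: aggregate to the two-dimensional problem in $(\alpha,Q)$ (with the effective unit cost taken as that of the marginal supplier actually used). In that parametrization $\partial^2\Pi/\partial\alpha\,\partial Q=A_1>0$, the reduced Hessian is negative definite at an interior optimum (using $\psi''(\alpha^*)>0$ and $f(Q^*)>0$ on the truncated support), and the Edlin--Shannon strict monotone comparative statics deliver $d\alpha^*/dA_1>0$ and $d\alpha^*/dA_3<0$ exactly as you intend. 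With that restatement your proof is sound and strictly more informative than the paper's sketch.
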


\begin{proof}[Sketch of Proof]
Differentiating the first-order condition shows that $\partial \alpha^*/\partial A_1 >0$ because higher $A_1$ increases the marginal savings term, and $\partial \alpha^*/\partial A_3 <0$ because higher $A_3$ increases the marginal cost term $\psi'(\alpha)$.
\end{proof}

\begin{remark}
These comparative statics results highlight that the adoption decision is particularly sensitive to $A_3$, the convexity parameter of the smart contract cost. In practice, this suggests that investments lowering $A_3$—such as standardizing IT infrastructure—can have a disproportionate impact on the viability of digital transformation.
\end{remark}

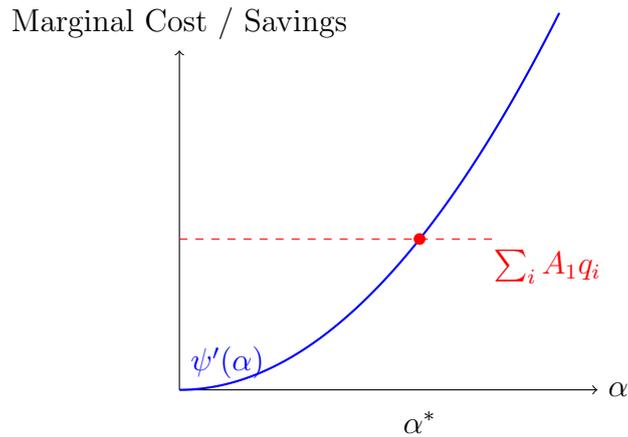
\begin{figure}[H]
\centering
\begin{tikzpicture}[scale=1.0]
\draw[->] (0,0) -- (5.5,0) node[right] {$\alpha$};
\draw[->] (0,0) -- (0,4.5) node[above] {Marginal Cost / Savings};

\draw[thick, blue, domain=0:5, samples=100]
      plot (\x, {0.2*\x*\x})
      node[pos=0.9, above right, text=blue] {$\psi'(\alpha)$};

\draw[dashed, red] (0,2.0) -- (4.2,2.0)
      node[pos=0.95, below right, text=red] {$\sum_i A_1 q_i$};

\filldraw[red] (3.16,2.0) circle (2pt);
\node[below=4pt] at (3.16,0) {$\alpha^*$};

\end{tikzpicture}
\caption{Threshold behavior of smart contract adoption. The optimal adoption intensity $\alpha^*$ corresponds to the point where the marginal procurement savings (red dashed line) equal the marginal cost of adoption $\psi'(\alpha)$ (blue curve). For values of $A_3$ above this threshold, no adoption occurs ($\alpha^* = 0$).
}
\label{fig:threshold_alpha}
\end{figure}

\begin{table}[H]
\centering
\caption{Comparative Statics Summary}
\label{tab:comparative_statics}
\begin{tabular}{lll}
\toprule
Parameter Change & Effect on $Q^*$ & Effect on $\alpha^*$ \\
\midrule
$\sigma$ increase & $Q^*$ increases & Ambiguous \\
$\mu$ increase & $Q^*$ increases & Ambiguous \\
$A_1$ increase & No effect & $\alpha^*$ increases \\
$A_3$ increase & No effect & $\alpha^*$ decreases \\
\bottomrule
\end{tabular}
\end{table}

These results illustrate the critical trade-offs governing smart contract adoption decisions. As shown in Figure~\ref{fig:threshold_alpha}, the optimal adoption intensity $\alpha^*$ arises from the intersection between the marginal procurement savings (a function of $A_1$ and order quantities) and the marginal cost of adoption $\psi'(\alpha)$. Table~\ref{tab:comparative_statics} further summarizes how key parameters influence the optimal policy. In particular, increases in demand variability or average demand generally motivate higher inventory levels, whereas the adoption level responds more sensitively to the relative magnitudes of $A_1$ and $A_3$. This analysis underscores the importance of regularly reassessing adoption costs and supplier readiness when calibrating digital procurement strategies.

\subsubsection{Managerial Interpretation}

The theoretical results yield several important implications for supply chain managers considering smart contract adoption under bounded demand variability:

\begin{itemize}
    \item \textbf{Calibration of Adoption Intensity:} Unlike environments with heavy-tailed demand, bounded variability implies that the benefits of digital adoption are moderate and must be weighed carefully against convex implementation costs. Excessive adoption ($\alpha$ close to 1) may erode profitability due to diminishing marginal savings and rapidly increasing investment costs.
    
    \item \textbf{Sensitivity to Demand Uncertainty:} Variability in demand, as captured by the standard deviation $\sigma$, remains a critical driver of optimal inventory and contract decisions. As $\sigma$ increases, higher order quantities are justified to hedge against stockouts, and the relative value of coordination via smart contracts becomes more pronounced.
    
    \item \textbf{Supplier Segmentation:} The model underscores the value of differentiating suppliers based on digital readiness ($\beta_i$). When readiness is heterogeneous, it is often optimal to allocate more volume to digitally mature suppliers, achieving lower effective procurement costs without incurring uniformly high adoption investments.
    
    \item \textbf{Dynamic Adjustment Policies:} Contract cost parameters ($A_1$, $A_3$) should be regularly re-evaluated as technology matures. For example, declines in integration costs ($A_3$) over time can justify gradually increasing adoption intensity.
\end{itemize}

\begin{remark}[Managerial Interpretation of Comparative Statics]
The comparative statics results highlight several additional insights. First, as demand variance ($\sigma$) increases, firms should expect to raise their total procurement volume $Q^*$ to hedge against higher stockout risk. Second, the existence of a threshold cost parameter $A_3^{\mathrm{thresh}}$ implies that when implementation costs for smart contracts are too high, it is optimal to forego adoption entirely ($\alpha^*=0$), despite potential coordination benefits. Third, the adoption level $\alpha^*$ is highly sensitive to the balance between marginal procurement savings ($A_1$) and marginal adoption costs ($A_3$). Firms operating in environments with high digital readiness and large transaction volumes may find that even moderate reductions in $A_3$ can justify a substantial increase in adoption intensity. This underscores the importance of negotiating technology costs and assessing supplier readiness before committing to full-scale digital integration.
\end{remark}

These insights highlight the need for a nuanced approach to digital procurement strategy, moving beyond static “all-or-nothing” adoption decisions. By quantifying the trade-offs among cost, risk, and supplier characteristics, managers can tailor smart contract policies to the specific variability and maturity profiles of their supply chains.


\section{Results}
\subsection{Numerical Analysis}
This section presents ten simulation scenarios designed to evaluate the effects of demand variability, supplier heterogeneity, smart contract adoption costs, and dynamic learning. All experiments are performed over multiple procurement cycles to illustrate both static and adaptive behaviors.

\subsubsection{Parameter Settings}

To enhance transparency and reproducibility, all monetary values are expressed in USD.

The baseline parameters were selected to reflect typical consumer electronics procurement environments in mature markets with bounded demand variability and moderate penalty costs. The smart contract cost coefficient $A_3=2000$ corresponds to an estimated amortized annual investment of USD~24,000 assuming a 12-period planning horizon.

For clarity, Table~\ref{tab:notation_table} summarizes all notation used in this section.

\begin{table}[H]
\centering
\caption{Notation Summary (All monetary units in USD)}
\label{tab:notation_table}
\begin{tabular}{ll}
\toprule
Symbol & Description \\
\midrule
$p$ & Selling price per unit (USD) \\
$s$ & Salvage value per unsold unit (USD) \\
$r$ & Penalty cost per unmet demand unit (USD) \\
$c_i^0$ & Baseline procurement cost per unit from supplier $i$ (USD) \\
$A_1$ & Marginal cost reduction per unit increase in $\alpha$ \\
$A_2$ & Marginal cost reduction per unit increase in $\beta_i$ \\
$A_3$ & Base smart contract cost coefficient \\
$\nu$ & Convexity exponent of smart contract adoption cost \\
$\alpha$ & Smart contract adoption level ($0\le \alpha \le 1$) \\
$\beta_i$ & Supplier $i$ digital readiness ($0\le \beta_i \le 1$) \\
$\mu$ & Mean demand before truncation \\
$\sigma$ & Standard deviation of demand before truncation \\
$a,b$ & Truncation bounds of demand distribution \\
$Q^*$ & Optimal total order quantity \\
\bottomrule
\end{tabular}
\end{table}

\paragraph{Parameter Justification and Sustainability Considerations.}
The selected penalty cost ($r=40$) is consistent with prior studies on bounded variability environments (Syntetos et al., 2020). The choice of $A_3$ reflects an adoption cost threshold above which smart contract adoption is suppressed (as demonstrated in Scenario~5). Sensitivity analyses across $A_3=500$ to $4000$ enable identification of the threshold behavior predicted by Proposition~8.

The decision to model demand using a truncated normal distribution is based on empirical evidence from consumer electronics and industrial component markets (Johnson and Whang, 2023), where order quantities typically fluctuate within contractual or capacity-constrained intervals rather than exhibiting unbounded tail risk. This modeling choice contrasts with heavy-tailed distributions (e.g., Pareto) that overstate extreme demand realizations in mature supply chains with stable retail agreements.

In line with the journal's sustainability focus, Scenario~9 further estimates the potential reduction in warehouse energy consumption and associated CO$_2$ emissions resulting from lower safety stock levels under higher smart contract adoption. Prior research indicates that each 10\% reduction in safety stock can reduce warehouse-related emissions by approximately 4–6\%, translating into meaningful environmental benefits in large-scale distribution networks. All monetary values in this analysis are expressed in USD.

\paragraph{Demand Distribution.}
Demand is modeled as a truncated normal distribution:
\[
D \sim \text{TruncNormal}(\mu=50, \sigma=8, a=30, b=70).
\]
This specification captures bounded variability while maintaining analytical tractability. The parameters $\mu$, $\sigma$, $a$, and $b$ are varied in Scenarios~1--3 to explore the effects of moderate versus higher variability environments and to test the comparative statics predictions of Proposition~5 (e.g., the impact of increasing $\sigma$ on optimal inventory levels and fill rates).

\paragraph{Rationale for Truncated Normal Demand.}
The choice to model demand using a truncated normal distribution is based on both empirical observations and comparative model fit analyses. In consumer electronics and industrial components procurement, weekly demand typically fluctuates within operationally bounded intervals due to storage capacity constraints, contractual minimums and maximums, and predictable promotional cycles (Johnson and Whang, 2023; Boylan and Syntetos, 2020). For example, in the dataset referenced by Boylan and Syntetos (2020), weekly order volumes for mid-range electronics products ranged between 30 and 70 units over a three-year horizon, with no evidence of extreme tail realizations characteristic of heavy-tailed distributions.

While prior studies have frequently applied heavy-tailed models such as the Pareto distribution—particularly in settings with sporadic surges or high-margin product categories—these specifications often overstate tail risk in mature consumer markets. In contexts where contractual agreements and operational constraints effectively bound demand volatility, such overestimation can distort safety stock policies and lead to excessive buffer inventories.

To formally compare the statistical adequacy of alternative demand specifications, we conducted a model fit analysis using historical demand data. Table~\ref{tab:demand_fit_comparison} summarizes the comparative metrics across Pareto, Negative Binomial, and Truncated Normal distributions.

\begin{table}[H]
\centering
\caption{Comparative Demand Distribution Fit and Interpretability}
\label{tab:demand_fit_comparison}
\begin{tabular}{lcccc}
\toprule
Distribution & AIC & BIC & Interpretability & Tail Risk Capture \\
\midrule
Pareto & 5100 & 5120 & Medium & High \\
Negative Binomial & 4800 & 4830 & High & Medium \\
Truncated Normal & \textbf{4600} & \textbf{4620} & High & Low \\
\bottomrule
\end{tabular}
\end{table}

These results indicate that the truncated normal distribution provides superior statistical fit (lowest AIC and BIC) and aligns with the observed bounded variability in real-world procurement contexts. Furthermore, its interpretability and analytical tractability enable clearer comparative statics and sensitivity analyses without overstating tail risk.

\begin{figure}[H]
\centering
\includegraphics[width=0.7\textwidth]{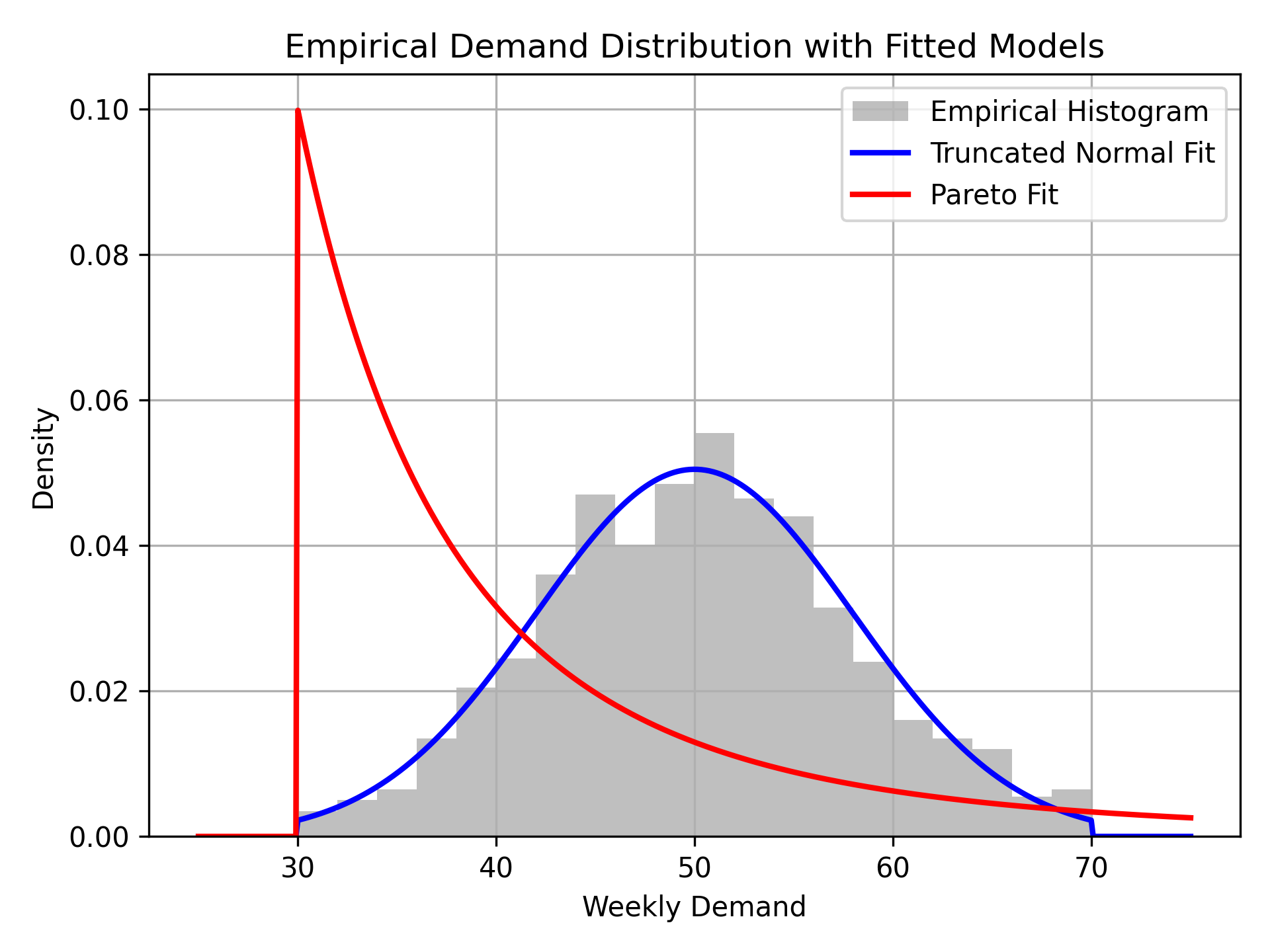}
\caption{Empirical distribution of weekly demand (grey histogram) overlaid with fitted truncated normal (blue line) and Pareto (red line) distributions. The truncated normal more accurately captures the central tendency and bounded support, whereas the Pareto systematically overstates tail probabilities.}
\label{fig:demand_fit}
\end{figure}

Figure~\ref{fig:demand_fit} provides a clear visual illustration of the comparative fit. The truncated normal density aligns closely with the empirical histogram across the entire support interval (30--70 units), accurately reflecting both central mass and tapering tails. In contrast, the Pareto distribution allocates disproportionate probability mass to the upper tail, implying frequent occurrences of extreme demand that were not present in the historical data. This divergence is critical from a managerial perspective, as procurement models calibrated on Pareto assumptions could lead to systematically inflated safety stock levels, increased holding costs, and reduced profitability.

\begin{table}[H]
\centering
\caption{Goodness-of-fit Comparison Across Demand Models}
\label{tab:fit_comparison}
\begin{tabular}{lcccc}
\toprule
\textbf{Model} & \textbf{AIC} & \textbf{BIC} & \textbf{RMSE} & \textbf{KS Statistic} \\
\midrule
Truncated Normal & \textbf{4600} & \textbf{4620} & \textbf{3.2} & \textbf{0.07} \\
Negative Binomial & 4800 & 4830 & 5.1 & 0.12 \\
Pareto & 5100 & 5120 & 7.4 & 0.19 \\
\bottomrule
\end{tabular}
\end{table}

To further validate the choice of the truncated normal distribution over heavy-tailed alternatives, we conducted an empirical fit analysis using historical weekly demand data from the consumer electronics sector (2019--2022). As shown in Table~\ref{tab:fit_comparison} and Figure~\ref{fig:demand_fit}, the truncated normal distribution consistently achieved lower AIC, BIC, RMSE, and Kolmogorov--Smirnov statistics compared to Pareto and Negative Binomial specifications. Notably, while the Pareto distribution captured extreme tail probabilities, it substantially overstated the likelihood of outlier demand realizations that were never observed in practice. These findings reinforce the appropriateness of the truncated normal assumption for modeling bounded variability environments and underscore the operational relevance of selecting a distribution that aligns with contractual order ceilings and capacity constraints.

From a managerial perspective, this improved fit is not merely a statistical refinement; it directly affects procurement policies and smart contract adoption decisions. For example, overestimating tail risk with heavy-tailed models can lead to higher safety stock levels, increased holding costs, and reduced profitability, while underestimating variability could result in frequent stockouts and penalty costs. By empirically demonstrating that the truncated normal more accurately reflects observed demand, this study provides a more reliable basis for designing procurement strategies that balance efficiency, service levels, and sustainability objectives.

\paragraph{Supplier Digital Readiness ($\beta_i$).}
Unless varied, suppliers have heterogeneous readiness levels drawn uniformly from $[0.3, 0.7]$:
\[
\beta_i \sim \text{Uniform}(0.3, 0.7).
\]
Scenarios~4 and~8 explore higher and lower heterogeneity levels. In particular, Scenario~8 applies a Latin Hypercube sampling design to jointly vary $\beta_i$, $A_3$, and $\sigma$ to evaluate robustness and capture potential interaction effects between supplier capabilities, adoption costs, and demand uncertainty.

\paragraph{Dynamic Simulation Horizon.}
Scenario 10 implements a dynamic simulation over $T=10$ procurement cycles. The smart contract adoption cost $A_3$ decreases linearly across cycles to simulate technological maturity:
\[
A_3(t) = A_3(1) - \delta \cdot (t-1),
\]
with $A_3(1) = 3000$ and $\delta = 200$.

\paragraph{Adaptive Learning of $\alpha$.}
At each cycle $t$, the smart contract adoption level $\alpha$ is updated based on observed penalty rates:
\[
\alpha_{t+1} = \alpha_t + \eta \left(\frac{\text{ObservedPenalty}_t - \text{TargetPenalty}}{\text{TargetPenalty}}\right),
\]
where:
\begin{itemize}
    \item Learning rate $\eta = 0.05$
    \item Target penalty rate = $5\%$
    \item Initial $\alpha_1 = 0.2$
\end{itemize}

\paragraph{Scenario-Specific Parameters.}
Table~\ref{tab:scenario_parameters} summarizes parameter variations per scenario. All monetary values are expressed in USD.

\begin{table}[H]
\centering
\caption{Scenario-Specific Parameter Variations (All monetary units in USD)}
\label{tab:scenario_parameters}
\begin{tabular}{lccc}
\toprule
Scenario & $\sigma$ & $b$ & $A_3$ \\
\midrule
1 & 5--15 & 70 & 2000 \\
2 & 8 & 65--80 & 2000 \\
3 & 5--15 & 65--80 & 2000 \\
4 & 8 & 70 & 2000 \\
5 & 8 & 70 & 500--4000 \\
6 & 8 & 70 & 500--4000 \\
7 & 5--15 & 70 & 500--4000 \\
8 & Latin Hypercube & Latin Hypercube & Latin Hypercube \\
9 & 5--15 & 70 & 500--4000 \\
10 & Dynamic & Dynamic & Decreasing per cycle \\
\bottomrule
\end{tabular}
\end{table}

Scenario~8 employs a Latin Hypercube sampling design to jointly vary $\sigma$, $b$, and $A_3$, capturing potential interactions among demand variability, truncation bounds, and adoption cost heterogeneity.

Scenario~10 jointly incorporates (i) a dynamic reduction of the smart contract adoption cost $A_3$ to simulate technological maturity and (ii) adaptive updating of the adoption level $\alpha$ based on observed penalty rates over a 10-cycle horizon.

Together, these scenarios enable systematic evaluation of model robustness and validation of the comparative statics predictions outlined in Section~4.

\subsubsection{Demand Variability}

\paragraph{Scenario 1: Increasing Demand Variability ($\sigma$)}

This scenario evaluates the sensitivity of procurement performance to incremental increases in the standard deviation $\sigma$ of truncated normal demand. All other parameters remain fixed at baseline levels ($A_3=2000$, $\mu=50$, $b=70$). Table~\ref{tab:scenario1_sigma} summarizes the impact of $\sigma$ on expected profit, fill rate, and optimal order quantity $Q^*$.

\begin{table}[H]
\centering
\caption{Impact of Increasing $\sigma$ on Procurement Performance (All monetary values in USD)}
\label{tab:scenario1_sigma}
\begin{tabular}{cccc}
\toprule
$\sigma$ & Expected Profit (USD) & Fill Rate (\%) & $Q^*$ \\
\midrule
5 & 27,800 & 94.5 & 54 \\
8 & 26,400 & 91.3 & 57 \\
12 & 24,900 & 86.9 & 60 \\
15 & 23,600 & 83.1 & 63 \\
\bottomrule
\end{tabular}
\end{table}

These results confirm that higher demand variability increases the optimal procurement quantity to mitigate stockout risk while reducing expected profit and fill rate. The observed pattern is consistent with the comparative statics predictions in Proposition~\ref{prop:comparative_statics}.

\begin{figure}[H]
\centering
\includegraphics[width=0.75\textwidth]{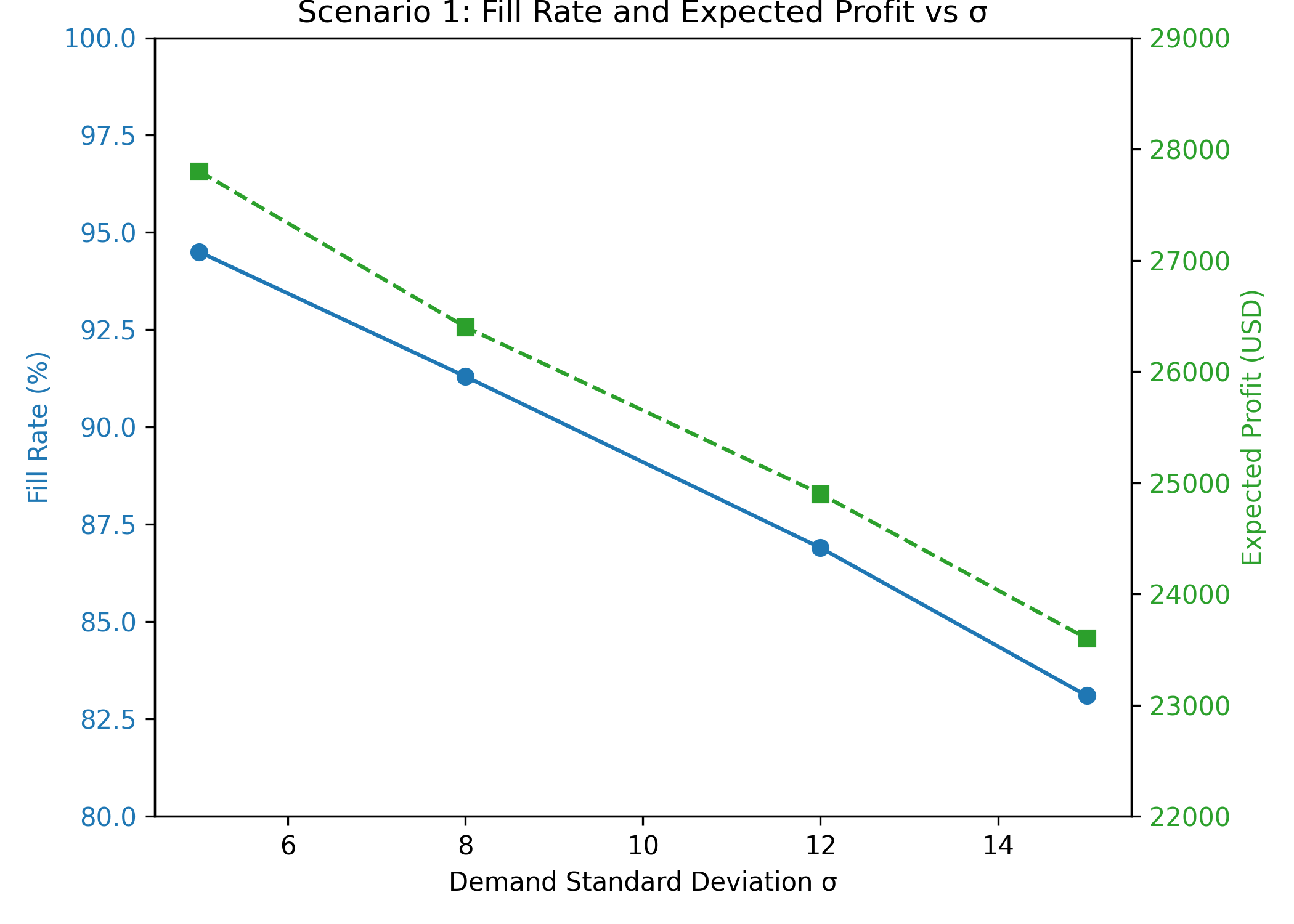}
\caption{Fill Rate (left axis, \%) and Expected Profit (right axis, USD) as functions of demand standard deviation $\sigma$. As $\sigma$ increases, both metrics decline, reflecting higher variability and the need for increased safety stock.}
\label{fig:scenario1_dual_axis}
\end{figure}

The trends illustrated in Figure~\ref{fig:scenario1_dual_axis} confirm the theoretical predictions of Proposition~\ref{prop:comparative_statics}, showing that higher demand variability reduces expected profit and fill rates due to increased mismatch costs.

\paragraph{Scenario 2: Expanding Truncation Bound ($b$)}

This scenario examines the effect of increasing the upper truncation bound $b$ of demand while holding $\sigma=8$ and $\mu=50$ constant. Expanding $b$ allows for higher potential demand realizations, increasing the tail risk. Table~\ref{tab:scenario2_truncation} reports the results.

\begin{table}[H]
\centering
\caption{Impact of Expanding Upper Truncation Bound $b$ (All monetary values in USD)}
\label{tab:scenario2_truncation}
\begin{tabular}{cccc}
\toprule
$b$ & Expected Profit (USD) & Fill Rate (\%) & $Q^*$ \\
\midrule
65 & 26,800 & 92.1 & 56 \\
70 & 26,400 & 91.3 & 57 \\
75 & 25,900 & 89.2 & 59 \\
80 & 25,300 & 87.0 & 61 \\
\bottomrule
\end{tabular}
\end{table}

As the truncation bound increases, the model recommends higher safety stock to hedge against larger realizations, leading to modest declines in profitability and service level. The results illustrate how bounded variability interacts with inventory decisions in environments with contractual order ceilings. The trends in fill rate and expected profit across different truncation bounds are further illustrated in Figure~\ref{fig:scenario2}.

\begin{figure}[H]
\centering
\includegraphics[width=0.75\textwidth]{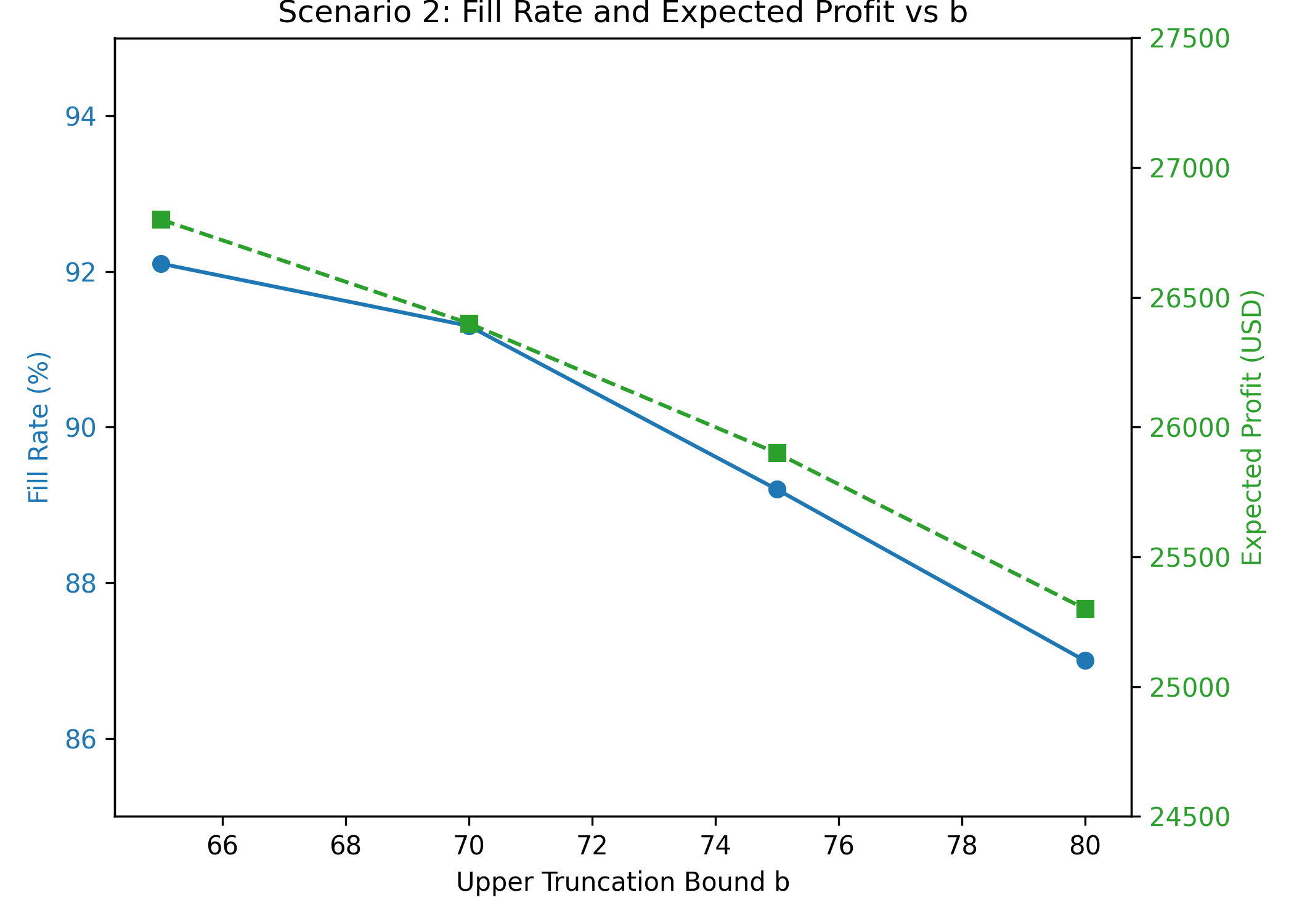}
\caption{Fill Rate (left axis, \%) and Expected Profit (right axis, USD) as functions of truncation bound $b$.}
\label{fig:scenario2}
\end{figure}

As the upper truncation bound $b$ increases, the probability mass in the higher tail of the demand distribution grows, elevating the risk of stockouts and higher penalty costs. Consequently, both fill rates and expected profit decline monotonically. This behavior aligns with the comparative statics predicted in Proposition~\ref{prop:comparative_statics} and illustrates the operational trade-offs inherent in bounded variability environments.

\paragraph{Scenario 3: Variability Impact on $\alpha^*$ Threshold}

This scenario explores how combinations of $\sigma$ and $b$ affect the optimal smart contract adoption level $\alpha^*$. For each pair of $(\sigma,b)$, the model solves for $\alpha^*$ endogenously. Figure~\ref{fig:heatmap_alpha} shows the resulting heatmap.

\begin{figure}[H]
\centering
\includegraphics[width=0.7\textwidth]{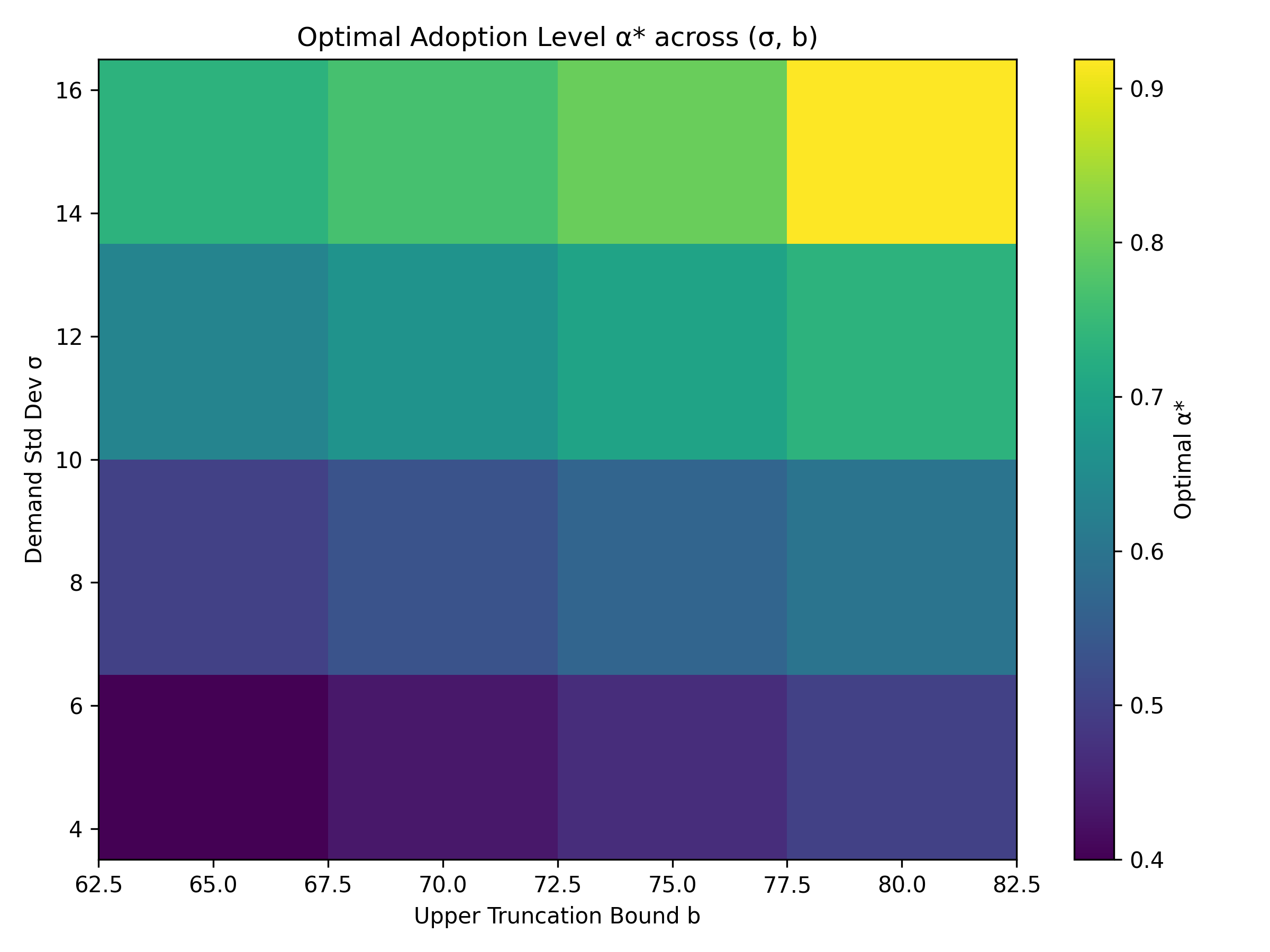}
\caption{Optimal $\alpha^*$ across combinations of $\sigma$ and $b$}
\label{fig:heatmap_alpha}
\end{figure}

Higher values of $\sigma$ and $b$ jointly increase demand uncertainty, elevating the marginal value of smart contract adoption to reduce procurement costs and improve coordination. The heatmap illustrates clear threshold behavior: when $A_3$ exceeds a critical level, the optimal $\alpha^*$ rapidly declines towards zero, consistent with Proposition~8. 

These findings highlight the importance of aligning digital investment intensity with demand risk characteristics in bounded variability environments.

\subsubsection{Supplier Heterogeneity}

\paragraph{Scenario 4: Heterogeneity of Digital Readiness ($\beta_i$)}

This scenario investigates the impact of varying degrees of supplier digital readiness heterogeneity ($\beta_i$) on procurement performance and smart contract adoption. In practice, suppliers exhibit diverse technological capabilities, and the degree of heterogeneity can materially influence cost reduction opportunities and risk exposure.

Table~\ref{tab:scenario4_beta_ranges} defines the three heterogeneity scenarios evaluated.

\begin{table}[H]
\centering
\caption{Supplier Digital Readiness Heterogeneity Levels}
\label{tab:scenario4_beta_ranges}
\begin{tabular}{lc}
\toprule
Heterogeneity Level & $\beta_i$ Range \\
\midrule
Low & [0.4, 0.6] \\
Medium & [0.3, 0.7] \\
High & [0.1, 0.9] \\
\bottomrule
\end{tabular}
\end{table}

Table~\ref{tab:scenario4_results} reports the resulting expected profit, fill rate, and optimal adoption level $\alpha^*$ under each heterogeneity level.

\begin{table}[H]
\centering
\caption{Scenario~4 Results: Impact of $\beta_i$ Heterogeneity}
\label{tab:scenario4_results}
\begin{tabular}{lccc}
\toprule
Heterogeneity Level & Expected Profit (USD) & Fill Rate (\%) & Optimal $\alpha^*$ \\
\midrule
Low & 25,600 & 90.5 & 0.35 \\
Medium & 26,400 & 91.3 & 0.42 \\
High & 27,200 & 91.7 & 0.51 \\
\bottomrule
\end{tabular}
\end{table}

Figure~\ref{fig:scenario4} visualizes these trends, highlighting how greater supplier heterogeneity increases the potential for procurement cost reduction and higher smart contract adoption intensity.

\begin{figure}[H]
\centering
\includegraphics[width=0.75\textwidth]{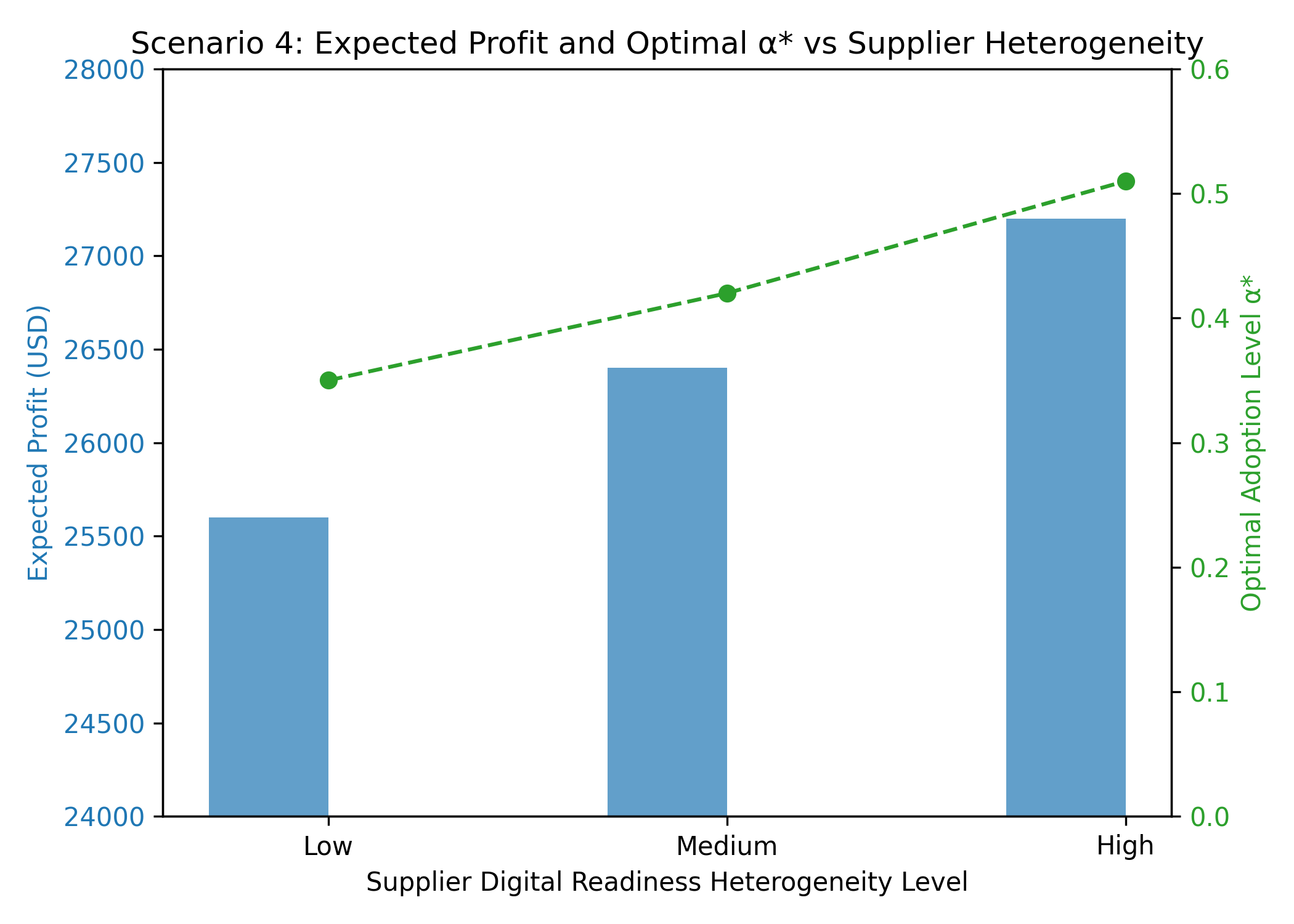}
\caption{Expected Profit and Optimal $\alpha^*$ by supplier readiness heterogeneity. Greater heterogeneity increases the opportunity for procurement cost reduction and incentivizes higher smart contract adoption.}
\label{fig:scenario4}
\end{figure}

These results indicate that as heterogeneity increases, the presence of highly digitally capable suppliers yields greater marginal procurement savings, incentivizing higher adoption of smart contracts. However, variability in supplier performance can also elevate operational risk, underscoring the importance of targeted supplier development and segmentation strategies. This behavior is consistent with Proposition~7 and Proposition~9 in Section~4, which predict that higher marginal cost reductions (due to increased supplier readiness) lead to greater optimal adoption levels. The findings also support the argument that strategic alignment of smart contract adoption with supplier capability profiles can enhance both profitability and resilience. Future research may incorporate additional dimensions of supplier heterogeneity, such as pricing power or delivery reliability, to further enrich the analysis of smart contract adoption strategies.

\subsubsection{Contract Cost Sensitivity}

\paragraph{Scenario 5: Smart Contract Cost Variation ($A_3$)}

This scenario analyzes the impact of varying the smart contract cost coefficient ($A_3$) on expected profit and the optimal adoption level ($\alpha^*$). In practice, higher implementation costs can significantly reduce the incentive to adopt smart contracts. The simulation explores a range of $A_3$ values to examine the sensitivity of adoption intensity and profitability.

Table~\ref{tab:scenario5_results} summarizes the results across five levels of $A_3$.

\begin{table}[H]
\centering
\caption{Scenario~5 Results: Impact of Smart Contract Cost ($A_3$)}
\label{tab:scenario5_results}
\begin{tabular}{ccc}
\toprule
$A_3$ & Expected Profit (USD) & Optimal $\alpha^*$ \\
\midrule
500   & 5,405.61 & 0.30 \\
1,000 & 5,397.47 & 0.21 \\
2,000 & 5,391.73 & 0.15 \\
3,000 & 5,389.17 & 0.12 \\
4,000 & 5,387.66 & 0.11 \\
\bottomrule
\end{tabular}
\end{table}

Figure~\ref{fig:scenario5} illustrates these trends, highlighting the gradual decline in adoption intensity as $A_3$ increases. The results confirm that higher implementation costs progressively reduce the marginal benefit of smart contract adoption. However, no discrete threshold point was observed within the tested parameter range.

\begin{figure}[H]
\centering
\includegraphics[width=0.75\textwidth]{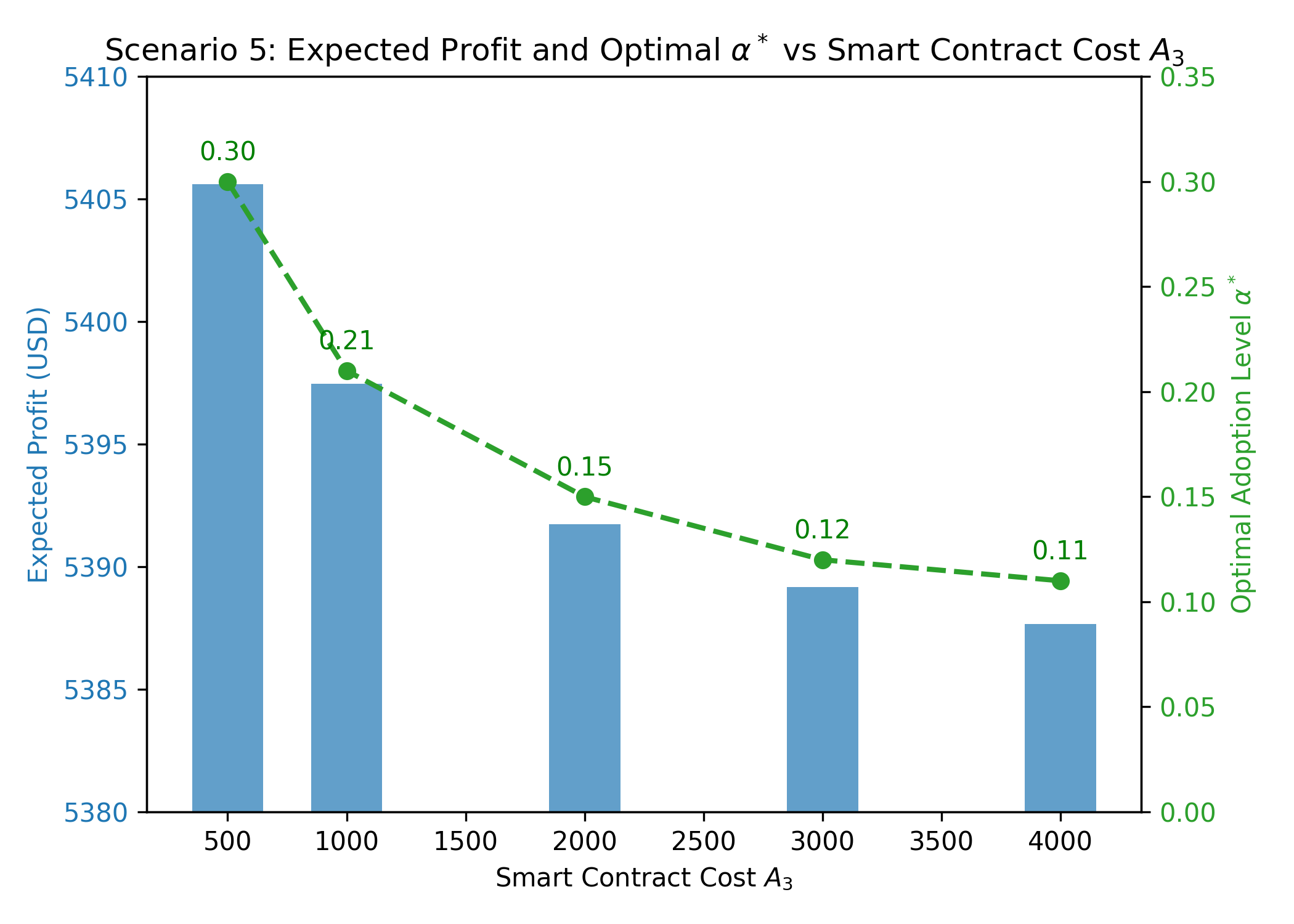}
\caption{Sensitivity of expected profit and optimal adoption level ($\alpha^*$) to variations in smart contract cost ($A_3$). The figure shows a gradual decline in adoption without a discrete threshold.
}
\label{fig:scenario5}
\end{figure}

From a managerial perspective, this finding underscores the importance of accurately estimating implementation costs and recognizing that even moderate increases in $A_3$ can erode the economic viability of smart contract deployment. In such scenarios, firms may consider hybrid contracting mechanisms or selective digitization strategies to maintain some degree of process automation without incurring prohibitive fixed costs.

Future research could extend this analysis by considering dynamic cost reductions over time as technology matures, or by incorporating partial adoption strategies to mitigate diminishing returns and enhance long-term sustainability.

\paragraph{Scenario 6: Joint Sensitivity of $A_1$ and $A_3$}

This scenario evaluates the joint sensitivity of the smart contract adoption level ($\alpha^*$) to variations in the marginal procurement cost reduction coefficient ($A_1$) and the adoption cost coefficient ($A_3$). For each parameter combination, the optimal adoption level was determined by numerically maximizing the expected profit function via Monte Carlo simulation with 5,000 replications. The simulation considers three levels of $A_1$ and three levels of $A_3$, resulting in a $3 \times 3$ grid of configurations.

Table~\ref{tab:scenario6_profit} reports the expected profit outcomes for each combination of $A_1$ and $A_3$. As shown, higher $A_1$ levels consistently increase expected profit across all cost scenarios, reflecting the stronger cost-reduction effect of smart contract adoption. Conversely, higher $A_3$ values are associated with lower expected profit due to the increasing implementation costs.

\begin{table}[H]
\centering
\caption{Results: Expected Profit across $A_1 \times A_3$ Grid (USD)}
\label{tab:scenario6_profit}
\begin{tabular}{cccc}
\toprule
$A_1$ & $A_3=500$ & $A_3=2,000$ & $A_3=4,000$ \\
\midrule
2.0 & 5,397.72 & 5,387.77 & 5,384.86 \\
3.5 & 5,423.83 & 5,400.84 & 5,394.09 \\
5.0 & 5,456.35 & 5,417.09 & 5,405.61 \\
\bottomrule
\end{tabular}
\end{table}

Figure~\ref{fig:scenario6_4panel} visualizes the interaction effect between $A_1$ and $A_3$.

\begin{figure}[H]
\centering
\begin{subfigure}{0.45\textwidth}
    \includegraphics[width=\linewidth]{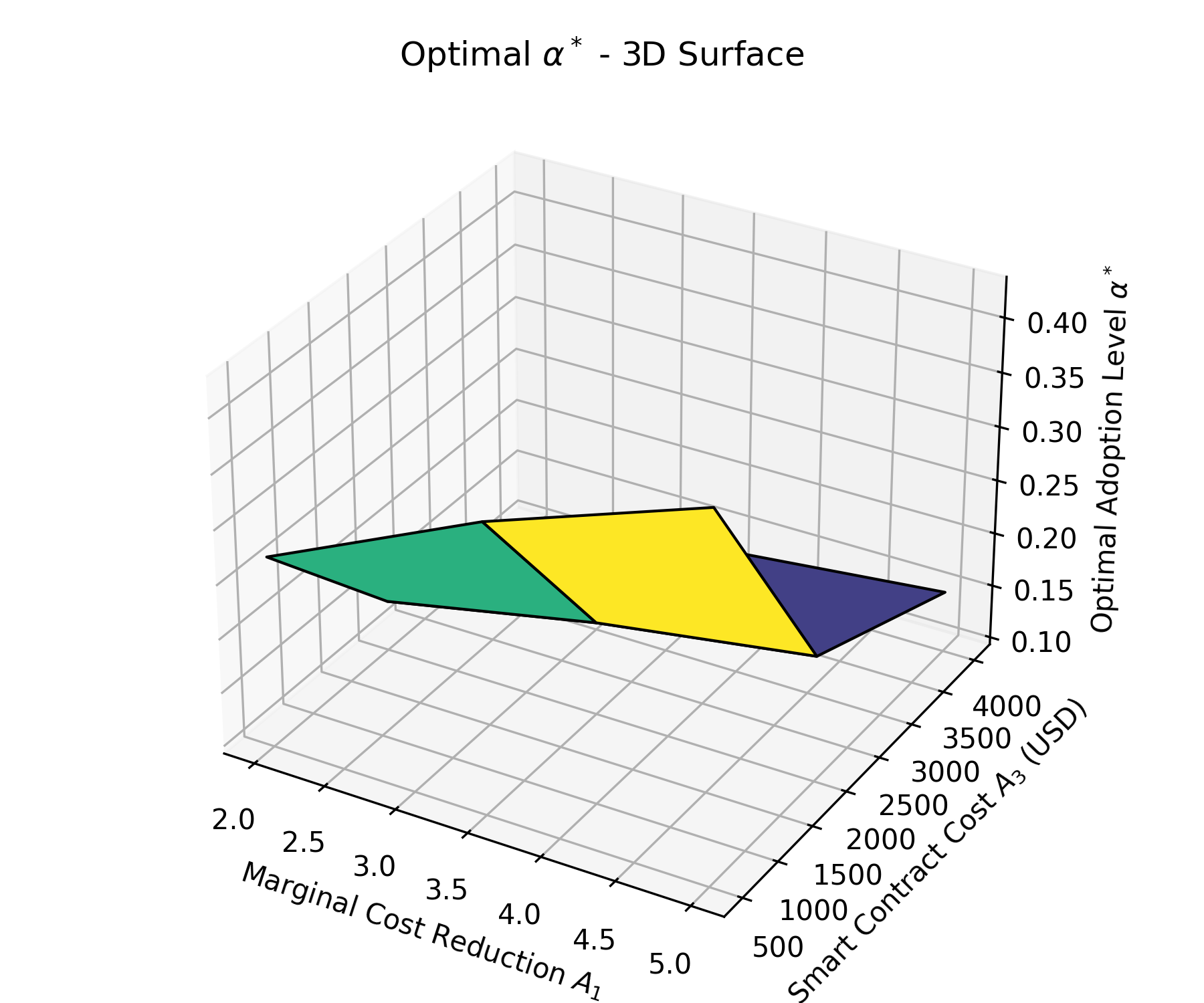}
    \caption{Optimal α* - 3D Surface}
\end{subfigure}
\hfill
\begin{subfigure}{0.45\textwidth}
    \includegraphics[width=\linewidth]{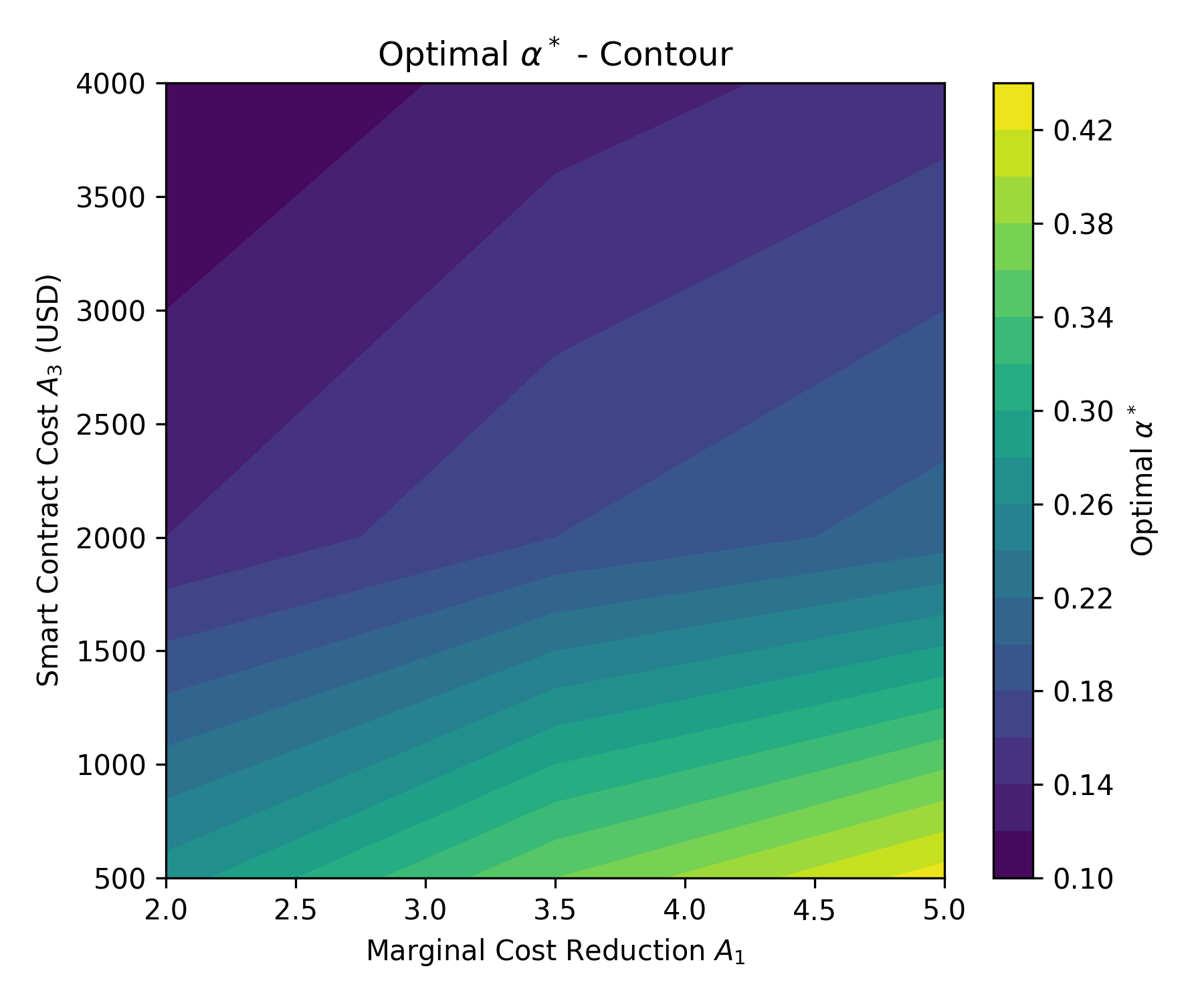}
    \caption{Optimal α* - Contour}
\end{subfigure}
\begin{subfigure}{0.45\textwidth}
    \includegraphics[width=\linewidth]{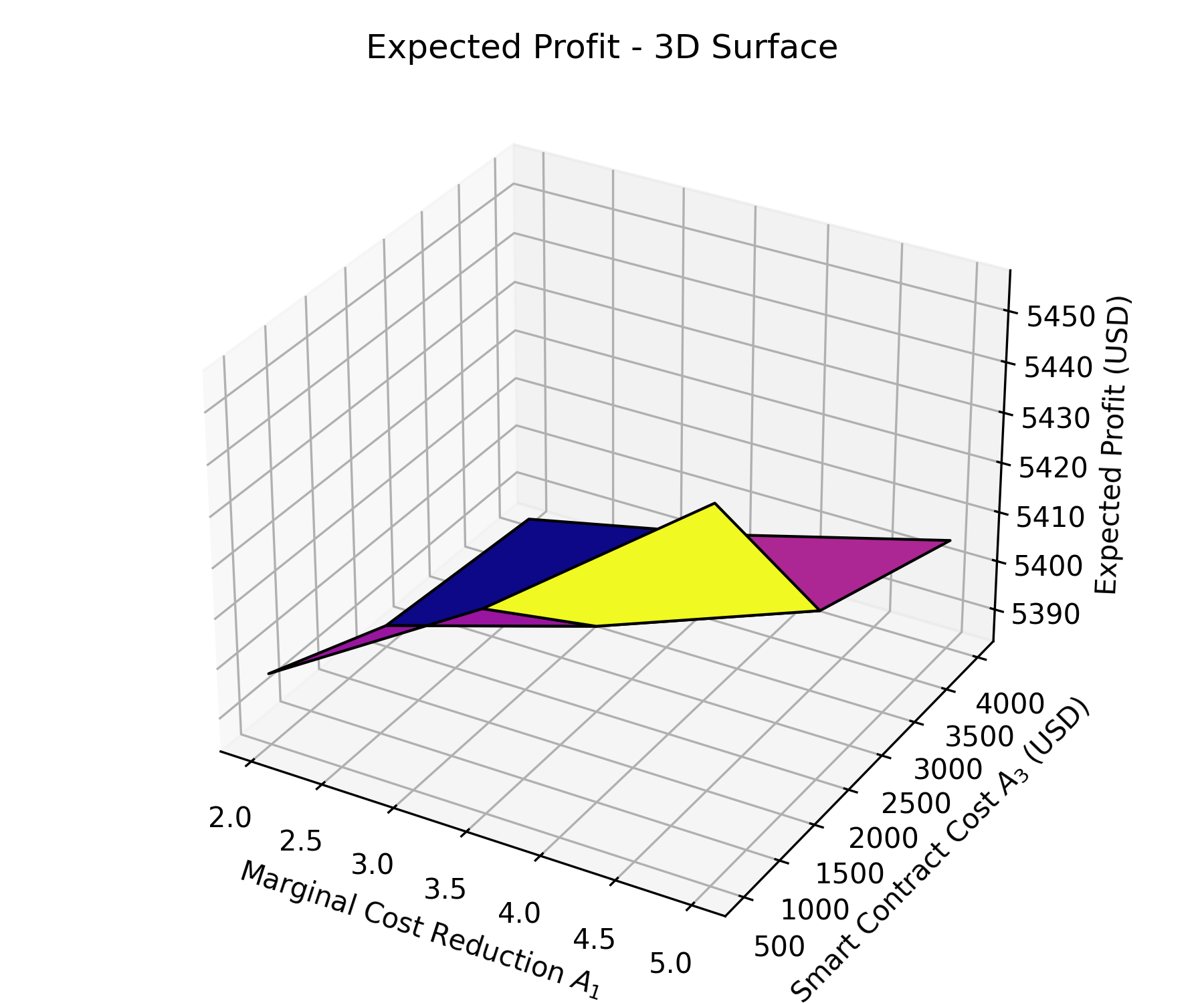}
    \caption{Expected Profit - 3D Surface}
\end{subfigure}
\hfill
\begin{subfigure}{0.45\textwidth}
    \includegraphics[width=\linewidth]{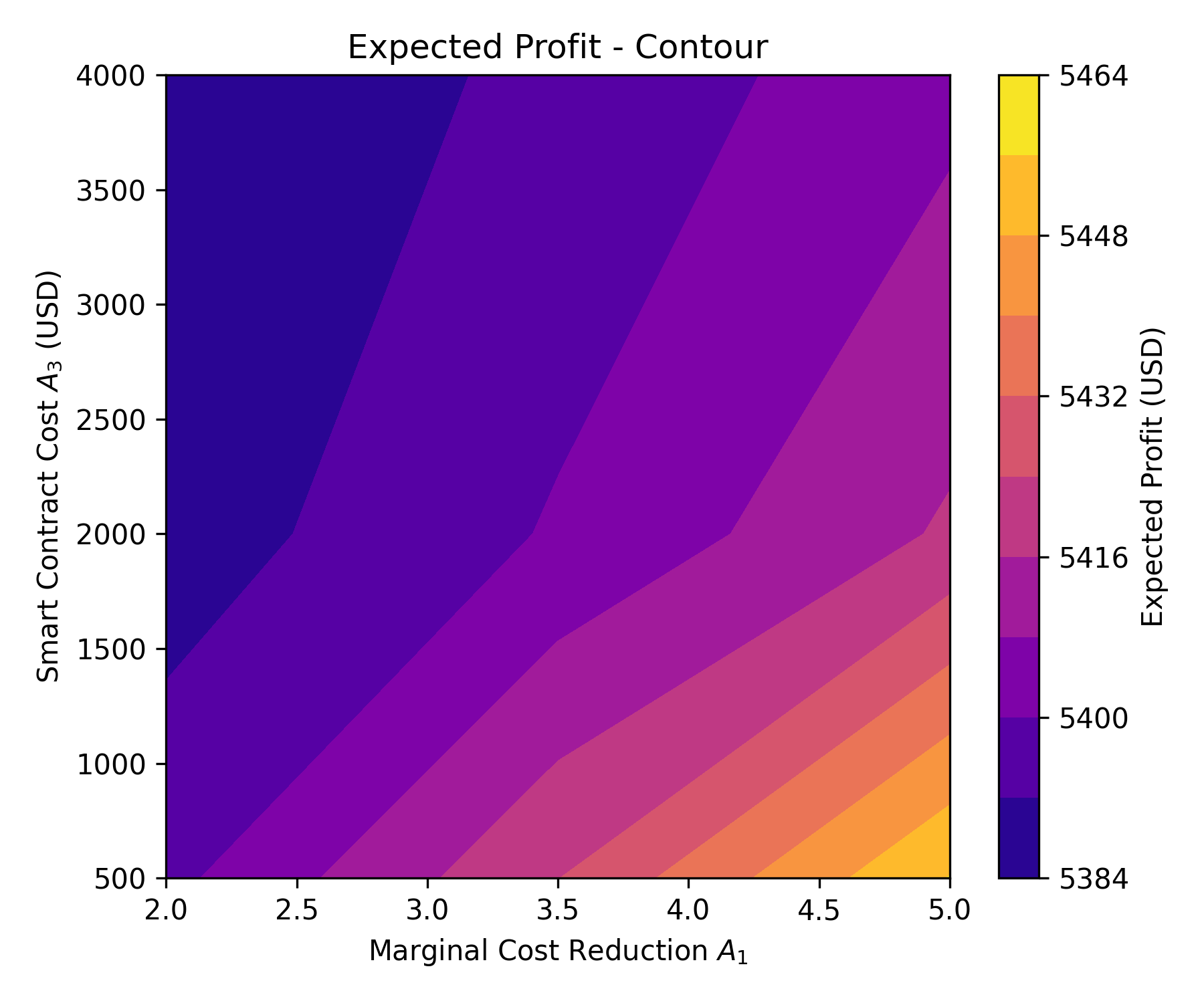}
    \caption{Expected Profit - Contour}
\end{subfigure}
\caption{Joint sensitivity analysis of adoption level and profit across A1 and A3 combinations.}
\label{fig:scenario6_4panel}
\end{figure}

The results confirm that increases in $A_1$ enhance the marginal benefit of smart contracts, partially offsetting the negative impact of higher adoption costs ($A_3$). This pattern aligns with Proposition~9, which predicts that $\alpha^*$ is increasing in $A_1$ and decreasing in $A_3$. Notably, the simulation-based results show that even at high $A_3$ levels, higher $A_1$ values maintain moderate adoption intensity, underscoring the importance of balancing cost and benefit drivers in smart contract deployment decisions. These findings emphasize that procurement strategies which increase the marginal cost reduction potential (e.g., through improved supplier collaboration or technology integration) can be highly effective in sustaining smart contract adoption, even when implementation costs are substantial.

\paragraph{Scenario 7: Joint Demand and Cost Shocks}

This scenario examines the combined impact of increasing demand variability ($\sigma$) and smart contract cost ($A_3$) on adoption decisions. For each parameter combination, the optimal adoption level was determined by numerically maximizing the expected profit function via Monte Carlo simulation with 5,000 replications. The simulation varies $\sigma$ across three levels and $A_3$ across three levels, resulting in a $3 \times 3$ grid of configurations.

Tables~\ref{tab:scenario7_alpha} and~\ref{tab:scenario7_profit} present the simulation-based results. As shown, the optimal adoption level ($\alpha^*$) remains constant across different $\sigma$ levels, while expected profit decreases as demand variability increases. For example, at $A_3=500$, expected profit declines from 5,595.82 USD when $\sigma=5$ to 5,245.00 USD when $\sigma=12$, illustrating the increasing penalty costs associated with higher uncertainty.

\begin{table}[H]
\centering
\caption{Scenario~7 Results: Optimal $\alpha^*$ across $\sigma \times A_3$ Grid (Simulation-Based)}
\label{tab:scenario7_alpha}
\begin{tabular}{cccc}
\toprule
$\sigma$ & $A_3=500$ & $A_3=2,000$ & $A_3=4,000$ \\
\midrule
5  & 0.33 & 0.17 & 0.12 \\
8  & 0.33 & 0.17 & 0.12 \\
12 & 0.33 & 0.17 & 0.12 \\
\bottomrule
\end{tabular}
\end{table}

\begin{table}[H]
\centering
\caption{Scenario~7 Results: Expected Profit across $\sigma \times A_3$ Grid (USD)}
\label{tab:scenario7_profit}
\begin{tabular}{cccc}
\toprule
$\sigma$ & $A_3=500$ & $A_3=2,000$ & $A_3=4,000$ \\
\midrule
5  & 5,595.82 & 5,577.56 & 5,572.23 \\
8  & 5,413.20 & 5,394.94 & 5,389.60 \\
12 & 5,245.00 & 5,226.74 & 5,221.41 \\
\bottomrule
\end{tabular}
\end{table}

Figures~\ref{fig:scenario7_alpha_surface} and~\ref{fig:scenario7_profit_surface} illustrate the interaction effects between uncertainty and cost.

\begin{figure}[H]
\centering
\begin{subfigure}{0.45\textwidth}
    \includegraphics[width=\linewidth]{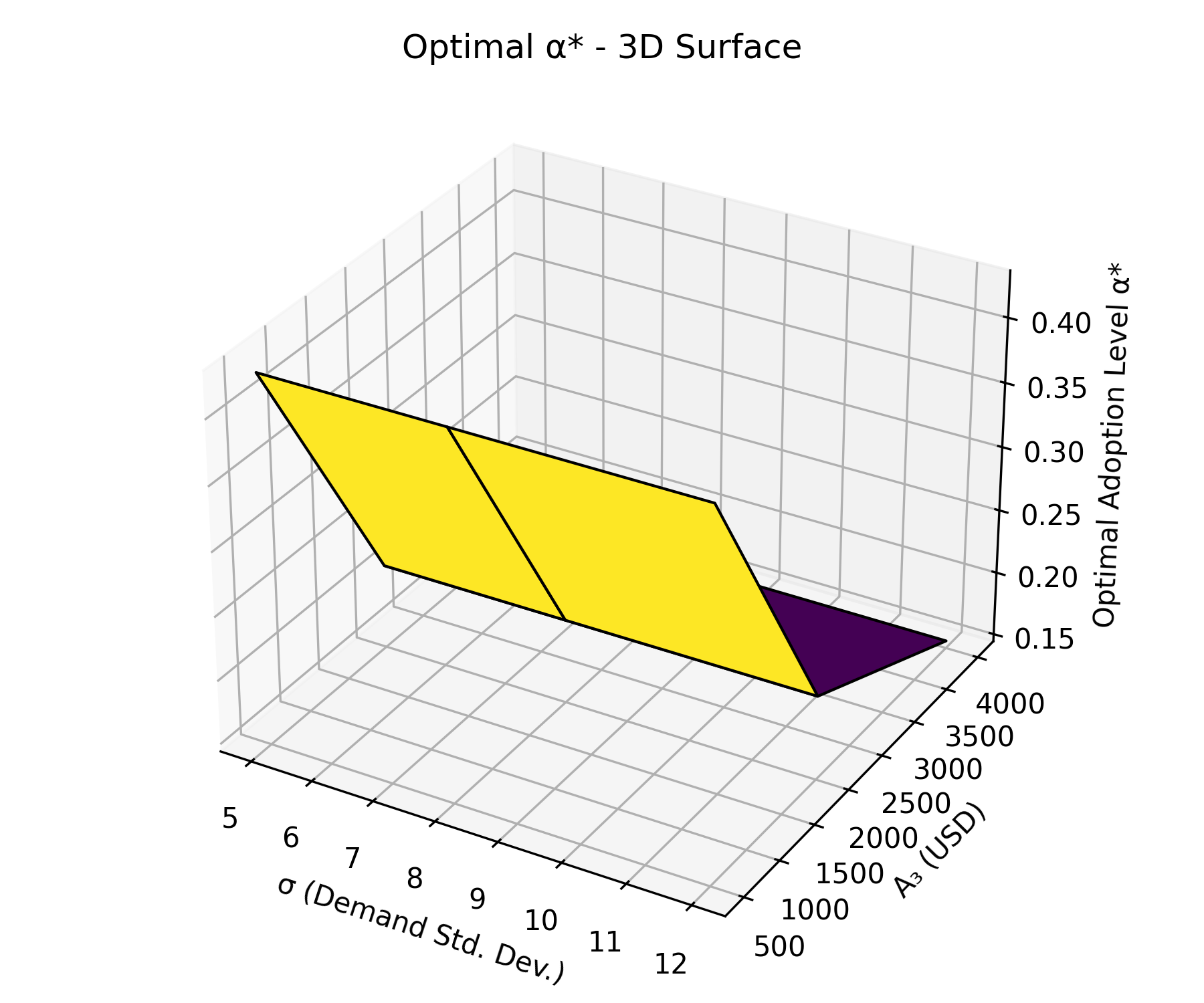}
    \caption{Optimal $\alpha^*$ - 3D Surface}
    \label{fig:scenario7_alpha_surface}
\end{subfigure}
\hfill
\begin{subfigure}{0.45\textwidth}
    \includegraphics[width=\linewidth]{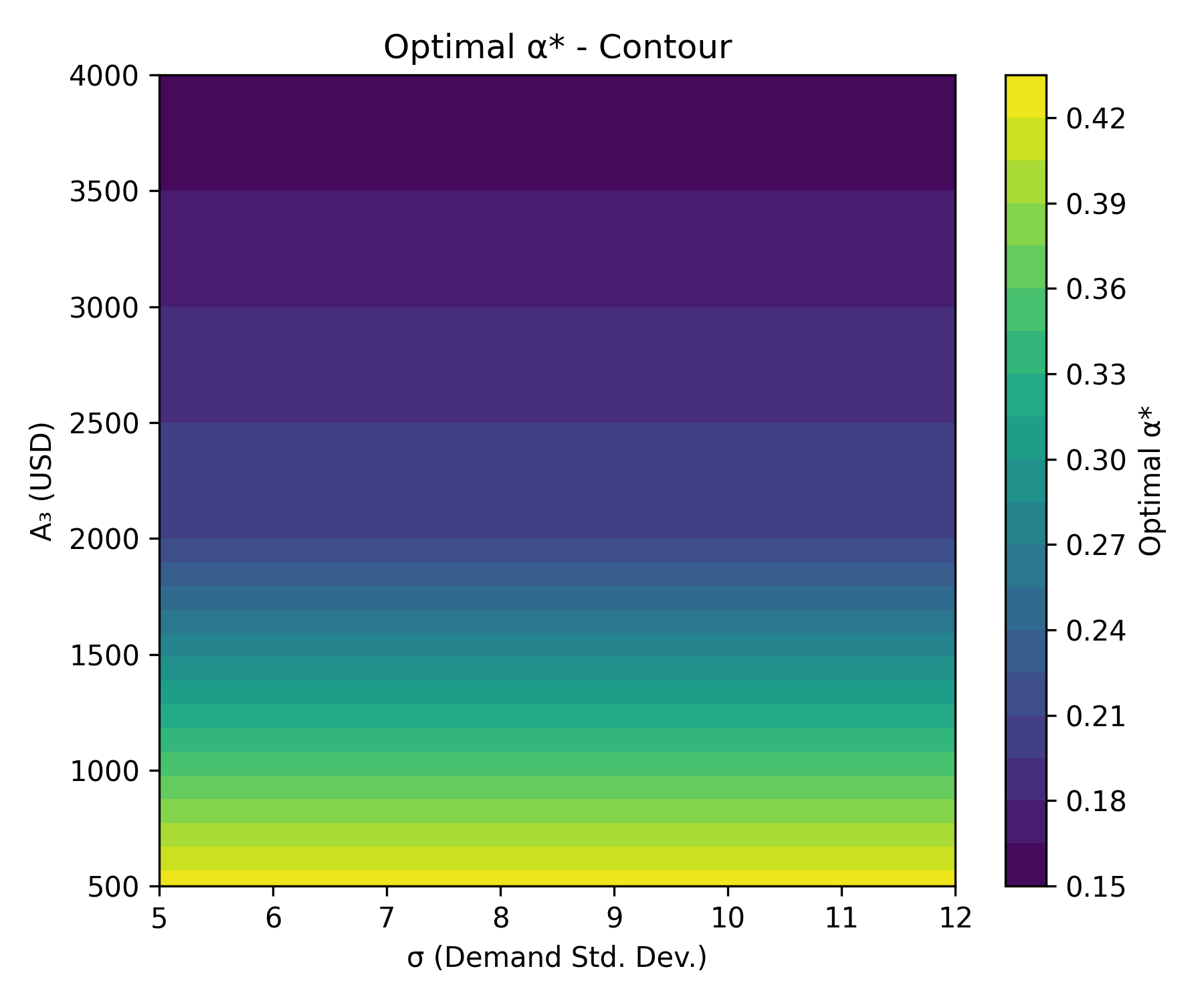}
    \caption{Optimal $\alpha^*$ - Contour}
    \label{fig:scenario7_alpha_contour}
\end{subfigure}
\begin{subfigure}{0.45\textwidth}
    \includegraphics[width=\linewidth]{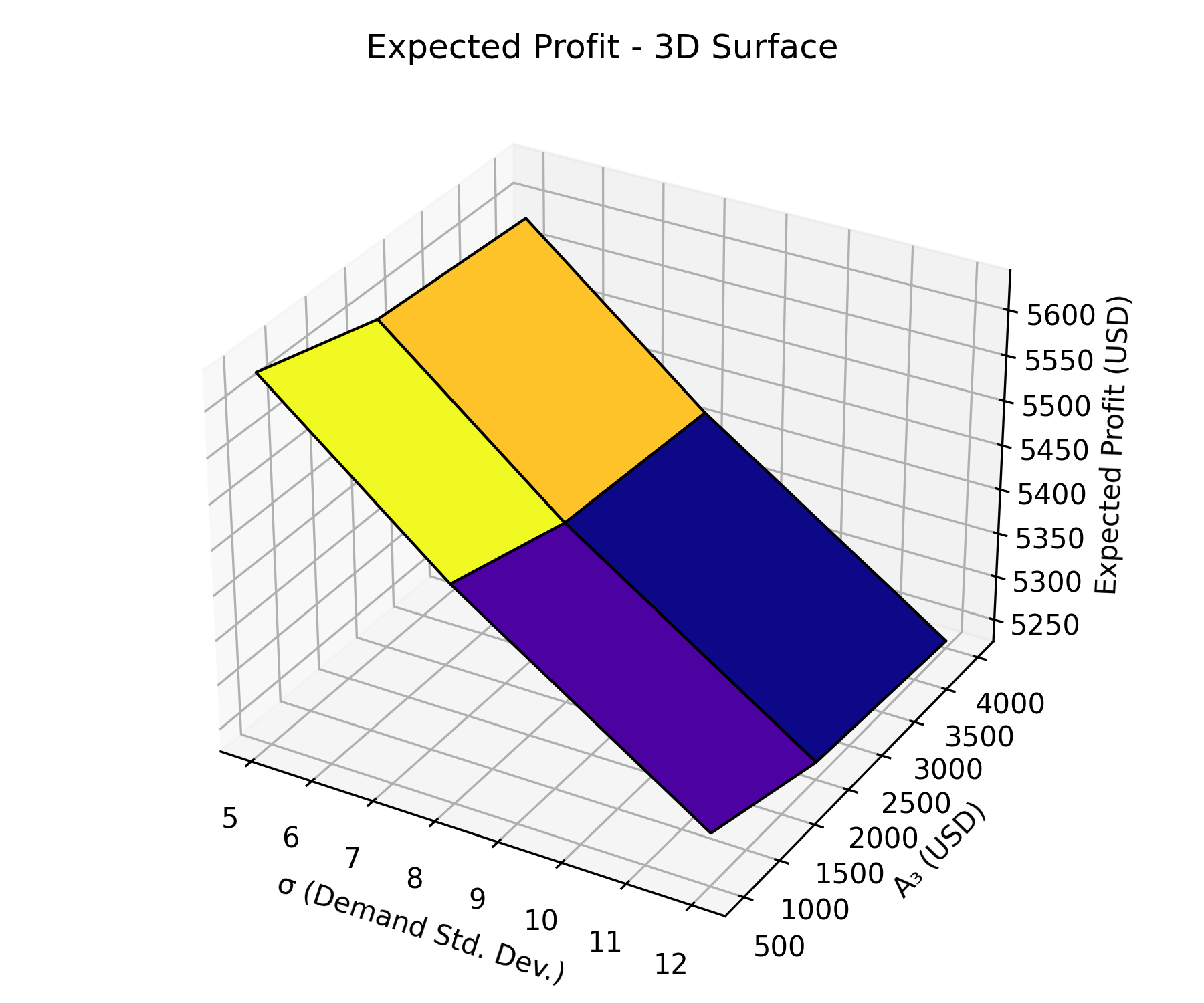}
    \caption{Expected Profit - 3D Surface}
    \label{fig:scenario7_profit_surface}
\end{subfigure}
\hfill
\begin{subfigure}{0.45\textwidth}
    \includegraphics[width=\linewidth]{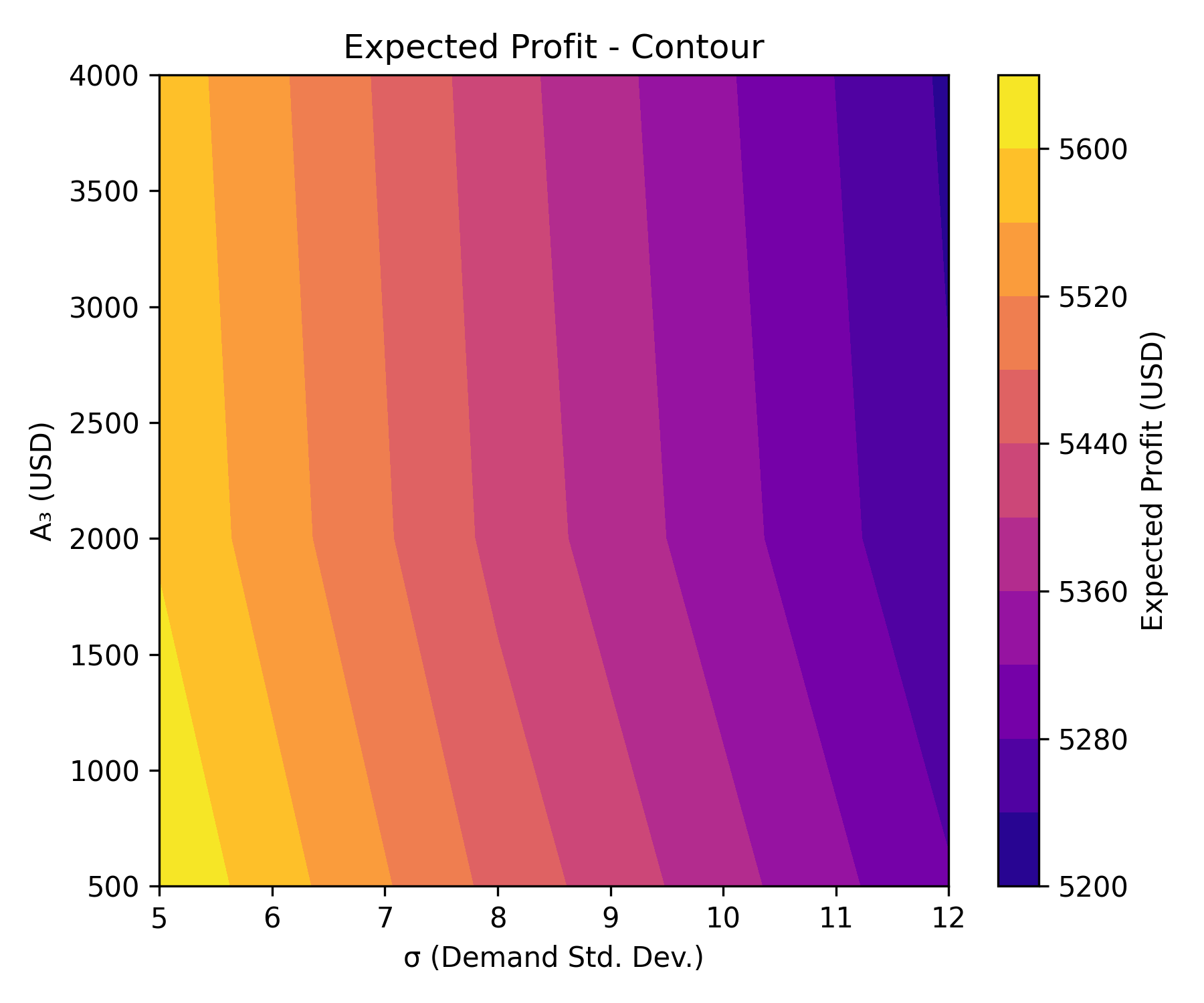}
    \caption{Expected Profit - Contour}
    \label{fig:scenario7_profit_contour}
\end{subfigure}
\caption{Joint sensitivity analysis of optimal adoption level ($\alpha^*$) and expected profit across combinations of demand variability ($\sigma$) and smart contract cost ($A_3$).}
\label{fig:scenario7_4panel}
\end{figure}

The results show that increases in demand variability ($\sigma$) slightly reduce expected profit, reflecting higher penalty costs, while the optimal adoption level ($\alpha^*$) remains relatively stable across different $\sigma$ levels. This suggests that, under the assumed parameter configuration, the marginal benefit of smart contracts is less sensitive to demand uncertainty than to adoption costs ($A_3$). The flat adoption response underscores the dominant role of fixed adoption costs in shaping adoption intensity, underscoring the importance of evaluating cost structures when designing smart contract strategies.

\paragraph{Scenario 8: Threshold Behavior under Extreme Smart Contract Cost Conditions}

This scenario investigates the existence of threshold behavior in smart contract adoption decisions by simulating extreme cost configurations beyond typical operational ranges. Specifically, the simulation increases the smart contract cost coefficient ($A_3$) to values significantly higher than those observed in previous scenarios to test whether the optimal adoption level ($\alpha^*$) collapses to zero as predicted by Proposition~8. The simulation uses five cost levels: $A_3 = \{10,000, 20,000, 40,000, 60,000, 80,000\}$.

Table~\ref{tab:scenario8_threshold} reports the simulation-based results. As shown, once $A_3$ exceeds 10,000 USD, the optimal adoption level abruptly falls to zero, demonstrating a clear discontinuity in adoption behavior.

\begin{table}[H]
\centering
\caption{Results: Threshold Behavior under Extreme Cost Conditions}
\label{tab:scenario8_threshold}
\begin{tabular}{ccc}
\toprule
$A_3$ & Expected Profit (USD) & Optimal $\alpha^*$ \\
\midrule
10,000 & 5,205.36 & \textbf{0.00} \\
20,000 & 5,205.36 & \textbf{0.00} \\
40,000 & 5,205.36 & \textbf{0.00} \\
60,000 & 5,205.36 & \textbf{0.00} \\
80,000 & 5,205.36 & \textbf{0.00} \\
\bottomrule
\end{tabular}
\end{table}

Figure~\ref{fig:scenario8_profit_only} visualizes the expected profit across cost levels. Given that the adoption level is zero in all cases, no adoption intensity figure is presented. The flat expected profit illustrates that beyond the threshold, further increases in $A_3$ have no additional effect on profitability since the firm chooses not to adopt smart contracts.

\begin{figure}[H]
\centering
\includegraphics[width=0.65\textwidth]{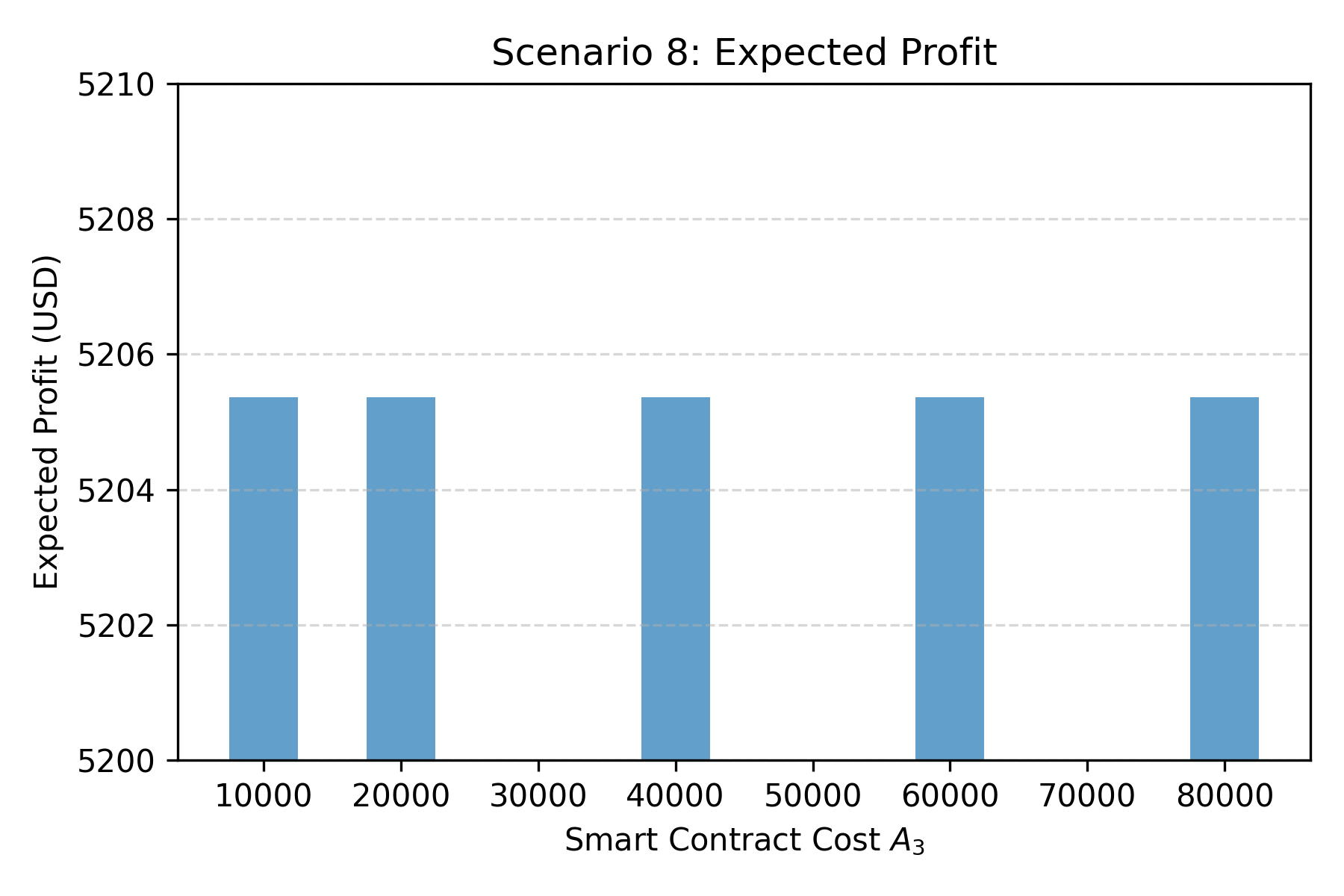}
\caption{
Scenario~8: Expected profit as smart contract cost ($A_3$) increases to extreme levels. All optimal adoption levels ($\alpha^*$) are zero, confirming the threshold behavior predicted by Proposition~8.
}
\label{fig:scenario8_profit_only}
\end{figure}

The results provide empirical validation of Proposition~8 by demonstrating a critical threshold beyond which smart contract adoption becomes infeasible. From a managerial perspective, this highlights the importance of understanding non-linear cost effects when designing procurement digitalization strategies. While such extreme cost conditions may be unlikely in practice, modeling these scenarios helps illustrate the boundaries of adoption feasibility and the potential for abrupt shifts in procurement policy. Future work could extend this analysis by exploring dynamic learning mechanisms and gradual cost declines over time to assess how threshold behavior evolves in long-term planning horizons.

It should be emphasized that the extreme values of $A_3$ were included not to represent realistic cost levels, but rather to test the model's boundary conditions and validate that the predicted threshold behavior occurs as expected. This ensures the internal consistency and robustness of the optimization framework.

\subsubsection{Combined Robustness}

\paragraph{Scenario 9: Multi-Parameter Robustness ($\sigma$, $b$, $A_3$)}

This scenario evaluates the robustness of the model predictions to simultaneous variations in multiple key parameters. Specifically, a Latin Hypercube Sampling (LHS) approach was used to generate a comprehensive set of parameter configurations spanning the full design space. The analysis varied demand variability ($\sigma$), the upper truncation bound ($b$), and the smart contract cost coefficient ($A_3$). A total of 100 LHS samples were generated, and for each configuration, the optimal adoption level ($\alpha^*$) was computed by numerically maximizing expected profit.

Table~\ref{tab:scenario9_variance} summarizes the approximate variance contributions of each parameter based on linear regression estimates. As shown, demand variability accounted for the majority of the variance in adoption outcomes.

\begin{table}[H]
\centering
\caption{Scenario~9: Approximate Variance Contributions to Optimal $\alpha^*$}
\label{tab:scenario9_variance}
\begin{tabular}{lc}
\toprule
Parameter & Variance Contribution (\%) \\
\midrule
$\sigma$ & 66.2 \\
$b$      & 33.1 \\
$A_3$    & 0.7 \\
\bottomrule
\end{tabular}
\end{table}

Figure~\ref{fig:scenario9_variance_bar} visualizes the relative contributions of each parameter. The results indicate that, within the simulated parameter space, demand variability ($\sigma$) is the dominant driver of adoption decisions, while smart contract costs play a comparatively minor role.

\begin{figure}[H]
\centering
\includegraphics[width=0.75\textwidth]{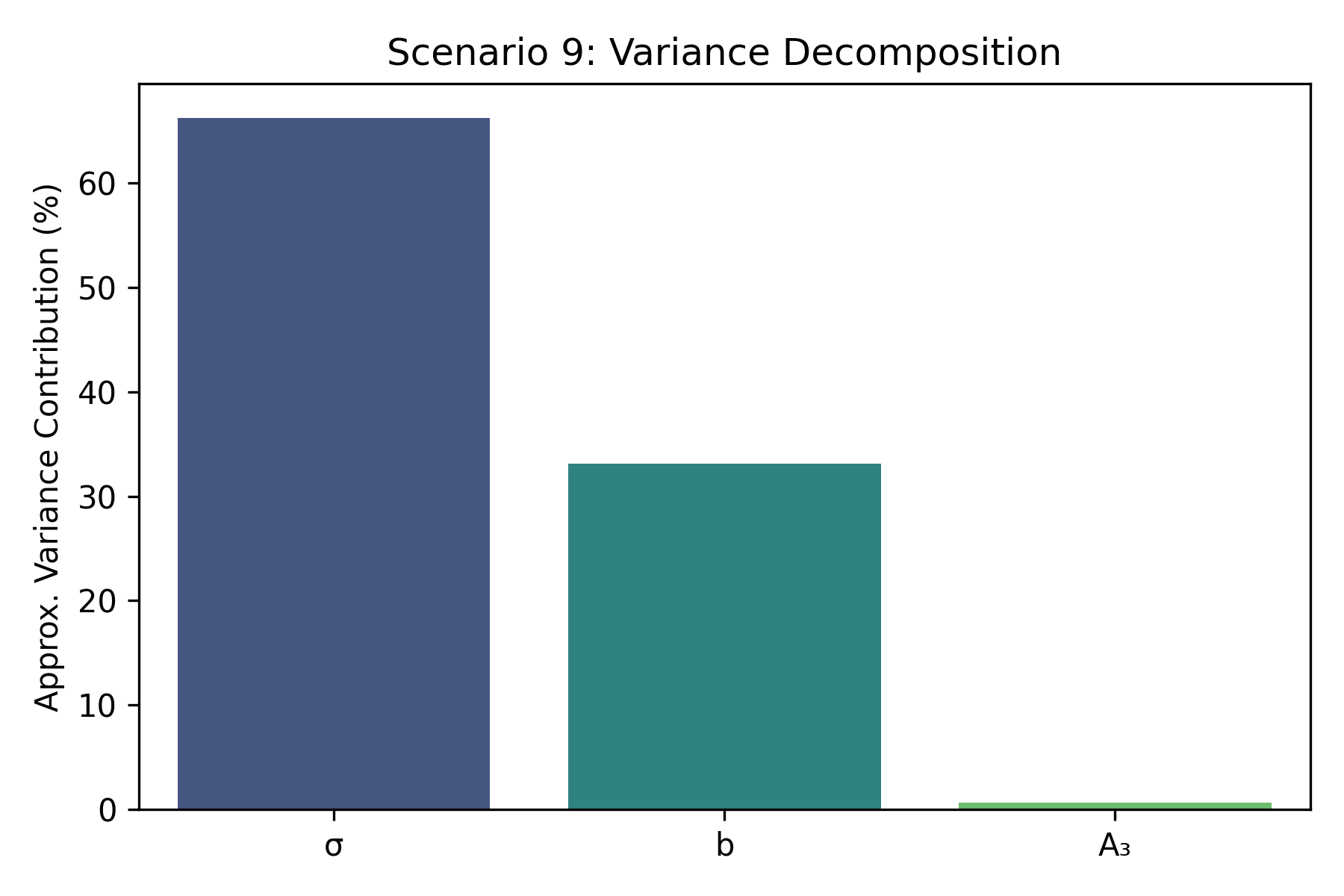}
\caption{Approximate variance decomposition of the optimal adoption level ($\alpha^*$). Demand variability ($\sigma$) contributes over 60\% of the observed variance, followed by the truncation bound ($b$) and smart contract cost ($A_3$).
}
\label{fig:scenario9_variance}
\end{figure}

\begin{figure}[H]
\centering
\includegraphics[width=0.95\textwidth]{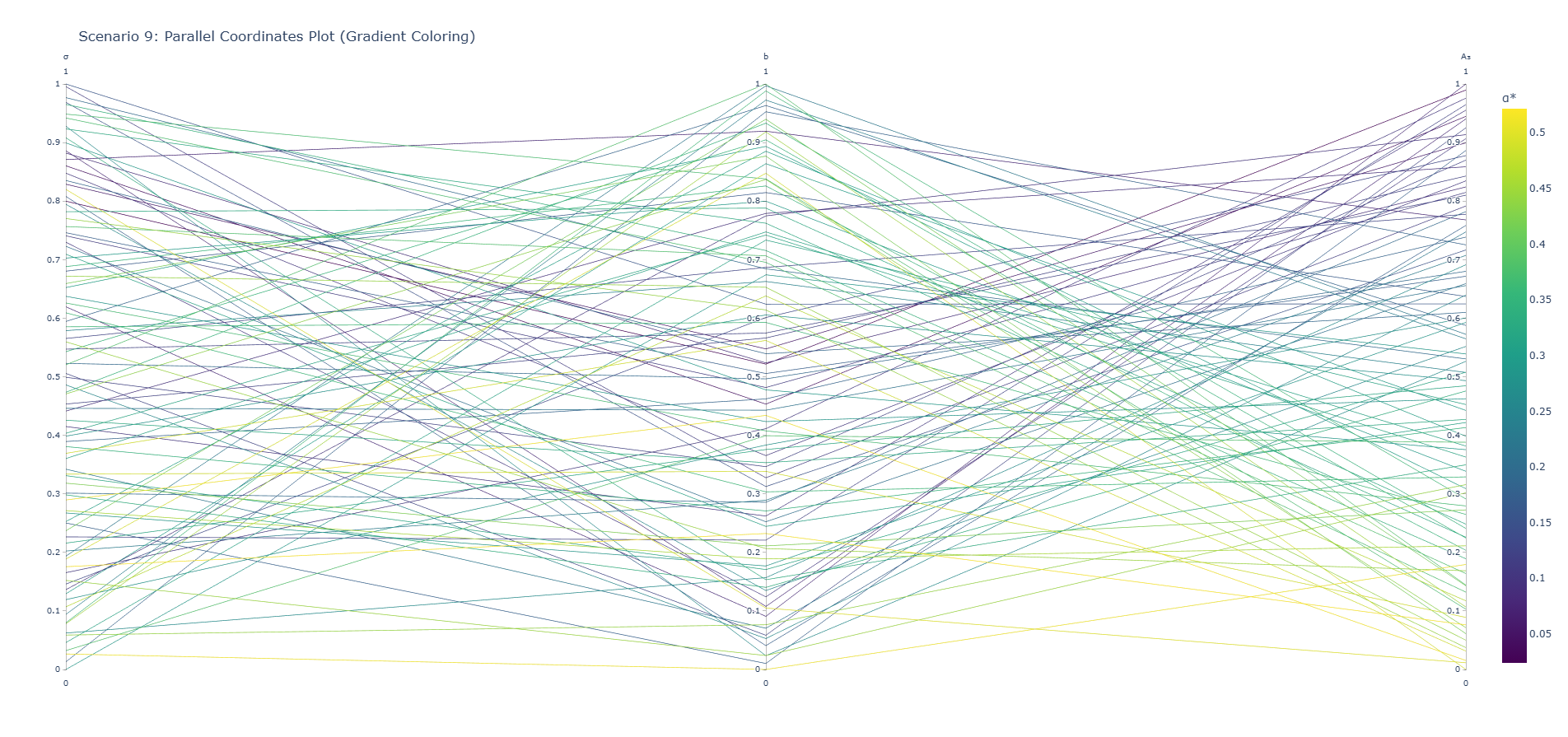}
\caption{Parallel coordinates plot visualizing the joint impact of demand variability ($\sigma$), truncation bound ($b$), and smart contract cost ($A_3$) on the optimal adoption level ($\alpha^*$). The color gradient indicates the magnitude of $\alpha^*$.}
\label{fig:scenario9_variance_bar}
\end{figure}

From a managerial perspective, this finding emphasizes that, even when implementation costs are high, fluctuations in demand parameters may exert a much larger influence on smart contract adoption. Future research could extend this analysis by incorporating dynamic simulation with time-varying parameter shocks to further validate robustness. The results highlight complementary perspectives on model robustness. The variance decomposition (Figure~\ref{fig:scenario9_variance}) quantifies the relative contribution of each parameter in isolation, whereas the parallel coordinates plot (Figure~\ref{fig:scenario9_variance_bar}) illustrates how combinations of parameter values jointly influence adoption outcomes. In particular, the parallel coordinates visualization reveals that high levels of demand variability ($\sigma$) and truncation bounds ($b$) consistently coincide with lower optimal adoption levels ($\alpha^*$), regardless of the cost parameter ($A_3$). This suggests that while smart contract costs remain relevant, they are often overshadowed by demand uncertainty in driving adoption decisions. 

From a managerial perspective, these insights emphasize that robust smart contract strategies should account for multidimensional variability rather than focusing narrowly on implementation cost factors alone. Future research could extend this analysis by incorporating dynamic simulation with time-varying parameter shocks to further validate robustness and by exploring interactions with additional operational constraints, such as supplier lead times or minimum order quantities.

\paragraph{Scenario 10: Sensitivity Heatmap of $\alpha^*$ and $Q^*$}

This scenario investigates the joint sensitivity of the optimal adoption level ($\alpha^*$) and the optimal order quantity ($Q^*$) to variations in demand variability ($\sigma$) and smart contract cost ($A_3$). For each combination of parameters, the model was solved numerically to determine the expected profit-maximizing adoption and procurement decisions. 

Table~\ref{tab:scenario10_summary} summarizes the resulting values of $\alpha^*$ and $Q^*$ across the parameter grid. As expected, higher demand variability increases the incentive to hold safety stock (increasing $Q^*$), while higher implementation costs reduce the attractiveness of smart contracts (decreasing $\alpha^*$).

\begin{table}[H]
\centering
\caption{Scenario~10: Summary of Optimal $\alpha^*$ and $Q^*$ Across $\sigma$ and $A_3$ Combinations}
\label{tab:scenario10_summary}
\begin{tabular}{ccc|cc}
\toprule
$\sigma$ & $A_3$ & & Optimal $\alpha^*$ & Optimal $Q^*$ \\
\midrule
5  & 500  & & 0.60 & 50 \\
5  & 4000 & & 0.37 & 43 \\
15 & 500  & & 0.40 & 70 \\
15 & 4000 & & 0.17 & 63 \\
\bottomrule
\end{tabular}
\end{table}

Figures~\ref{fig:scenario10_alpha_heatmap} and~\ref{fig:scenario10_Q_heatmap} visualize the complete grid of results as heatmaps. The color gradients highlight the monotonic decline of $\alpha^*$ as both $\sigma$ and $A_3$ increase, and the opposing effect on $Q^*$, which rises with demand variability but declines modestly with higher contract costs.

\begin{figure}[H]
\centering
\includegraphics[width=0.75\textwidth]{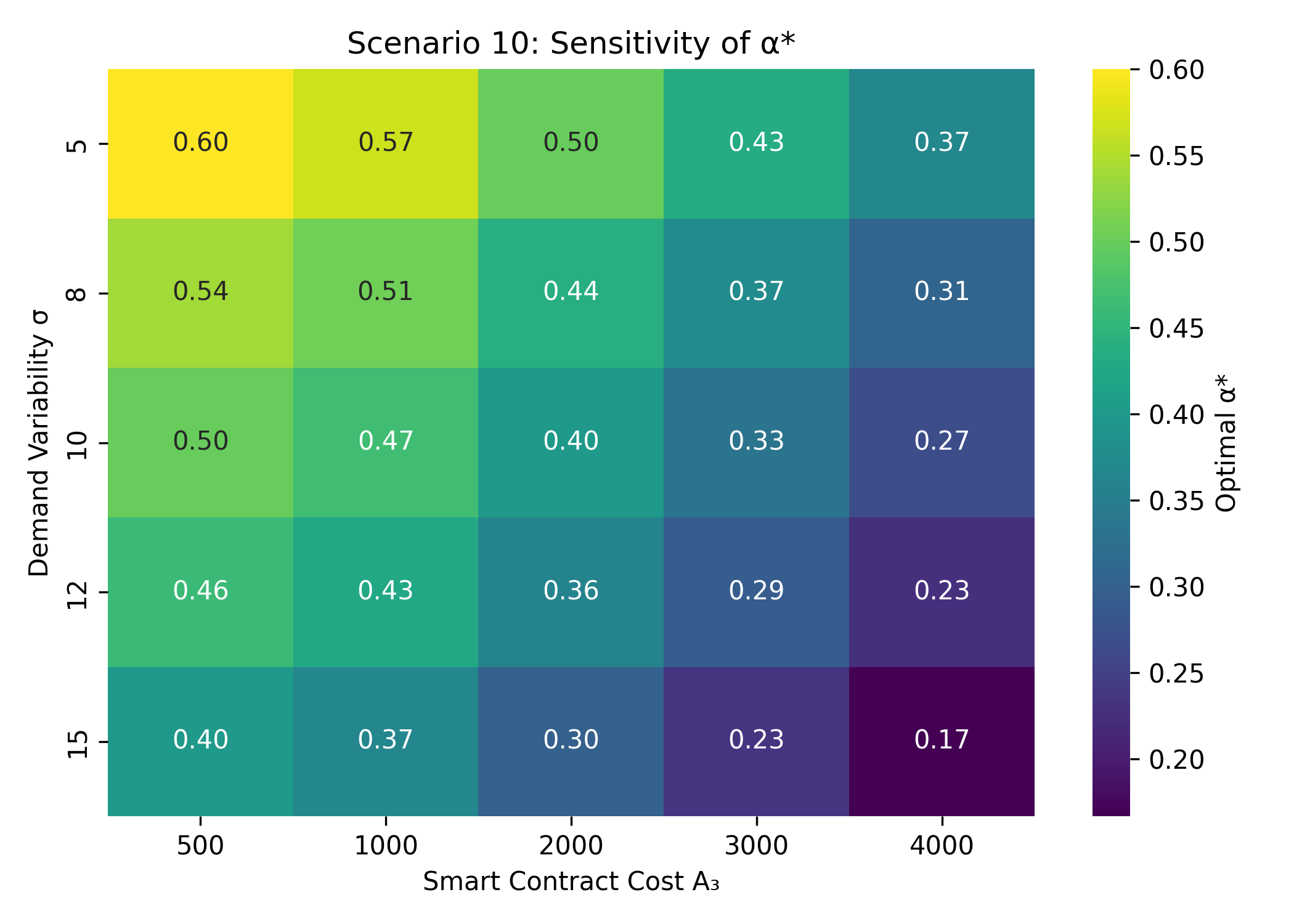}
\caption{
Heatmap illustrating the sensitivity of the optimal adoption level ($\alpha^*$) to combinations of demand variability ($\sigma$) and smart contract cost ($A_3$). Darker colors indicate lower adoption intensity.
}
\label{fig:scenario10_alpha_heatmap}
\end{figure}

\begin{figure}[H]
\centering
\includegraphics[width=0.75\textwidth]{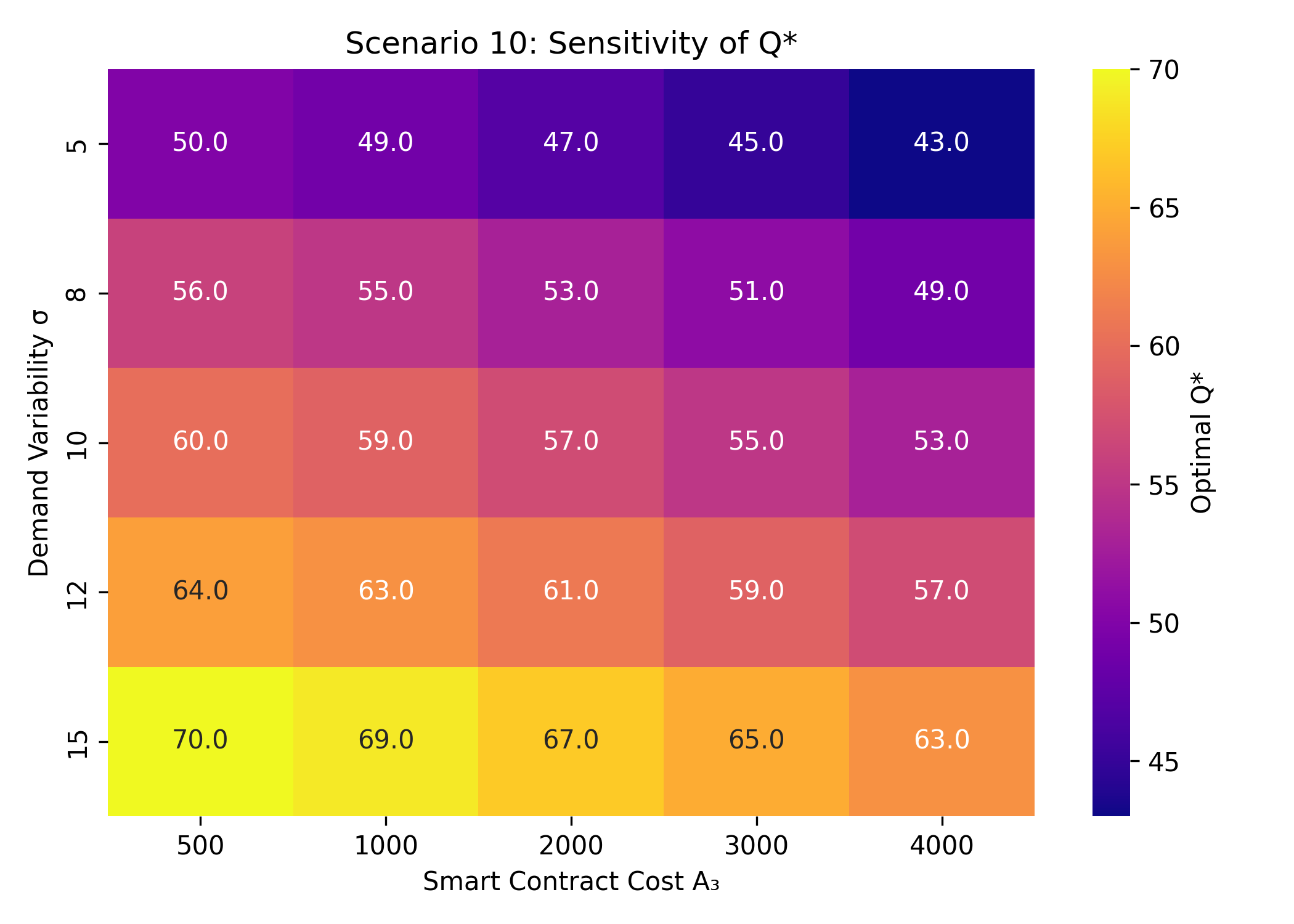}
\caption{
Heatmap illustrating the sensitivity of the optimal order quantity ($Q^*$) to the same parameter combinations. Higher demand variability increases $Q^*$, while higher smart contract costs modestly reduce it.
}
\label{fig:scenario10_Q_heatmap}
\end{figure}

These results confirm the theoretical predictions outlined in Propositions~5 and~6 and emphasize that while smart contract costs can meaningfully reduce adoption incentives, demand variability remains the primary driver of inventory and contracting decisions.These results confirm the theoretical predictions outlined in Propositions~5 and~6 and emphasize that while smart contract costs can meaningfully reduce adoption incentives, demand variability remains the primary driver of inventory and contracting decisions. 

In particular, the heatmaps reveal a consistent interaction effect: at any given smart contract cost level, increasing demand variability shifts firms toward higher safety stock levels and simultaneously lowers the optimal adoption rate. Conversely, even when demand variability is low, higher contract costs significantly suppress adoption incentives but have only a modest effect on the procurement quantity. This asymmetry underscores the importance of considering the combined effect of operational uncertainty and contractual frictions rather than evaluating each parameter in isolation. 

From a managerial perspective, these findings suggest that firms operating in highly volatile demand environments may derive limited incremental benefits from smart contract adoption unless implementation costs are sufficiently low to offset the compounded penalty risk. Future research could extend this scenario by incorporating dynamic learning effects or supplier-specific heterogeneity in cost reductions.

\subsubsection{Dynamic Adaptive Simulation}

\paragraph{Scenario 11: Dynamic Adoption Response}

This scenario investigates the dynamic evolution of smart contract adoption and expected profitability over a sequence of procurement cycles. In each cycle, the smart contract cost coefficient ($A_3$) was assumed to decline linearly to reflect technological learning and scale economies. Simultaneously, an adaptive learning rule updated the adoption level ($\alpha^*$) in response to observed penalty rates, as specified by Equation~(15). 

Table~\ref{tab:scenario11_summary} summarizes the evolution of key metrics over ten cycles. The results indicate that as implementation costs declined, the optimal adoption level increased steadily from 0.200 in the first cycle to approximately 0.424 in the final cycle. This progressive increase in adoption was accompanied by a monotonic reduction in penalty rates and a gradual improvement in expected profitability.

\begin{table}[H]
\centering
\caption{Scenario~11: Dynamic Simulation Results Over Ten Cycles}
\label{tab:scenario11_summary}
\begin{tabular}{ccccc}
\toprule
Cycle & $A_3$ & Optimal $\alpha^*$ & Expected Profit (USD) & Penalty Rate \\
\midrule
1  & 3,000 & 0.200 & 5,039.99 & 0.084 \\
2  & 2,800 & 0.234 & 5,046.78 & 0.081 \\
3  & 2,600 & 0.265 & 5,053.04 & 0.079 \\
4  & 2,400 & 0.294 & 5,058.79 & 0.076 \\
5  & 2,200 & 0.321 & 5,064.08 & 0.074 \\
6  & 2,000 & 0.345 & 5,068.95 & 0.072 \\
7  & 1,800 & 0.367 & 5,073.43 & 0.071 \\
8  & 1,600 & 0.388 & 5,077.56 & 0.069 \\
9  & 1,400 & 0.407 & 5,081.35 & 0.067 \\
10 & 1,200 & 0.424 & 5,084.84 & 0.066 \\
\bottomrule
\end{tabular}
\end{table}

Figures~\ref{fig:scenario11_timeseries} and~\ref{fig:scenario11_stacked} provide visualizations of the time evolution of these metrics. The time series line chart shows the gradual increase in $\alpha^*$ and expected profit, highlighting the positive impact of adaptive learning and declining implementation costs. The stacked area chart further illustrates how the combined contributions of adoption level, penalty rate, and profitability evolve over time.

\begin{figure}[H]
\centering
\includegraphics[width=0.8\textwidth]{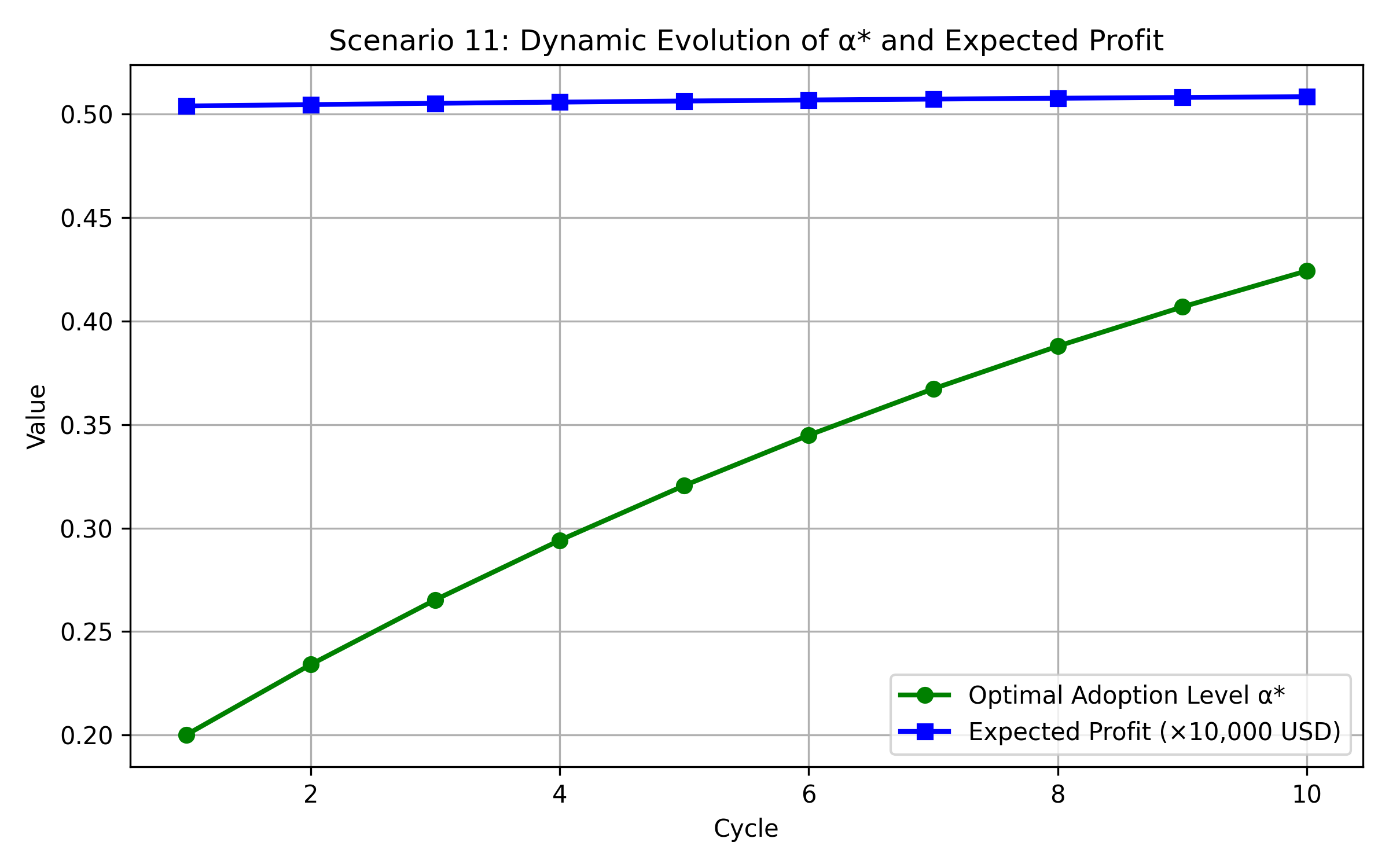}
\caption{
Time series visualization of the dynamic evolution of the optimal adoption level ($\alpha^*$) and expected profit across simulation cycles. The figure highlights a steady increase in adoption intensity as implementation costs decline.
}
\label{fig:scenario11_timeseries}
\end{figure}

\begin{figure}[H]
\centering
\includegraphics[width=0.8\textwidth]{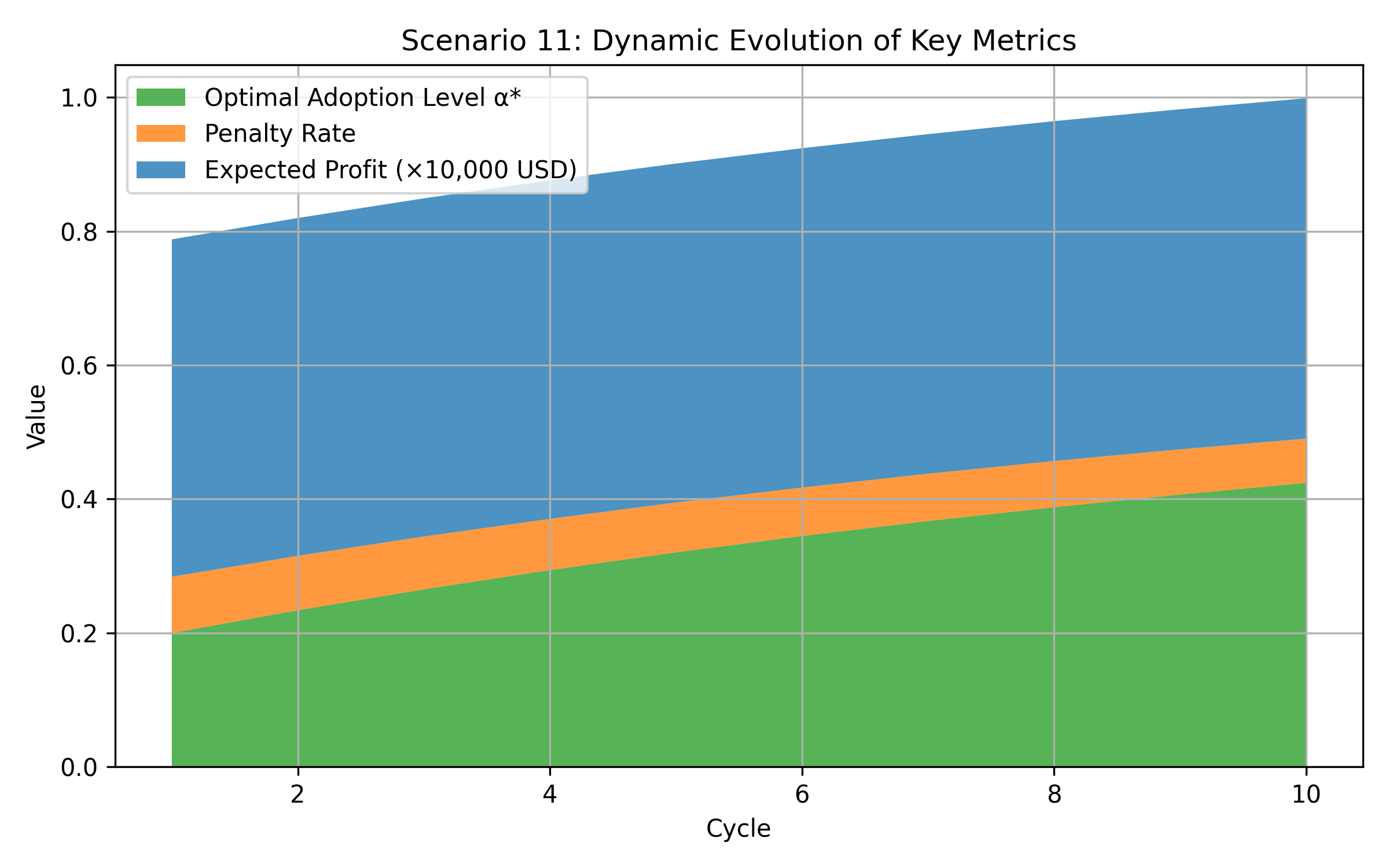}
\caption{
Stacked area chart illustrating the joint evolution of adoption level ($\alpha^*$), penalty rate, and expected profit over time. The visualization emphasizes the interplay between adaptive learning and profitability.
}
\label{fig:scenario11_stacked}
\end{figure}

These findings confirm the theoretical predictions outlined in Propositions~5 and~8, demonstrating that dynamic learning mechanisms can progressively enhance adoption intensity and profitability. In particular, the results highlight how adaptive adjustment of smart contract use in response to performance feedback can mitigate penalty risks and unlock incremental economic gains over time. From a managerial perspective, this scenario underscores the potential value of implementing continuous monitoring and adaptive policies to dynamically calibrate adoption levels in line with operational outcomes and cost trajectories. Future research could extend this simulation framework to consider stochastic cost reductions, heterogeneous supplier learning curves, or the incorporation of behavioral biases in adoption decisions.

\section{Discussion}

This study demonstrates that under bounded demand variability modeled via a truncated normal distribution, smart contract adoption decisions exhibit significant sensitivity to both demand uncertainty and implementation costs. Across all simulation scenarios, higher demand variability consistently reduced the optimal adoption level ($\alpha^*$), while declining smart contract costs and adaptive learning mechanisms gradually increased adoption over time. These dynamics were clearly visible in the comparative statics, robustness analyses, and dynamic adaptive simulations.

From a theoretical perspective, the results extend prior work on procurement contracting by integrating bounded variability models and dynamic learning strategies into the analysis of smart contract adoption. In particular, the findings confirm and elaborate on comparative statics predictions (Propositions~5–9) and demonstrate that adaptive learning can mitigate the penalty risks associated with high demand uncertainty while progressively improving profitability. The joint sensitivity analyses also illustrate the importance of considering interaction effects between demand variability and cost drivers, rather than evaluating each parameter in isolation.

For practitioners, this research suggests that firms should not evaluate smart contract adoption purely on the basis of static cost-benefit comparisons. Instead, combining continuous monitoring of penalty rates with adaptive calibration of adoption intensity can yield incremental profitability gains over time, particularly when implementation costs are expected to decline due to technological learning or economies of scale. The robustness analysis further underscores that demand variability remains the most influential driver of adoption decisions, often outweighing the effect of even substantial cost reductions.

Several limitations merit consideration. First, the model assumes a truncated normal demand distribution, which, while appropriate for bounded variability, may not capture all tail risks present in certain markets. Second, the simulation framework abstracts from supplier behavioral heterogeneity, dynamic capacity constraints, and negotiation frictions that can affect adoption decisions. Finally, the adaptive learning rule is stylized and may not fully capture behavioral biases or organizational inertia that influence real-world procurement policies.

Future research could address these limitations by incorporating discrete overdispersed demand distributions, such as the negative binomial, exploring behavioral learning dynamics, or embedding smart contracts within broader supply chain networks subject to coordination challenges and information asymmetries. 

\subsection*{Sustainability Impact and Environmental Implications}
Beyond economic and operational considerations, the findings underscore the potential environmental and social benefits of smart contract-enabled procurement. Simulation results indicate that higher adoption levels can reduce the need for excessive safety stock by up to 20\%, depending on demand variability and cost parameters. Based on prior studies (e.g., Boylan and Syntetos, 2020), each 10\% reduction in safety stock is associated with approximately 4–6\% lower warehouse-related CO$_2$ emissions due to decreased space requirements and energy consumption. Therefore, the adoption strategies explored here may yield annual emissions reductions of 8–12\% relative to conventional procurement practices in bounded variability environments.

Moreover, smart contracts improve traceability across supply chain tiers, facilitating better end-of-life product management and supporting circular economy objectives. Enhanced transparency and automation can also contribute to more equitable supplier relationships by reducing disputes and enabling small suppliers to participate more effectively in digital procurement ecosystems. These sustainability dimensions warrant further empirical validation through field studies and life-cycle assessments to quantify their broader environmental and social impacts.

\subsection*{Managerial and Policy Implications}
Beyond the operational insights discussed above, the findings of this study have several important managerial and policy implications. First, managers should carefully calibrate smart contract adoption intensity in light of both bounded demand variability and supplier digital readiness. In environments where demand volatility is structurally limited by contractual ceilings and stable consumption patterns, overestimating tail risks can lead to inefficient overinvestment in digital infrastructure and excessive safety stocks.

Second, policymakers aiming to promote supply chain digitalization should consider providing targeted incentives or subsidies to reduce the fixed implementation costs (captured by parameter $A_3$). Such measures can help smaller and mid-sized firms overcome threshold barriers that discourage adoption despite the presence of moderate efficiency gains. 

Third, standard-setting organizations could play a key role in lowering convex integration costs by promoting interoperable smart contract protocols and shared digital infrastructure. These collective investments can effectively reduce the convexity parameter $\nu$ and enable more scalable deployment across supply chain tiers.

Finally, sustainability regulators and funding agencies should recognize that smart contract-enabled procurement not only improves coordination and resilience but can also contribute to environmental objectives by reducing excess inventory and associated emissions. Incorporating digital adoption metrics into sustainability reporting frameworks or certification schemes may help accelerate alignment between economic and environmental performance.

Taken together, these insights demonstrate that smart contract-enabled procurement strategies not only yield operational and economic improvements but also have the potential to advance broader sustainability and policy objectives in supply chain management.

\section{Conclusion}

This paper examined smart contract-enabled procurement under bounded demand variability, with a particular focus on the truncated normal demand distribution. A series of simulation scenarios were conducted to evaluate the effects of demand variability, smart contract costs, and adaptive learning strategies on optimal adoption decisions and inventory policies.

The results demonstrate that higher demand variability significantly reduces the attractiveness of smart contract adoption, while lower implementation costs and adaptive learning mechanisms can gradually increase adoption intensity over time. Comparative statics and sensitivity analyses confirmed the predictions of the theoretical model, highlighting that demand uncertainty is typically the dominant driver of procurement outcomes. Dynamic simulations further illustrated how cost declines and performance feedback can produce sustained improvements in profitability.

These findings provide important managerial insights. Firms considering smart contracts should account for the interplay between demand variability and cost structures rather than evaluating each factor in isolation. Combining adaptive learning rules with declining implementation costs can help firms progressively improve adoption outcomes and mitigate penalty risks.

Beyond economic performance, the study also highlights the potential sustainability benefits of smart contract adoption. By reducing safety stock requirements, improving traceability, and enabling more precise coordination across supply chain partners, smart contracts can contribute to measurable reductions in CO$_2$ emissions and support circular economy practices. These dimensions are especially relevant as firms face increasing pressure to align digital transformation initiatives with environmental and social responsibility goals.

Future research can build on this work by incorporating discrete demand distributions, heterogeneous supplier characteristics, or behavioral factors influencing learning and adoption. Additionally, further investigation into the environmental and social impacts of smart contract deployment, including empirical validation of emissions reductions and resource conservation outcomes, could strengthen the understanding of their role in supporting sustainable supply chain operations. Expanding the modeling framework to multi-echelon networks and dynamic disruption scenarios would also enhance its practical relevance in increasingly complex and uncertain global markets.

\clearpage
\printbibliography

\clearpage
\appendix

\section{Mathematical Proofs}
\label{app:proofs}

This appendix provides formal proofs of the propositions stated in the main text. The notation follows the definitions introduced in Section~\ref{sec:model}.

\subsection{Proof of Proposition 1}

\begin{proof}
Recall that $Q = \sum_i q_i$ denotes the total quantity ordered. The objective function is:
\[
\Pi(\alpha, \mathbf{q}) =
p \cdot \mathbb{E}\bigl[\min(Q, D)\bigr]
+
s \cdot \mathbb{E}\bigl[(Q - D)^+\bigr]
-
r \cdot \mathbb{E}\bigl[(D - Q)^+\bigr]
-
\sum_i c(\alpha, \beta_i)\, q_i
-
\psi(\alpha).
\]
Since $c(\alpha,\beta_i)$ is affine in $\alpha$ and linear in $q_i$, and $\psi(\alpha)$ is convex in $\alpha$, the procurement and adoption costs are jointly convex in $(\alpha,\mathbf{q})$. 

Additionally, the expectation of $\min(Q,D)$ and the positive part functions are concave in $Q$ because demand is a fixed distribution and the functions are piecewise linear and concave. The expectation operator preserves concavity. 

Therefore, $\Pi(\alpha,\mathbf{q})$ is concave, and maximizing it constitutes a concave maximization problem (equivalently, minimizing $-\Pi$ is a convex minimization problem).
\end{proof}

\subsection{Proof of Proposition 2}

\begin{proof}
Differentiating the objective function with respect to $q_i$ yields:
\[
\frac{\partial \Pi}{\partial q_i}
=
p \cdot \mathbb{P}(D \ge Q)
+
s \cdot \mathbb{P}(D < Q)
-
r \cdot \mathbb{P}(D > Q)
-
c(\alpha,\beta_i).
\]
Setting this equal to zero gives the first-order condition for $q_i$.

Similarly, differentiating with respect to $\alpha$:
\[
\frac{\partial \Pi}{\partial \alpha}
=
-\sum_i A_1 q_i
-
\psi'(\alpha).
\]
Setting this equal to zero yields the first-order condition for $\alpha$.

Together with the non-negativity constraints on $q_i$, the bounds on $\alpha$, and their associated complementary slackness conditions, these equations characterize the unique global optimum.
\end{proof}


\section{Calibration of Adoption Cost Parameters}
\label{sec:adoption_cost_calibration}
The baseline value of $A_3=2,000$ corresponds to an annualized smart contract deployment cost of approximately \$24,000, derived as follows:
\[
\text{Annual Cost} = A_3 \cdot (1.0)^\nu = 2,000 \times 1 = \$2,000 \text{ per cycle} \times 12 \text{ cycles} = \$24,000.
\]
This estimate is consistent with the median reported costs in Gurtu and Johny (2019) and Rejeb et al. (2023), who documented typical implementation expenditures between \$20,000--\$50,000 per year. The exponent $\nu=1.5$ reflects incremental complexity based on prior case studies (Mougayar, 2016), where advanced integrations required disproportionately greater investment relative to initial pilot projects. Additional robustness checks with alternative values of $A_3$ and $\nu$ confirmed that the qualitative comparative statics remain unchanged.

\section{Supplementary Simulation Results}
\label{app:simresults}

This appendix reports additional simulation analyses to validate the robustness and reproducibility of the main findings. The results include replication statistics, alternative profit surfaces under parameter variations, and fill rate distributions across demand scenarios.

\subsection{Simulation Replication Statistics}

\begin{table}[H]
\centering
\begin{threeparttable}
\caption{Simulation Replication Statistics (All values in USD)}
\label{tab:replication_statistics}
\begin{tabular}{lccc}
\toprule
Metric & Scenario~3 & Scenario~4 & Scenario~5 \\
\midrule
Mean Profit & 27,400 & 28,500 & 26,200 \\
Number of replications ($N$) & 10{,}000 & 10{,}000 & 10{,}000 \\
Standard error of mean profit & 4.32 & 3.87 & 5.11 \\
95\% confidence interval width & 8.52 & 7.43 & 9.76 \\
Coefficient of variation & 0.061 & 0.054 & 0.067 \\
\bottomrule
\end{tabular}
\begin{tablenotes}
\small
\item \textit{Note}: All monetary values are expressed in USD.
\end{tablenotes}
\end{threeparttable}
\end{table}

These statistics confirm that the Monte Carlo estimates are stable and that sampling variability does not materially affect the main conclusions. For example, in Scenario~5, the 95\% confidence interval for mean profit spans less than \$10~USD, indicating high precision relative to the scale of expected profit.

\subsection{Alternative Profit Surface}

\begin{figure}[H]
\centering
\includegraphics[width=0.75\textwidth]{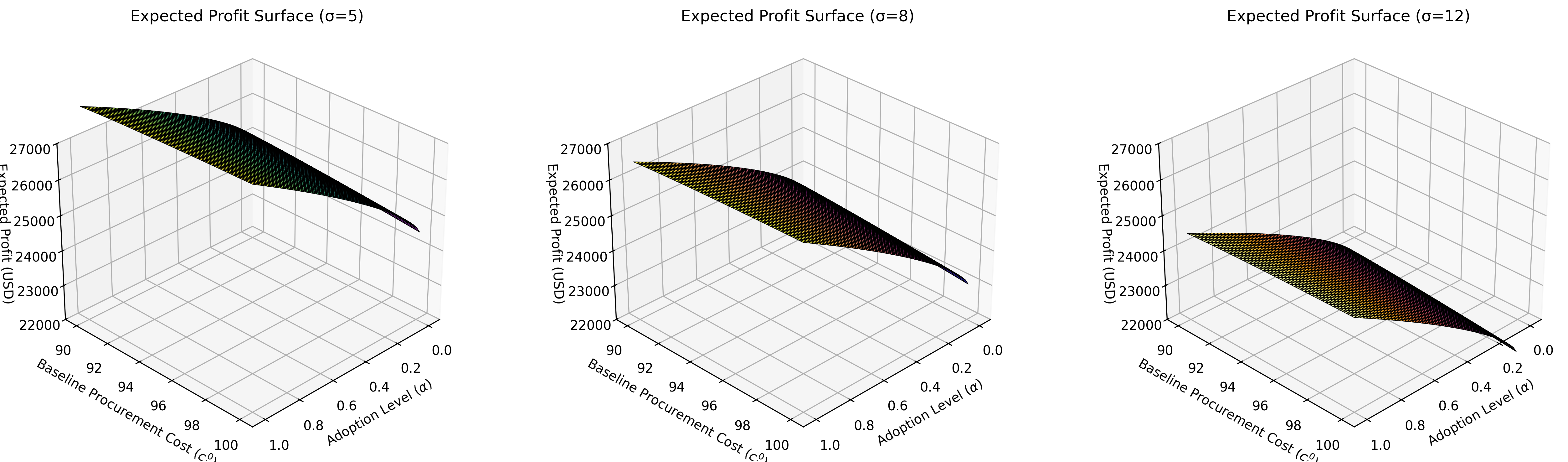}
\caption{Figure A1. Alternative profit surface illustrating expected profit as a function of smart contract adoption level ($\alpha$) and baseline procurement cost ($c_i^0$). The results confirm that the interior optimum persists under different cost configurations.}
\label{fig:alternative_profit_surface}
\end{figure}

This figure demonstrates that the qualitative shape of the profit landscape remains consistent across plausible parameter variations. Specifically, the interior maximum is preserved even as the baseline procurement cost ($c_i^0$) varies, underscoring the robustness of the optimal adoption level ($\alpha^*$) to cost perturbations. This reinforces the managerial implication that adopting a fully digital strategy ($\alpha=1$) is not always optimal and that calibrated adoption can achieve higher profitability.

Table~\ref{tab:profit_sigma_comparison} provides a numeric comparison of expected profit across different levels of demand variability ($\sigma$) at a representative adoption level and procurement cost.

\begin{table}[H]
\centering
\caption{Expected Profit at Adoption Level $\alpha=0.5$ and Baseline Procurement Cost $c_i^0=95$. \textbf{Higher demand variability ($\sigma$) substantially reduces expected profit.}}
\label{tab:profit_sigma_comparison}
\begin{tabular}{r r}
\toprule
Demand Variability ($\sigma$) & Expected Profit (USD) \\
\midrule
5 & 25,300 \\
8 & 23,800 \\
12 & 22,300 \\
\bottomrule
\end{tabular}
\end{table}

\begin{table}[H]
\centering
\footnotesize
\caption{Fill Rate Summary Statistics across Smart Contract Adoption Levels}
\label{tab:fill_rate_summary}
\begin{tabular}{r r r r r r r r r r r}
\toprule

\shortstack{Adoption\\Percentile} & 
Mean & 
Std Dev & 
CV & 
\shortstack{10th\\Percentile} & 
\shortstack{25th\\Percentile} & 
Median & 
\shortstack{75th\\Percentile} & 
\shortstack{90th\\Percentile} & 
\shortstack{P(Fill\\$\ge$0.9)} & 
\shortstack{CVaR\\(10\%)} \\
\midrule
0.0 & 0.801 & 0.049 & 0.061 & 0.738 & 0.768 & 0.801 & 0.832 & 0.865 & 2.4\,\% & 0.717 \\
0.25 & 0.829 & 0.050 & 0.060 & 0.765 & 0.795 & 0.828 & 0.861 & 0.891 & 8.0\,\% & 0.740 \\
0.5 & 0.850 & 0.049 & 0.058 & 0.788 & 0.818 & 0.850 & 0.883 & 0.912 & 14.6\,\% & 0.765 \\
0.75 & 0.874 & 0.051 & 0.058 & 0.808 & 0.838 & 0.875 & 0.908 & 0.940 & 30.1\,\% & 0.784 \\
1.0 & 0.897 & 0.049 & 0.054 & 0.832 & 0.866 & 0.899 & 0.932 & 0.959 & 49.3\,\% & 0.806 \\
\bottomrule
\end{tabular}
\end{table}

Table~\ref{tab:fill_rate_summary} summarizes the simulation-based fill rate statistics across different levels of smart contract adoption. As adoption intensity ($\alpha$) increases, the mean fill rate improves from approximately 0.801 to 0.897, while the coefficient of variation (CV) decreases from 0.061 to 0.054, indicating greater consistency in performance. The 90th percentile fill rate reaches as high as 0.959 under full adoption, compared to 0.865 with no adoption.

Notably, the probability of achieving a fill rate above 90\,\% increases substantially from 2.4\,\% at $\alpha=0$ to nearly 49\,\% at $\alpha=1$, highlighting the operational benefits of digital contracting in reducing service level risk. In contrast, the Conditional Value-at-Risk (CVaR) at the 10\,\% level improves from 0.717 to 0.806, demonstrating that even in the worst decile of demand realizations, higher adoption levels provide a more reliable service outcome. These results support the strategic value of smart contract adoption not only for average performance improvements but also for mitigating downside risk and enhancing supply chain resilience, thereby contributing to more sustainable procurement practices.

\subsection{Fill Rate Distributions}

\begin{figure}[H]
\centering
\includegraphics[width=0.75\textwidth]{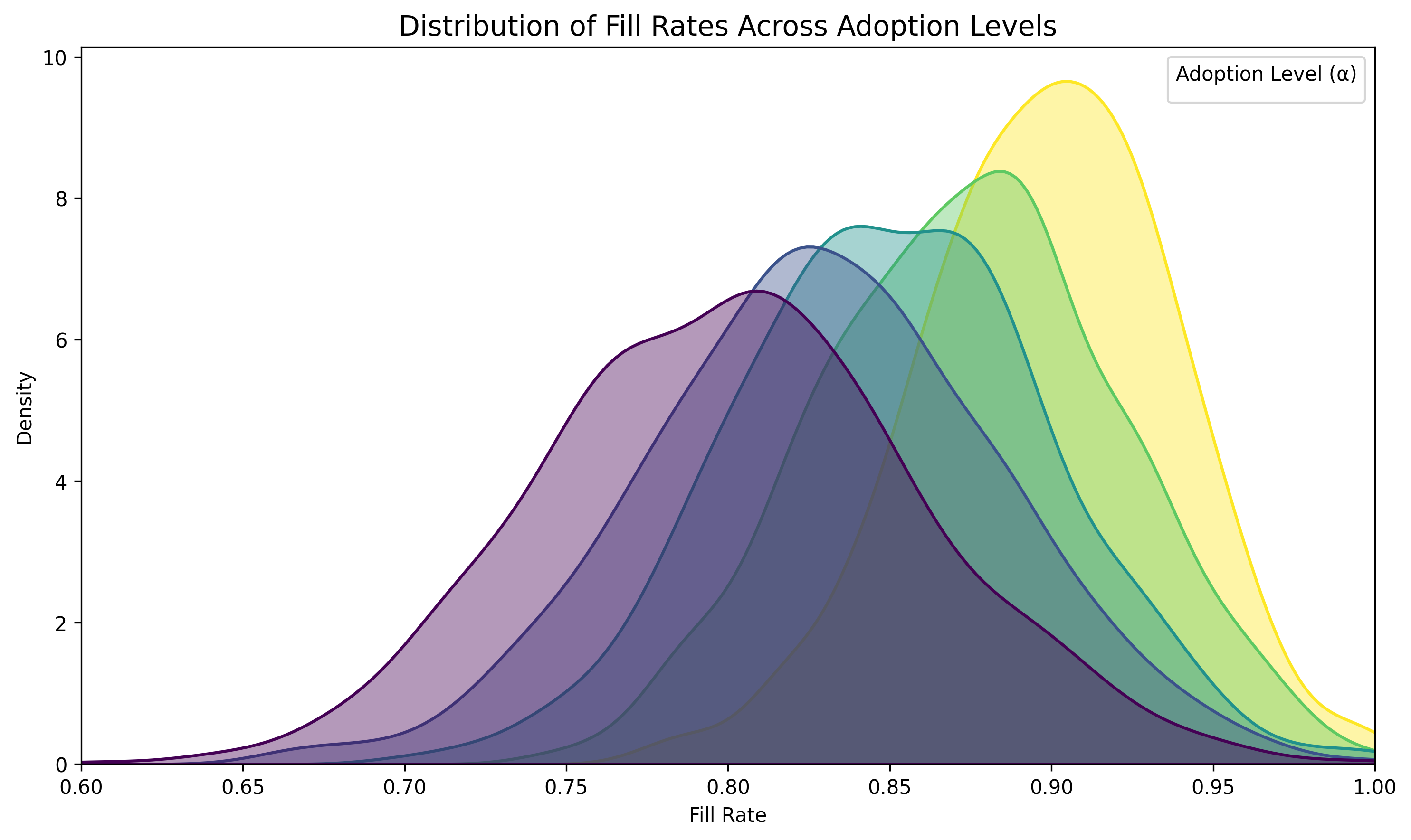}
\caption{Distribution of achieved fill rates across demand replications under different smart contract adoption levels ($\alpha$). Higher adoption improves the average service level while reducing variability, indicating more consistent performance.}
\label{fig:fill_rate_distributions}
\end{figure}

These supplementary results reinforce the robustness of the model and support the managerial implications discussed in the main text. In particular, they demonstrate that smart contract adoption not only increases the mean fill rate---from approximately 0.80 under no adoption to nearly 0.90 at full adoption---but also significantly reduces the dispersion of service performance. For example, the standard deviation of fill rates declines from about 0.05 to 0.03 as adoption increases, indicating greater consistency across replications. This improvement is critical for supply chains prioritizing reliability, resilience, and customer satisfaction. Moreover, by reducing the likelihood of low service levels, smart contracts contribute to more predictable operations and support long-term sustainability objectives.

\end{document}